\documentclass[a4paper,UKenglish, cleveref, autoref, thm-restate, numberwithinsect]{lipics-v2021}

\hideLIPIcs  


\usepackage{etoolbox}   
\patchcmd{\thebibliography}{#1}{A99}{}{}   
\bibliographystyle{alphaurl}

\title{The Expiration Streaming Model: Diameter, $k$-Center, Counting, Sampling, and Friends}

\titlerunning{The Expiration Streaming Model}

\author{Lotte Blank}{University of Bonn, Germany.}{lblank@uni-bonn.de}{https://orcid.org/0000-0002-6410-8323}{Funded by the Deutsche Forschungsgemeinschaft (DFG, German Research Foundation) – 459420781 (FOR AlgoForGe).}
\author{Sergio Cabello}{University of Ljubljana, Slovenia \and Institute of Mathematics, Physics and Mechanics, Ljubljana, Slovenia}{sergio.cabello@fmf.uni-lj.si}{https://orcid.org/0000-0002-3183-4126}{Funded in part by the Slovenian Research and Innovation Agency (P1-0297, N1-0218, N1-0285, J1-70045). Funded in part by the European Union (ERC, KARST, project number 101071836). Views and opinions expressed are however those of the authors only and do not necessarily reflect those of the European Union or the European Research Council. Neither the European Union nor the granting authority can be held responsible for them.}
\author{{Mohammad Taghi} Hajiaghayi}{University of Maryland, College Park, MD, USA}{hajiagha@cs.umd.edu}{https://orcid.org/0000-0003-4842-0533}{The work is partially supported by DARPA expMath, ONR MURI 2024 award on Algorithms, Learning, and Game Theory, Army-Research Laboratory (ARL) grant W911NF2410052, NSF AF:Small grants 2218678, 2114269, 2347322.}
\author{Robert Krauthgamer}{The Harry Weinrebe Professorial Chair of Computer Science,\and Weizmann Institute of Science, Rehovot, Israel}{robert.krauthgamer@weizmann.ac.il}{https://orcid.org/0009-0003-8154-3735}{Work partially supported by the Israel Science Foundation grant \#1336/23, by the Israeli Council for Higher Education (CHE) via the Weizmann Data Science Research Center, and by a research grant from the Estate of Harry Schutzman.}
\author{Sepideh Mahabadi}{Microsoft Research, Redmond, WA, USA}{smahabadi@microsoft.com}{https://orcid.org/0000-0001-5004-8991}{}
\author{André Nusser}{Université Côte d'Azur, CNRS, Inria, Sophia Antipolis, France}{andre.nusser@cnrs.fr}{https://orcid.org/0000-0002-6349-869X}{Supported by the French government through the France 2030 investment plan managed by the National Research Agency (ANR), as part of the Initiative of Excellence of Université Côte d’Azur under reference number ANR-15-IDEX-01.}
\author{{Jeff M.} Phillips}{University of Utah, Salt Lake City, UT, USA}{jeffp@cs.utah.edu}{https://orcid.org/0000-0003-1169-2965}{Work supported by funding from NSF 2115677 and 2421782, and Simons Foundation MPS-AI-00010515.}
\author{Jonas Sauer}{Karlsruhe Institute of Technology, Germany.}{jonas.sauer@kit.edu}{https://orcid.org/0000-0002-7196-7468}{Funded by the Deutsche Forschungsgemeinschaft (DFG, German Research Foundation) – 459420781 (FOR AlgoForGe).}

\authorrunning{Blank, Cabello, Hajiaghayi, Krauthgamer, Mahabadi, Nusser, Phillips, Sauer} 

\Copyright{TODO} 

\ccsdesc[500]{Theory of computation~Computational geometry} 
\ccsdesc[500]{Theory of computation~Streaming, sublinear and near linear time algorithms}

\keywords{clustering, diameter, streaming, sampling} 

\category{} 

\relatedversion{} 


\funding{This research was initiated during the Workshop “Massive Data Models and Computational Geometry” held at the University of Bonn in September 2024 and funded by the DFG, German Research Foundation, through EXC 2047 Hausdorff Center for Mathematics and FOR 5361:  KI-FOR Algorithmic Data Analytics for Geodesy (AlgoForGe).}

\acknowledgements{We thank the organizers and participants of the Bonn workshop for the stimulating environment that inspired this research. We also thank the anonymous reviewers for useful references and comments that improve the exposition. }

\nolinenumbers 

\EventEditors{John Q. Open and Joan R. Access}
\EventNoEds{2}
\EventLongTitle{42nd Conference on Very Important Topics (CVIT 2016)}
\EventShortTitle{CVIT 2016}
\EventAcronym{CVIT}
\EventYear{2016}
\EventDate{December 24--27, 2016}
\EventLocation{Little Whinging, United Kingdom}
\EventLogo{}
\SeriesVolume{42}
\ArticleNo{23}


\usepackage{graphicx}
\graphicspath{{./}{../}}

\usepackage{xcolor}
\usepackage{xspace}
\usepackage{csquotes}
\usepackage{amsmath}
\usepackage{amssymb}
\usepackage{url}
\usepackage[ruled, linesnumbered, vlined]{algorithm2e}
\usepackage{appendix}
\usepackage{bbm}
\usepackage{lineno}

\SetKwProg{myproc}{Procedure}{}{}
\SetKwFunction{Insert}{Insert}
\SetKwFunction{AddAttractionPoint}{AddAttractionPoint}
\SetKwFunction{DiscardCoveredPoints}{DiscardCoveredPoints}
\DontPrintSemicolon

\usepackage{pgfplots}

\usepackage{amsthm}





\DeclareMathOperator{\poly}{poly}

\DeclareMathOperator{\diam}{diam}

\DeclareMathOperator{\sgn}{sgn}
\newcommand{\tO}{\tilde{\calO}}

\def\compactify{\itemsep=0pt \topsep=0pt \partopsep=0pt \parsep=0pt}

\newcommand{\RR}{\mathbb{R}}
\newcommand{\X}{\mathcal{X}}
\newcommand{\eps}{\varepsilon}
\newcommand{\etal}{et al.\xspace}

\newcommand{\maxK}{{\ensuremath{K}}}
\newcommand{\parK}{{\ensuremath{k}}}
\newcommand{\Abig}{\ensuremath{\hat{t}}}
\newcommand{\Rcal}{\ensuremath{\mathcal{R}}}
\newcommand{\calM}{\ensuremath{\mathcal{M}}}
\newcommand{\calO}{\ensuremath{\mathcal{O}}}
\newcommand{\E}{\ensuremath{\mathsf{E}}}

\newcommand{\kde}{\ensuremath{\textsc{kde}}}

\begin{document}

\maketitle

\begin{abstract}
An important thread in the study of data-stream algorithms
focuses on settings where stream items are active only for a limited time. 
We introduce a new 
\emph{expiration model}, where each item arrives with its own arbitrary expiration time. 
The special case where items expire in the order that they arrive,
which we call consistent expirations, 
contains the classical sliding-window model
of Datar, Gionis, Indyk, and Motwani [SICOMP 2002]
and its timestamp-based variant of Braverman and Ostrovsky [FOCS 2007].

Our first set of results explores the expiration streaming model
and presents algorithms for several fundamental problems,
including approximate counting, uniform sampling, and weighted sampling
by efficiently tracking active items without explicitly storing them all.
Naturally, these algorithms have many immediate applications, e.g., to range counting.

Our second and main set of results for the expiration model 
designs algorithms for the diameter and $k$-center problems,
where items are points in a metric space. 
Our results significantly extend those known for the special case of sliding-window streams 
by Cohen-Addad, Schwiegelshohn, and Sohler [ICALP 2016],
and obtain a strictly better approximation factor for the diameter
in the important special case of high-dimensional Euclidean metrics.  
We develop new decomposition and coordination techniques along with a geometric dominance framework 
to filter out redundant points based on both temporal and spatial proximity. 
\end{abstract}

\newpage
\setcounter{page}{1}

\section{Introduction}
The sliding-window streaming model is widely used to represent a time-sensitive stream,
i.e., a sequence of data items that arrive over time and are active only for a limited time.
In the classical formulation of this model~\cite{BabcockMR02,DGIM02}, 
the $w$ most recent items, for a parameter $w$, form an \emph{active window},
and all queries are applied to this active window.
Hence, non-active items are also called \emph{expired}.
It is convenient to think of the items as arriving at successive time steps $t=1,2,\ldots$, 
and thus every item is active for exactly $w$ time steps.
Another variant of this model allows blank time steps where no item arrives,
which is essentially like having discrete items in a continuous time horizon,
and in this case the active window's size (number of items) might vary over time.
Most theoretical research has focused on the first variant above,
called sequence-based windows,
but, as explained, most results easily extend to the second variant,
called timestamp-based windows.
For instance, see~\cite{BO07}, which mentions both  but explicitly analyzes only the first variant.

We introduce a significantly richer \emph{expiration model}, where each item arrives with its own expiration time.
We stress that the order in which items expire (i.e., become non-active) is arbitrary.
This generality befits numerous scenarios where items are heterogeneous
in terms of data type, reliability, origin, policy settings, and so forth.
For example, think of credentials in a computer network or graphics objects on a screen.
A notable special case is the \emph{consistent expiration model}, where items expire in the order in which they arrive.
We sometimes emphasize that we consider the general case, 
where expirations need not be consistent, 
by referring to it as the \emph{general expiration model}.
The consistent expiration model contains the classical sliding-window model discussed above, where all items are active for the same duration~$w$.
Many sliding-window algorithms barely depend on $w$ (e.g., their space bound has $\log w$ factors). In fact, some merely use the property that the items expire in the order in which they are inserted but not that exactly $w$ items are active at any point in time, and therefore these algorithms carry over immediately to the consistent expiration model. 
However, extending them to the general expiration model seems considerably more challenging, 
as new ideas seem necessary to handle items with non-consistent expirations.
It is an intriguing question which problems are indeed harder in this new model and how,
i.e., whether they truly require more storage or merely a more sophisticated algorithm. 

Another related model is \emph{turnstile streaming},
where the input stream consists of item insertions and deletions.
The special case where an item can only be deleted after it was inserted is called \emph{strict turnstile} or, especially in geometric and graph settings~\cite{Indyk04, AGM12}, \emph{dynamic streams}.
Although this model bears similarity to our expiration model, it is actually incomparable. 
The crucial difference is that in turnstile streaming,
each deletion triggers the algorithm explicitly at the time of deletion,
whereas in our expiration model the deletions occur implicitly,
because the expiration information is provided only when the item arrives.
Performing the deletions explicitly would require the algorithm
to store the expirations of all active items, which is excessive.
For an in-depth discussion of related models and further related work, we refer to~\Cref{sec:related}.

Let us now formally define our expiration model.
The input stream is a sequence of items $\langle x_1,\dots,x_n\rangle$,
where each item $x$ consists of an actual data point $p(x)$ from some universe $U$
(e.g., a metric space),
an insertion time $S(x)>0$, and an expiration time $E(x)>S(x)$,
where the insertion times must be non-decreasing, i.e., $S(x_1) \leq \ldots \leq S(x_n)$.
For convenience, we assume that the insertion and expiration times are integers,
and we allow items that never expire by having $E(x)=\infty$. 
We sometimes identify each item $x$ with its data point $p=p(x)$,
which slightly abuses notation because data points do not need to be distinct,
and denote its insertion and expiration times by $S(p)$ and $E(p)$, respectively. 
An item $x$ is \emph{active} at time $t$ if $S(x)\leq t < E(x)$,
i.e., from its insertion time up to (but not including) its expiration time.
A query at time $t$ is evaluated only on the set of items active at that time $t$.
For simplicity, we design our algorithms to handle a single query that arrives at an arbitrary time $t>0$ not known in advance to the algorithm.
These algorithms often extend to handle multiple queries.
For deterministic algorithms this is immediate.,
and for randomized algorithms there are standard arguments, such as
probability amplification via independent repetitions. 

When items never expire, i.e., $E(x)=\infty$ for all~$x$, this is precisely the classical model of insertion-only streams.
We say that the stream has \emph{consistent expirations}
if the expiration times are non-decreasing, i.e., $E(x_1)\leq\ldots\leq E(x_n)$.
When $E(x)-S(x)=w$ for all~$x$,
particularly if insertion times are successive, i.e., every $S(x_i)=i$,
then this is precisely the classical sliding-window model.
In fact, it is convenient to focus on the special case where every $S(x_i)=i$,
which, as mentioned earlier, holds without loss of generality if we allow blank time steps where no item arrives.
  
Our space-complexity bounds count machine words (unless mentioned otherwise), where a word can store a data point (e.g., from a metric space),
a time instantiation (e.g., some $S(x)$),
or a counter in the range $[\poly(n)]$,
where throughout we define $[k] = \{1, \ldots, k\}$. 
This convention avoids bit-representation issues,
although in a simplified case where every $S(x_i)=i$
and data points lie in a universe $U=[\poly(n)]$, 
every word has $\Theta(\log n)$ bits.

\subsection{Results}


\subparagraph{Fundamental Problems.}
The first problems we consider in the general expiration model are some fundamental streaming problems that are standard building blocks for solving many other problems, thereby gaining a better understanding of the challenges when designing algorithms in our new model.
We start with \emph{counting}, which formally asks for the number of active $1$'s
in a stream of items from the universe $U=\{0,1\}$.
For exact counting, $\Omega(n)$ bits of space are required,
even in the consistent expiration model;
which follows from the known bound for the sliding-window model \cite{DGIM02},
in contrast with insertion-only streams, where $\calO(\log n)$ bits clearly suffice.%
\footnote{An alternative definition,
which asks to count the total number of active items, 
exhibits strict separation between the sliding-window and expiration models. 
In the sliding-window model (without blank time steps),
the answer is always $w$ and thus $\calO(1)$ space suffices,
whereas with expirations, even consistent ones,
the lower bound of $\Omega(n)$ bits still holds.
}
Due to these lower bounds, we turn to \emph{approximate} counting.
We design two randomized streaming algorithms,
one achieves $\eps$-additive error using $\calO(\eps^{-1})$ space,
and the other $\eps$-relative error using $\tO(\eps^{-1} \log (\eps n))$ space,%
\footnote{Throughout, the notation $\tO(f)$ hides logarithmic factors in $f$. }
by employing powerful tools of insertion-only streams, like quantile sketches.
Moreover, we show that these space bounds are tight.
Approximate counting is a useful primitive when designing other algorithms.
We demonstrate this by designing a Count-Min sketch~\cite{cormode2005improved},
which solves frequency estimation (aka point queries) with $\eps$-additive error,
and thus also $\ell_1$-heavy-hitters, using $\calO(\eps^{-2})$ space.
These results appear in \cref{sec:GE-count}.

Another fundamental problem is to \emph{sample $k$ items} uniformly from the stream.
In insertion-only and sliding-window streams, this can be done via reservoir sampling~\cite{vitter1985random,BabcockMR02}.
We build on this technique to present 
an algorithm for the expiration model that uses $\calO(k \log n)$ space,
and further present extensions to two more challenging tasks:
sampling \emph{without replacement},
and \emph{weighted sampling}, where each item is sampled with probability proportional to its weight.
Sampling tasks are useful primitives, and we indeed use them to design other algorithms,
e.g., for approximate quantiles with $\eps$-additive error using $\calO(\eps^{-2}\log n)$ space.
We also use them for geometric problems, 
such as range counting, logistic regression, and kernel density estimation (KDE);
here, items are points in $\RR^d$, and our algorithms use space $\calO(\eps^{-2}\log n)$,
assuming
for simplicity
that parameters such as $d$ and the VC-dimension are $\calO(1)$.
Moreover, these sampling methods apply to matrix-approximation problems,
where each stream item is a row vector in $\RR^d$,
including rank-$k$ approximation and $\eps$-covariance error.
These results appear in \cref{sec:GE-Sampling}.

Technically, these results are less involved and build heavily on prior work. 
They also leave several questions for further investigation. 
For example, our algorithms often use more space than the analogous ones for insertion-only and turnstile streams, 
and thus may possibly be improved, e.g., better dependence on $\eps$.
Note that for frequency estimation in sliding-window streams, near-optimal bounds are known \cite{BGLWZ18}.
Furthermore, our techniques do not yield Count-Sketch-type bounds \cite{charikar2004finding}
for $\ell_2$-point queries 
and $\ell_2$-heavy-hitters, 
which are known for sliding-window streams \cite{BGLWZ18, FSW25}.  
We are also not aware of any strict separation (in space complexity)
between consistent and general expirations.

\subparagraph*{Clustering Problems.}
Our main results (which are technically more challenging)
are for the diameter and $k$-center problems in a general metric space $\calM$,
where $d_\calM$ denotes its distance function
and the subscript may be omitted when clear from the context.
In the streaming setting,
the metric $\calM$ is fixed in advance 
and each item $x$ contains a data point $p(x)\in \calM$,
and recall that we may identify $p(x)$ with $x$ and write $x\in\calM$.
To avoid precision issues, we assume that  
$d_\calM(x,y) \in [1,\Delta]$ for all distinct $x,y\in\calM$.%
\footnote{Our results hold even 
if we allow $d_\calM(x,y)=0$, i.e., $\calM$ is a pseudometric.
In fact, the same point may arrive multiple times, possibly with different expirations.
}

In the \emph{diameter problem}, the goal is to compute
$\diam(X) := \max\{ d(x,y)\mid x,y\in X \}$
for the set $X\subset \calM$ of active items.
We devise an algorithm for this problem in the expiration model;
it significantly extends the previously known algorithm,
that works in the more restricted sliding-window model~\cite{Cohen-AddadSS16},
while achieving the same approximation factor and space complexity. 
We prove the following in \cref{sec:diameter}.
\begin{restatable}{theorem}{ThmDiameterGenExp}
\label{thm:diameterGenExp_3apx}
There is a deterministic expiration-streaming algorithm  
that stores $\calO((1/\eps)\log \Delta)$ words 
and maintains a $(3+\eps)$-approximation of the diameter in a general metric $\calM$.
\end{restatable}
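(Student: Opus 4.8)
The plan is to reduce the problem to a logarithmic family of fixed-scale decision gadgets and then combine their outputs. First I would discretize the range of possible diameters: fix $\eps_0 = \Theta(\eps)$ and consider the geometric grid $\Gamma = \{(1+\eps_0)^j\}$ restricted to $[\Theta(1),\Delta]$ (so that it contains a value below $1/3$ and a value at least $\Delta$); this has $\calO((1/\eps)\log\Delta)$ elements. For each $\gamma\in\Gamma$ I would run in parallel a gadget $\mathcal{G}_\gamma$ that uses $\calO(1)$ words and, at any query time $t$, either \emph{fires} — exhibiting two items that are active at $t$ and at distance more than $\gamma$, thereby certifying $\diam(X_t) > \gamma$ — or stays silent, with the guarantee that it fires whenever $\diam(X_t) > 3(1+\eps_0)\gamma$, where $X_t$ denotes the set of active items. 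The query answer is $\tilde{D} := \max\{\gamma\in\Gamma : \mathcal{G}_\gamma \text{ fires at } t\}$ (and the trivial bound when nothing fires, which by the choice of grid can only happen when $|X_t|\le 1$). The total space is $\calO((1/\eps)\log\Delta)$.

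Next, the gadget. I would adapt the attraction-point framework (the same mechanism underlying the \Insert/\AddAttractionPoint/\DeleteAttractionPoint/\DeleteCoveredPoints routines used for $k$-center), specialized and analyzed directly for the diameter so as to reach factor $3$ rather than the factor $4$ one would get by estimating the $1$-center radius and doubling. Concretely, $\mathcal{G}_\gamma$ maintains a constant number of \emph{attraction points}, pairwise more than $c\gamma$ apart for a suitable absolute constant $c$, each decorated with its own expiration time and a \emph{freshness stamp}: the largest expiration time among all items ever \emph{attracted} to it (an arriving item $x$ is attracted to an attraction point $a$ when $d(x,a)$ is at most a suitable fraction of $\gamma$), together with a single attracted item realizing that stamp. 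On an arrival, $x$ either refreshes the stamp and witness of an attraction point that captures it, or opens a new attraction point; the number of attraction points is then pruned back to two by a \emph{dominance} rule that retains the two longest-surviving clusters; and when an attraction point's own item expires, its freshness witness is \emph{promoted} in its place (or the point is dropped if that witness has also expired). The gadget fires at $t$ when it holds two attraction points whose freshness stamps both exceed $t$, and reports the corresponding freshness witnesses; these are active and, by the calibration of attraction radius versus separation, more than $\gamma$ apart. Soundness — and hence $\tilde{D} < \diam(X_t)$ — is immediate from this.

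The heart of the proof is completeness: if $\diam(X_t) > 3(1+\eps_0)\gamma$ then $\mathcal{G}_\gamma$ fires. I would establish the invariant that every item active at time $t$ is \emph{represented} by some attraction point currently held whose freshness stamp exceeds $t$, at spatial displacement $\calO(\gamma)$ with the implicit constants chosen so that a single represented region has diameter less than $3(1+\eps_0)\gamma$. Then, were only one such attraction point present, all of $X_t$ would lie in that region, contradicting the hypothesis; so at least two are present, they are pairwise far by construction, their freshness witnesses are active and more than $\gamma$ apart, and the gadget fires. Combining soundness, completeness, and the granularity of $\Gamma$ gives $\diam(X_t)/(3+\eps) \le \tilde{D} < \diam(X_t)$ after rescaling $\eps_0$, and every step is deterministic.

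The hard part — and the precise reason general expirations are harder than the sliding-window special case, where expirations are FIFO — is keeping the representation invariant true under an arbitrary expiration order. Two operations threaten it. First, promotion on expiry hands a cluster off to one of its attracted items, so I must choose the freshness witness to dominate the whole cluster both spatially and temporally, and then argue that the $\calO(\gamma)$ slack this introduces does not accumulate along a chain of promotions — each active item must be charged for slack at most once. Second, pruning the attraction-point set down to two must never \emph{orphan} an item that is still active at some later query time; a naive ``discard the oldest cluster'' rule fails here, because in the non-FIFO setting an old cluster can still contain recent, long-lived items, so the pruning must instead be governed by a dominance relation in which a retained attraction point provably absorbs every future obligation of the discarded one. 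Verifying that two attraction points always suffice — both for the $\calO(1)$-per-scale space bound and, via the invariant, for firing whenever the diameter is large — is the crux of the whole argument.
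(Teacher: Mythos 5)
You never argue the crux step, which you yourself flag: that a constant-size (two-attraction-point) gadget per scale $\gamma$ survives an arbitrary expiration order. That is not a routine omission; it is precisely where the model's difficulty lives, and the cascading drift you identify is real. When an attraction point expires and is promoted to its freshness witness, every still-active item previously attracted to it moves up to one attraction radius farther from the surviving center; with out-of-order expirations, an early long-lived item can remain active through unboundedly many promotion and pruning events, so the assertion that ``each active item is charged for slack at most once'' has no evident basis. Already a chain of items at spacing a small fraction of $\gamma$, with adversarial expirations, defeats the naive two-point rule: after two points are pruned in, the surviving pair need not represent, within $\calO(\gamma)$, the active items at a later query time. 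Verifying the invariant you state is the entire theorem, and the sketch leaves it open.

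The paper circumvents this issue rather than patching it. It singles out the items not time-dominated by any other item (the \emph{long} items), observes that these form a consistent-expiration (FIFO) substream regardless of geometry, and keeps one geometrically-pruned list $q_1,\dots,q_k$ of $\calO((1/\eps)\log\Delta)$ of them, pruned only when an item is sandwiched between two stored items of similar radius; the list is therefore deep enough that when $q_1$ expires the already-stored $q_2$ takes over with zero spatial slack and no promotion mechanism at all. Each stored long item $q_i$ carries a radius $r_i$ (maximum distance to later-inserted, non-dominated items), and time-dominated items are charged to stored long items only through a per-scale array $A[j]$ recording the maximum expiration time at distance scale $(1+\eps/3)^j$. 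The query returns $\max\bigl(\{r_2\} \cup \{(1+\eps/3)^j : A[j] > T\}\bigr)$, and the $(3+\eps)$ bound follows from a short triangle-inequality case split using only $q_1$, $q_2$, $r_2$ and $A$, backed by a one-paragraph lemma (a direct consequence of the pruning rule) that any non-dominated item inserted between consecutive stored $q_{i-1}$ and $q_i$ lies within $(1+\eps/3) r_i$ of $q_{i-1}$. In short, the paper stores the temporal backbone explicitly as a multi-scale list so it never has to regenerate an attraction point from witnesses, whereas your per-scale gadget must regenerate it on the fly and thus needs a concrete drift-control argument that the proposal does not supply.
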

Our approximation factor is almost tight, as even in the sliding-window model,
achieving a $(3-\eps)$-approximation for the diameter for any fixed $\eps>0$ requires $\Omega(\sqrt[3]{n})$ space~\cite{Cohen-AddadSS16}.%
\footnote{It is easy to see that $\calO(1)$-approximation requires space complexity $\Omega(\log\Delta)$ already in sliding-window streams, e.g., consider the one-dimensional input where the $i$-th point is either $2^i$ or $0$~\cite{FeigenbaumKZ04}.}

We further improve the approximation factor to $1+\sqrt{3}+\epsilon \approx 2.73$ when the metric space~$\calM$ is Euclidean, regardless of the dimension 
(we only require that each point can be stored in a machine word).
Previously, an approximation factor below $3$ was not known for the Euclidean case,  
even in sliding-window streams.
\begin{restatable}{theorem}{ThmEuclideanGenExpiration}
\label{thm:EuclideanGenExpiration}
There is a deterministic expiration-streaming algorithm 
that stores $\calO((1/\eps)\log \Delta)$ words 
and maintains a $(1+\sqrt{3}+\eps)$-approximation of the diameter in Euclidean space. 
\end{restatable}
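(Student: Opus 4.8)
The plan is to build on the algorithm behind \cref{thm:diameterGenExp_3apx}. At a high level, that algorithm maintains, for each of the $\calO((1/\eps)\log\Delta)$ geometric scales $r=(1+\eps)^i\in[1,\Delta]$, only a constant number of active ``representative'' points together with their expiration times, and at query time it picks a \emph{critical scale} $r^\star$: at scale $r^\star$ two stored active points lie at distance $>r^\star$, which already certifies $\diam(X)>r^\star$, while at the next scale $r'=(1+\eps)r^\star$ the stored representatives witness that the active set $X$ is ``almost covered'' by very few of them, so that a short chain of triangle inequalities gives $\diam(X)\le 3(1+\eps)\,r^\star$. The only place this analysis is wasteful on Euclidean inputs is exactly this last step: the chain $d(a,b)\le d(a,p_a)+d(p_a,p_b)+d(p_b,b)$ is tight only when these three segments are collinear, a configuration that additional stored information should rule out. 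So the whole task reduces to replacing the metric upper bound at the critical scale by a genuinely Euclidean one, without disturbing the space bound.

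Concretely, I would (i) keep the temporal/dominance bookkeeping of the general-metric algorithm essentially unchanged, so the space stays $\calO((1/\eps)\log\Delta)$; (ii) at each scale store a slightly larger but still $\calO(1)$-per-scale set of representatives, engineered so that when the ``far pair'' test \emph{fails} at a scale $r'$, the stored active representatives confine $X$ to the intersection of two balls of radius $\approx r'$ whose centres are themselves two active representatives at distance $\Omega(r')$ apart (a ``lens''); and (iii) bound the Euclidean diameter of that lens. An elementary computation gives that a lens formed by two radius-$R$ balls with centres at distance $\delta$ has diameter $\max\!\bigl(2R-\delta,\ 2\sqrt{R^{2}-\delta^{2}/4}\bigr)$, which, once the algorithm guarantees $\delta\ge cR$ for the right constant $c$, is at most $(1+\sqrt3+O(\eps))$ times the certified lower bound $\delta$ — whereas in a general metric the same configuration only yields the factor $3$. (An alternative incarnation of step (iii) uses Jung's theorem instead of the lens: if the stored representatives have small diameter and $\Omega(r')$-cover $X$, then $X$ lies in a ball of radius roughly the circumradius plus $r'$, which is again below the metric bound; either route is dimension-free, matching the theorem's hypothesis.) Plugging the improved per-scale guarantee into the unchanged scale-selection argument of \cref{thm:diameterGenExp_3apx} then yields a $(1+\sqrt3+\eps)$-approximation, and the estimate is still a maximum of distances between stored active points, hence never exceeds $\diam(X)$.

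The main obstacle is step (ii): one must enrich the representative set so that a \emph{failed} far-pair test provably pins $X$ down to the claimed lens (or small-circumradius ball) \emph{under arbitrary expirations}, where the natural covering witnesses may expire before the points they cover and must be replaced on the fly through the geometric-dominance rule — all while spending only $\calO(1)$ words per scale. A secondary but delicate point is calibrating the covering radius and the minimum anchor separation $\delta\ge cR$ at each scale in a coordinated way so that the lens bound works out to exactly $1+\sqrt3$ rather than some larger Euclidean constant; this is where the optimisation producing $\sqrt3$ lives, and checking that these two requirements are simultaneously maintainable is the crux. The remaining pieces — correctness of the dominance-based deletions, the $\calO((1/\eps)\log\Delta)$ space accounting, and the $(1+\eps)$ loss from discretising scales — carry over from the proof of \cref{thm:diameterGenExp_3apx} essentially verbatim.
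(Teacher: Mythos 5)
Your proposal correctly identifies the crucial geometric fact (a lens formed by two equal-radius balls at center-to-center distance $\delta$ has diameter $\max(2R-\delta,\ 2\sqrt{R^2-\delta^2/4})$, which hits $\sqrt{3}R$ when $\delta=R$) and correctly locates the improvement in the three-hop triangle-inequality step of the general-metric proof. This is indeed the flavor of the paper's argument, and the paper does store an extra witness per long item (a point $\widetilde{q}_i$ set when long items between $q_{i-1}$ and $q_i$ are pruned) whose role is roughly what you envision.

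However there is a genuine gap in the geometry that the calibration ``$\delta\geq cR$'' glosses over. In the actual algorithm, the two lens anchors (the stored long item $q_2$ and the extra witness $\widetilde{q}_2$) are at distance $\tilde{r}=d(q_2,\widetilde{q}_2)$, while the certified lower bound $r$ on the diameter can be much larger than $\tilde{r}$ — the algorithm does \emph{not} guarantee any fixed ratio $\tilde{r}\geq cr$. So placing both the expired endpoint and the active endpoint in a single lens of radius $\approx r$ around $q_2$ and $\widetilde{q}_2$ only gives $\max(2r-\tilde{r},\,2\sqrt{r^2-\tilde{r}^2/4})\to 2r$ as $\tilde{r}\to 0$, i.e., no improvement over the general-metric bound. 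The paper escapes this by an asymmetric argument: the expired anchor $q_1$ is shown to lie in a \emph{small} lens of radius $(1+\eps/3)\tilde{r}$ around $q_2,\widetilde{q}_2$ (Lemma~\ref{lem:diameter-consec-items-euclidean}), while the far active point $p'$ is only known to lie in the large ball of radius $\approx r$; the bound $d(q_1,p')\leq(\sqrt{3}+O(\eps))r$ is then recovered by a further case split along $\overline{q_2p'}$ together with an angle lemma (Lemma~\ref{lem:angle5/6}, bounding $d(s,p')\leq\sqrt{3}\,d(c,p')$ when the apex angle is $\geq 5\pi/6$). This angle argument is the piece your plan is missing: the plain lens diameter formula alone does not yield $1+\sqrt{3}$ because the two lens parameters live at different scales. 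The Jung's-theorem alternative you sketch runs into the same obstacle, since it would again need a guaranteed lower bound on the anchor separation relative to $r$.

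You do flag step (ii) — maintaining the lens witnesses under arbitrary expirations with $\calO(1)$ overhead — as the crux, and you are right to. The paper's resolution (record the item that triggers pruning as $\widetilde{q}_i$ and fold it into the radius bookkeeping) is a concrete mechanism your outline does not supply, and the geometry you do supply would not close the argument even given perfect witnesses.
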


Finally, we turn to the \emph{$k$-center problem}, where the goal is to compute
\[
  OPT_k := \min_{C\subset\calM, |C|=k}\ \max_{x\in X} d(x,C), 
\]
where $d(x,C) := \min_{c\in C} d(x,c)$ is the distance from a point $x$ to its closest point in a set $C$. 
We present an algorithm for the expiration model that achieves a $(6k+2)(1+\eps)$-approximation using $\calO(\eps^{-1} k^2 \log\Delta)$ words of space. 
In comparison, previous work achieved 
$(6+\eps)$-approximation using $\calO(\eps^{-1} k \log\Delta)$ words
in the significantly more restricted sliding-window model \cite{Cohen-AddadSS16}. 
We prove the following in \cref{sec:kcenter}. 

\begin{restatable}{theorem}{ThmKcenter}
\label{thm:kcenter}
There is a deterministic expiration-streaming algorithm 
that stores $\calO((\maxK^2/\eps)\log \Delta)$ words 
and maintains a $(6\parK+2)(1+\eps)$-approximate solution for $\parK$-center, for every $\parK\leq \maxK$, in a general metric $\calM$. 
\end{restatable}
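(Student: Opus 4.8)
The plan is to combine a standard doubling-guess over the scale of $OPT_\parK$ with a temporal-decomposition idea that is already implicit in the diameter algorithm, and then layer on top of it the well-known offline greedy $2$-approximation for $\parK$-center. First I would observe that a query at time $t$ only sees the active set $X = X_t$, so it suffices to maintain, for each guess $r$ of the optimum cost (ranging over powers of $(1+\eps)$ in $[1,\Delta]$, giving $\calO(\eps^{-1}\log\Delta)$ guesses), a small "coreset-like" summary that lets us either certify $OPT_\parK \le O(\parK)\cdot r$ or rule out $r$. For a fixed guess $r$, the classical approach is to greedily maintain a set of points that are pairwise at distance $> 2r$: if at any time more than $\parK$ such points are simultaneously active, then $OPT_\parK > r$ (any $\parK$ centers leave two of these points in one cluster, forcing radius $> r$); conversely if at most $\parK$ such maximal points are active, they form an $O(r)$-net and give a $2r$-cover after one more greedy pass, hence a constant-factor-in-$r$ approximation. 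The subtlety is that in the expiration model we cannot simply keep "the current greedy set," because points expire and we do not learn of an expiration until the item originally arrived.

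The key step — and the main obstacle — is handling expirations without storing the expiration times of all active points. Here I would reuse the geometric-dominance / temporal-filtering framework invoked for the diameter results: for each scale $r$ we keep a bounded number of "representative" points per region, always preferring the one with the \emph{latest} expiration time among near-duplicates (points within distance $\le \eps r$, say), since a later-expiring representative dominates an earlier-expiring one that is spatially close. The difficulty specific to $\parK$-center, as opposed to diameter, is that the greedy net is built sequentially and the choice of which point becomes a "center candidate" depends on the order and on which earlier candidates have already expired; a naive fix blows the count up to $\parK$ per region. The standard remedy, and the reason the bound is $\calO(\parK^2/\eps \cdot \log\Delta)$ rather than $\calO(\parK/\eps \cdot \log\Delta)$, is to keep, for each of the up to $O(\parK)$ "slots" in the greedy net, an independent bounded-size summary (essentially one diameter-style summary per slot), so that when the point currently occupying slot $i$ expires, a successor that was spatially close to it (and thus a legitimate greedy choice at that scale) is already available. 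This multiplicativity by $\parK$ is exactly what I expect to be the crux of the argument, and where the careful coordination between slots must be spelled out.

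With the per-scale, per-slot summaries in hand, the remaining steps are routine. At query time $t$, I would find the smallest guess $r^\ast$ for which the summary certifies that at most $\parK$ pairwise-$(>2r^\ast)$-separated points are active; the preceding guess then witnesses $OPT_\parK > r^\ast/(1+\eps)$, so $r^\ast \le (1+\eps) OPT_\parK$. Running the offline greedy $\parK$-center on the $\le O(\parK)$ retained representatives at scale $r^\ast$ yields $\parK$ centers whose cover radius is bounded by a constant multiple of $r^\ast$ plus the $O(\eps r^\ast)$ slack introduced by the duplicate-merging; tracking the constants through the $>2r$-separation test, the one extra greedy pass, and the $\eps r$ merging slack gives the claimed $(6\parK+2)(1+\eps)$ factor. (The linear-in-$\parK$ approximation, versus $6+\eps$ in the sliding-window case, is the price of not knowing expiration order: without consistency we can only argue that \emph{some} valid greedy net survives, not the canonically smallest one, so the separated set we maintain may be a $\parK$-fold coarser net.) Finally I would verify the update procedure runs in the stated space: $\calO(\eps^{-1}\log\Delta)$ scales, $\calO(\parK)$ slots per scale, and $\calO(\parK)$ words per slot — here I would double-check that each slot's diameter-style substructure needs only $\calO(\parK)$ rather than $\calO(\eps^{-1}\log\Delta)$ words, adjusting the slot count or per-slot size as needed so the product matches $\calO(\parK^2 \eps^{-1}\log\Delta)$.
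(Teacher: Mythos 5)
Your high-level skeleton (geometric guessing over $\gamma$, maintaining pairwise-$(>2\gamma)$-separated points, a certificate when more than $\parK$ of them coexist, and a greedy pass at query time) matches the paper's, but the two steps you yourself identify as the crux are the ones that are missing or wrong. First, the certificate. You assert that if more than $\parK$ separated points are ``simultaneously active'' then $OPT_\parK>\gamma$ — true, but the whole difficulty of the expiration model is \emph{certifying simultaneity} without storing all expiration times. The paper does this by decomposing the stream into $\maxK$ substreams so that within each substream every attraction point is \emph{long} (not dominated), and by maintaining the invariant that the longest-living points of the substreams are nested in time (each dominates the next) and pairwise far. It is exactly this nesting that lets a single rejected point $p$ witness that $\parK+1$ separated points are all alive while $p$ is. Your ``one independent diameter-style summary per slot'' has no analogue of this nesting, and without it the $\parK+1$ retained points need not overlap in time, so the certificate fails. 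Maintaining the nesting is also nontrivial: the paper must actively delete points across substreams when they become geometrically close to later-expiring points in lower substreams, and then promote points downward to restore the ordering — the coordination you gesture at but do not supply.

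Second, your explanation of the $6\parK+2$ factor (``the separated set may be a $\parK$-fold coarser net'') is not the actual mechanism and would not survive a constant-tracking attempt. In the paper the net stays $2\gamma$-separated at all times; the $\Theta(\parK)$ loss comes from a \emph{cascading proxy} argument: a point $p$ covered by a representative $r$ in substream $i$ may later have $r$ deleted because $r$ is covered by a point in a lower substream, which becomes $p$'s new proxy at distance larger by an additive $6\gamma$, and this re-covering can chain through up to $\parK$ substreams (the induction in the paper's Lemma on $R^{\infty}(a)$). Relatedly, the space bound is not something to ``adjust as needed'': the $\calO(\maxK^2)$ per scale comes from $\maxK$ substreams each storing $\calO(\maxK)$ attraction points, representatives, and \emph{orphans}, and bounding the orphan count is precisely the obstruction that breaks the naive adaptation of the sliding-window algorithm to general expirations — your proposal does not mention orphans at all. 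As written, the proposal is a plausible plan whose load-bearing steps are unproved.
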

For $k=1$, we use our diameter algorithm to achieve an improved approximation factor of $4+\eps$ in a general metric (\cref{thm:generalMetricMEB}) and $1+\sqrt{3}+\eps$ in Euclidean space (\cref{thm:EucideanMEB}).

All our algorithms report not only an objective value but also a feasible solution,
i.e., a pair of points (that is approximately the farthest)
or a set of $k$ center points.
\cref{tab:diameter} lists the known approximation factors
for diameter and $1$-center in high-dimensional Euclidean space
under different streaming models, including our new expiration model, 
and two more restricted ones of sliding-window and insertion-only streams.
(We restrict the table to $k=1$ and the Euclidean case to minimize clutter.) 

\begin{table}
\begin{center}
\label{tab:diameter}
\begin{tabular}{|lccl|}
\hline
\multicolumn{1}{|l}{model} &  diameter &  $1$-center (MEB) & references \& comments\\
\hline
\hline
\multicolumn{2}{|l}{\textbf{insertion only:}}  &  & \\
\multicolumn{1}{|l}{} &  $2$& $2$ & folklore; 
            general metrics\\
&  &  $1.5$ & \cite{Zarrabi-ZadehC06} \\
&  $\sqrt{2}+\eps$&  $\tfrac{1+\sqrt{3}}{2}+\eps 
         \approx 1.37$ & \cite{AS15} \\
& & $1.22+\eps$ & \cite{ChanP14} \\
& $\sqrt{2}+\eps$ & $1.22+\eps$ & \cite{HMV25} \\
\hline
\multicolumn{2}{|l}{\textbf{turnstile/dynamic:}} &  & \\
&  $\calO(\sqrt{\log n})$&  & \cite{Indyk03} \\ 
\hline
\multicolumn{2}{|l}{\textbf{sliding-window:}} & &\\
& $3+\eps$ & $4+\eps$ & \cite{Cohen-AddadSS16}; general metrics \\
& & $9.66+\eps$ & \cite{wangLT19} \\
\hline
\multicolumn{2}{|l}{\textbf{general expiration:}} & &  \\
& $1+\sqrt{3}+\eps \approx 2.73$ & $(1+\sqrt{3}+\eps) \approx 2.73$ & \Cref{thm:EuclideanGenExpiration}, \Cref{thm:EucideanMEB} \\
& $3+\eps$ & $4+\eps$ & \Cref{thm:diameterGenExp_3apx}, \Cref{thm:generalMetricMEB}; general metrics\\ 
\hline
\end{tabular}
\end{center}
\caption{Approximation factors of streaming algorithms 
for diameter and for $1$-center (also called Minimum Enclosing Ball) 
in high-dimensional Euclidean space $\RR^d$. 
We list only algorithms whose space complexity is 
$\poly(d\log (n\Delta))$. 
Some algorithms work also in general metric spaces.}
\end{table}

\subsection{Related Work and Models}
\label{sec:related}

The expiration model that we propose is not only a natural generalization 
of the sliding-window and other streaming models,
it also pertains to numerous other topics in computer science, as we outline next.
One example where general (i.e., non-consistent) expirations arise is online monitoring settings. 
Another example is computational economics and networking, 
where certificates or contracts are issued with known lengths/expirations,
and must be managed by computer systems. 
Our results are applicable when one is willing to sacrifice accuracy (bounded approximation) to attain dramatic space improvements.

\subparagraph*{Semi-Online Data Structures.}
In the study of dynamic data structures, the semi-online model allows for insertions and deletions where the time of deletion is provided at the time of insertion.  
Here, unlike the sketches we study, no approximation is generally allowed, 
and the focus is instead on reducing the running time of various operations.  
In particular, Dobkin and Suri~\cite{dobkin1991maintenance} 
employed the method of Bentely and Saxe~\cite{bentley1980decomposable} under a data structure that can be constructed in time $P(n)$ and can handle queries in time $Q(n)$, 
in order to handle insertions and deletions in this semi-online model in $\calO(\log n)$ time while increasing the query time to $\calO((P(n)/n + Q(n)) \log n)$.  
Several linear and near-linear space algorithmic improvements followed,
where the focus is primarily on exact methods and improved update times.
Notably, Chan~\cite{chan2003semi} provided improvements for several problems in computational geometry, including discrete $1$-center in $2$ dimensions with slightly sublinear updates.

\subparagraph*{Persistent Stream Queries.}
The database community studied persistent sketches (for streaming), 
where queries may be restricted to subsets of data in certain time windows~\cite{persistent-treaming-2015}.
Within this setting, 
Shi, Zhao, Peng, Li, and Phillips~\cite{shi2021time}
considered at-the-time persistence (ATTP) and back-in-time persistence (BITP) models,
where the time window of a query must include the first or last time, respectively
(in other words, queries about any prefix or any suffix of the streams).
Notably, the BITP model can be interpreted as a sliding-window query
that specifies the window size $w$ at query time (rather than in advance).
This is closely related to consistent expirations,
especially if all expirations occur after all relevant insertions.
All methods we are aware of for the consistent expiration model should work for this BITP model.  

In this context, Shi \etal~\cite{shi2021time} studied
a variety of problems related to weighted counting and sampling,
where each item $x_i$ is associated with a weight $w_i$ (it could be uniform),
and the desired error bound is an additive $\eps W$
where $W = \sum_i w_i$ is the total weight.
This setting is very useful in standard sketching bounds for frequency estimation, quantiles, approximate range counting, kernel density estimates, and matrix-covariance sketching,
which we study as well.
In particular, they show that a random sample of size $k$
can be maintained in $\calO(k \log n)$ expected space
with $\calO(\log k)$ expected amortized update time.
If the items are selected at random proportionally to their weight
and the weights are in the range $[1,U]$,
then the expected space is $\calO(k\log (nU))$.  
Their methods for BITP use a sampling-in-reverse analysis, which was discovered earlier by Braverman \etal~\cite{braverman2020near} in the context of sliding-window linear-algebra problems, and also used later for sampling Lewis weights in sliding-window streams~\cite{woodruff2025online}. 
Moreover, they show that for mergeable sketches~\cite{agarwal2013mergeable} of size~$s(\eps)$,
a BITP sketch of size $\calO(\eps^{-1} \log n \cdot s(\eps))$ can be maintained.

Our results extend these ideas to the general expiration model, and formalize the proofs in \cref{sec:GE-count} for counting and in \cref{sec:GE-Sampling} for sampling.
Dealing with general expirations requires finding the relevant tools (in prior literature) and additional ideas,
to manage samples that expire in a completely different order than their arrivals. 


\subparagraph*{Sliding-Windows and Smooth Histograms.}
The sliding-window model has been studied extensively,
including for frequency and counting problems,  maintaining aggregate statistics, and for geometric and graph problems. 
An extremely popular technique for designing sliding-window streaming algorithms
is the smooth-histogram framework of Braverman and Ostrovsky~\cite{BO07}.
For monotone functions~$f$ that satisfy a certain smoothness property, they show how to convert an algorithm that estimates $f$ in insertion-only streams
into an algorithm that estimates $f$, using slightly more space, in sliding-window streams.  
This framework has been successfully employed for many different problems, 
from counting and frequency problems to graph problems,
but not for geometric problems, which are often not smooth,
e.g., $k$-median and $k$-means clustering \cite{BLLM16}.
Krauthgamer and Reitblat~\cite{KR22} defined a relaxation of this smoothness property, called almost-smoothness,
which is still sufficient to convert algorithms from insertion-only to sliding-window streams,
albeit with a bigger loss in the approximation factor.
It is not difficult to see that the diameter problem, in a general metric space, 
is $2$-almost-smooth (see \cref{lem:AlmostSmooth}),
and since it admits a folklore $2$-approximation in insertion-only streams,
the conversion of~\cite[Theorem 1.7]{KR22} also implies an $(8+\eps)$-approximation in sliding-window streams.
Although immediate, this bound is worse than the known $(3+\eps)$-approximation for diameter \cite{Cohen-AddadSS16}.
Unfortunately, this entire framework seems inapplicable to general expirations.

\subparagraph*{Streaming Algorithms for Diameter and $k$-Center.}
Clustering problems have been studied extensively in the streaming model. 
The metric $k$-center problem was studied in insertion-only streams,
culminating in a $(2+\eps)$-approximation \cite{CharikarCFM04,McCutchenK08,Guha09},
and further extensions to $k$-center with outliers.
In Euclidean space of high dimension $d$,
i.e., when the space bound is restricted to be polynomial in $d$,
better approximation factors are known, particularly for $k=1$ and for diameter
\cite{Zarrabi-ZadehC06, AS15, ChanP14, JKS24, HMV25},
and for small $k$ \cite{KimA15}; 
see \cref{tab:diameter} for the precise constants.
These results were extended to sliding-window streams in \cite{Cohen-AddadSS16, wangLT19} as mentioned above. 
However, they usually do not extend to dynamic streams, which seem harder. 

In low Euclidean dimension, 
i.e., when allowing a space bound that grows exponentially with $d$, 
several results achieve a $(1+\epsilon)$-approximation for $k$-center
\cite{AgarwalP02,CeccarelloPP19}.
These results often extend to dynamic and sliding-window streams,
and to handle outliers \cite{dBMZ21,BergBM23}. 
Some of the above references prove near-matching lower bounds,
but usually for algorithms that must store input points \cite{Guha09,AS15,BergBM23},
i.e., these are not bit-complexity bounds for general algorithms.

\section{Technical Overview}
\label{sec:tech-over}

A key challenge in the expiration model is that the algorithm 
has to track the active items without explicitly storing them all.
In particular, it must be prepared for a scenario where 
no additional items arrive,
and at some future time it will be asked for an estimate.

\subparagraph*{Counting.}
Perhaps the simplest challenge is to just maintain a count of the active items, which we study in \cref{sec:GE-count}.  
In insertion-only streams, it suffices to maintain a single counter. 
In contrast, in the expiration model, 
the algorithm must provide an answer at all possible future times, 
which by a reduction to INDEXING requires $\Omega(n)$ bits of space.

We observe that approximate counting in the expiration model is equivalent to approximating a $1$-dimensional distribution on the expiration times.  
In particular, guaranteeing additive error~$\eps n$ on the count corresponds to $\eps$-error in the Kolmogorov-Smirnov distance. 
If the expirations are consistent, it can be handled 
by simply recording a check point every $\eps n$ insertions. 
The general-expiration case may seem much more complicated, but fortunately, 
it maps directly to the classic problem of quantiles summary in insertion-only streams: each item arrival in the expiration stream corresponds to 
inserting the expiration time of that item into the quantile summary.
Thus, we can employ both quantile sketch upper bounds and their structural lower bounds.  
For additive error $\eps n$ with a constant probability of failure to answer one query, this uses $\Theta(1/\eps)$ space.
For $\eps$-relative error, it uses $\tilde\Theta((1/\eps) \log(\eps n))$ space.  

\subparagraph*{Sampling.}
Maintaining a random sample, which we study in \cref{sec:GE-Sampling}, 
poses a similar challenge in that a sample is in some sense a subset approximation of the count of the items. 
However, the maintained sample might expire, 
and other items must be stored in advance to replace the expired one.
In the general-expiration case, the set of active items 
changes dynamically, and not in a controlled manner as in consistent expirations.

The main insight is that we can imagine running a reservoir sampler in the reverse order of the expirations -- assuming we know these expiration times all in advance.
For maintaining a single sample, we just keep track of when the reservoir gets updated;
our sketch only needs to maintain items that ever get placed in the reservoir.
The size of this set can be analyzed as a coupon collector problem, and is $\calO(\log n)$ in expectation for a single sample.  

Now we must maintain this sample without knowing the expiration times in advance.  
The trick is to assign each item a random value $u_i \in \mathsf{Unif}[0,1]$ and select the smallest active value as our sample.
Because this randomness is assigned once and does not change later,
we only need to maintain the smallest $u_i$ value among the active items.
Thus, an item with a larger $u_i$ value than another item that expires later does not need to be maintained,
and the remaining items are maintained, say in sorted order by expiration. 
The above description produces a single sample,
and we can just run~$k$ independent copies to get a sample of size~$k$. 

We can then adapt this analysis to sampling proportional to weights, as long as the total weight~$W$ of the stream is bounded and each weight is at least $1$.
By changing to an exponential distribution, namely each $u_i \sim \mathsf{Exp}(w_i)$,
we get that the item with minimal value is chosen proportional to its weight.
This follows from the min-stability property of the exponential distribution,
as previously exploited by Cohen~\cite{Cohen97}.  
This has the same complexity as an expiration stream of~$W$ items with uniform weights.

\subparagraph*{Implications of Counting and Sampling.}
Many statistically motivated summaries essentially require
only access to a counter of the data items or to a random sample from it.
Therefore, being able to maintain a counter or a sample over a stream
has numerous applications, 
and indeed fairly direct implications follow for problems involving quantiles, range counting, classification, regression, kernel density, and even matrix sketching. 

\subsection{Diameter}
The next technical challenge is maintaining the diameter of a point set. Unlike the counting and sampling problems, here items differ not only in their expiration time, but also in their geometric information, and the algorithms must preserve the \emph{geometry} of \emph{all} active items at all times.

\subparagraph*{Simple $\calO(1)$-Approximation in General Metrics.}

Let $p$ and $q$ be items of the stream. We say that $p$ is {\em dominated} by $q$ if $S(q)<S(p)$ and $E(p)\leq E(q)$, i.e., for the entire time that~$p$ is active, $q$ is active as well. An item that is not dominated by any other item is called \emph{long}. See~\Cref{fig:defLongDominated} for illustration.
Note that it is easy to decide with $\calO(1)$ words of space whether an arriving item is dominated or long, by keeping track of the latest expiration time among all the items that have arrived so far.

\begin{figure}
    \centering
    \includegraphics[page=3]{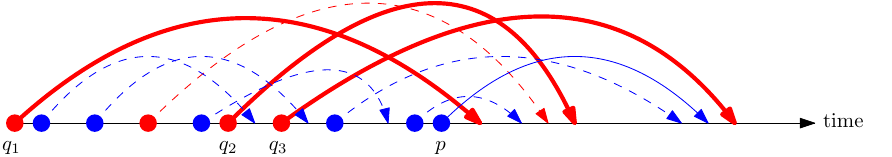}
    \caption{The long items of the stream are $q_1$ and $q_2$. Item $p$ is not long as it is dominated by $q_2$.}
    \label{fig:defLongDominated}
\end{figure}

Consider first the substream consisting of all long items, and notice that its expirations are consistent. It is easy to show that the sliding-window algorithm of~\cite{Cohen-AddadSS16} generalizes to the consistent expiration case, and so we can use it as a subroutine for the long items.
For an item~$q$ that is not long, 
suppose that it is geometrically close to some long item $p$ that dominates it.
Then we can simply ignore $q$ and use $p$ as a proxy for it: geometrically, they are close, and timewise, $p$ is active whenever $q$ is active. 
In the other case, i.e., if $q$ is far from $p$, 
then the pair $p,q$ may be a candidate for the diameter as long as $q$ is active, and thus we need to keep track of such pairs.

More precisely, for a dominated item $q$, 
let us assign one long item that dominates~$q$ as its {\em parent}. 
Again, this is easy to implement, as soon as $q$ arrives, using $\calO(1)$ words of space. 
Now there can be two cases: 
\begin{enumerate}[(a)] \compactify 
    \item Every dominated point is close to its parent (say up to a small constant fraction of the diameter). 
    In this case, the diameter of the long items is a constant-factor approximation to the diameter of all items 
    (by simple triangle-inequality arguments). 
    \item At least one dominated point is far from its parent. 
    To keep track of this scenario, 
    we store an array $A$, where each entry~$A[j]$ maintains a pair of points,
    namely, a dominated point $p$ and its parent $q$, selected as follows:
    from all such pairs where also $d(p,q)\in [(1+\eps)^j, (1+\eps)^{j+1})$,
    select the pair for which the expiration time of $p$ is the latest. 
    The size of this array is only $\calO((1/\eps) \log \Delta)$.
\end{enumerate}
The above reasoning suffices to obtain a constant-factor approximation of the diameter in the general expiration model, 
however it does not match the approximation factor $3+\eps$ that is known for the sliding-window setting.

\subparagraph*{An Improved Algorithm.}
To improve the constant, we replace the black-box subroutine for consistent expirations that is used to handle the long items.
Our redesigned subroutine is conceptually different from the sliding-window algorithm of~\cite{Cohen-AddadSS16} and allows for a tighter integration with the array~$A$, which seems not possible using the algorithm of~\cite{Cohen-AddadSS16}.
If space complexity is not a concern, then long items can be handled by simply having every long point~$p$ track the maximum distance to any point inserted after it and not dominated by it, which we call the {\em radius} of~$p$. 
From this information, we can retrieve the exact diameter of the long items. 
The main insight to achieve the claimed space bound is that if a long point~$p$ is ``sandwiched'' between two points with a similar radius, one inserted before~$p$ and the other after, then we can discard~$p$ while only losing a constant factor in the approximation guarantee. 

In addition, we again store an array~$A$ that captures the distances of dominated items to long items. In contrast to the above, we do not assign a particular parent to each dominated point~$p$.
Rather, when~$p$ arrives, we consider its distance to all stored long items that dominate~$p$. 
A more involved analysis shows that this improves the approximation factor to $3+\eps$, matching the best approximation factor known also for the sliding-window model, while using the same space bound. 

\subparagraph*{Further Euclidean Improvement.}
An additional advantage of our new subroutine for the consistent expiration case is that it admits further improvements in the Euclidean space. 
In general metric space, the algorithm stores a radius~$r$ for every stored long item~$p$ to bound the distance between~$p$ and the set~$S$ of points that are inserted after~$p$ and are not dominated by~$p$. 
In Euclidean space, the algorithm additionally stores for every~$p$ a carefully chosen point~$\widetilde{p}$,
whose distance to points in~$S$ is at most $r$ as well. 
Hence, $S$ lies in the intersection of two balls of radius~$r$, 
instead of only one ball of radius~$r$. 
Since the distance between $p$ and $\widetilde{p}$ is bounded, 
we can further improve the approximation factor for both consistent and general expirations.
In fact, this is strictly better than the best approximation known (and possible) 
for general metrics, even in the sliding-window model. 
See Section \ref{sec:diameter} for details.
\subsection{Approximating $\parK$-Center in Expiration Streams}
We focus on the decision version of $k$-center, where the goal is to decide whether the value of an optimal solution is roughly $\gamma$. We can then instantiate the decision algorithm for $\calO((\log \Delta)/\eps)$ different guesses of $\gamma$ to obtain our approximation.

As with the diameter, one can verify that the sliding-window algorithm of~\cite{Cohen-AddadSS16} in fact works for the consistent expiration model, and thus it provides an $\calO(1)$-approximation of~$k$-center on the set of all long items.
Once again, if a point $q$ that is dominated by $p$ is also geometrically {\em close} to $p$, then there is no need to keep the point~$q$, since any center that is close to~$p$ is also not far away from~$q$.
Thus, our strategy is to divide the stream into~$k$ substreams~$D_1,\dots,D_k$ such that for each substream~$D_i$, the expirations are consistent and therefore a modified instance of the algorithm of~\cite{Cohen-AddadSS16} can be run on it.
The challenge lies in ensuring that~$k$ substreams are sufficient.
To illustrate the argument, we first consider a simplified scenario in which each substream is allowed to use an unlimited amount of space.
Then, we explain how the algorithm can be modified to achieve the desired space bound.

\subparagraph*{A Simplified Scenario.} We define~$D_1$ as the substream containing all long points~$p$, as well as each point~$q$ that is geometrically close to a long point~$p$ in $D_1$ that dominates it. If a point $q$ is not geometrically close to any long point in $D_1$ that dominates it, then $q$ is fed to the second substream~$D_2$. All points passed on to~$D_2$ are then processed in the same manner as for~$D_1$, and the residual points not handled by $D_2$ are again passed to the next layer, etc.
If a point is rejected from the final substream~$D_k$, it is discarded.
If each substream is allowed to use unlimited space, then there is no need to discard any long items and we can show that~$k$ substreams are sufficient.
If a point~$q$ is rejected from all $k$ substreams $D_1,\dots,D_k$, then there is a sequence of points $p_1,\dots,p_k$ such that 
\begin{enumerate}[(i)] \compactify
    \item $p_i$ belongs to the $i$th substream,
    \item these $k$ points (as well as~$q$) are pairwise geometrically far away from each other,
    \item the points are nested timewise, i.e., $p_{i+1}$ is dominated by~$p_i$ and~$q$ is dominated by~$p_k$.
\end{enumerate}
As long as $q$ is not expired, then all points $p_1,\dots,p_k$ are also active, which provides a certificate that any solution for $k$-center must have a large value with respect to $\gamma$. 
Thus, there is no need to keep~$q$, and more generally, to maintain more than~$k$ substreams.

To bound the space within each substream, a naive approach would be to apply the algorithm of~\cite{Cohen-AddadSS16} to each substream independently.
However, because the pruning of long items is not coordinated between the substreams, it becomes impossible to maintain conditions~(ii) and~(iii) simultaneously.
Consider three points~$p_i$, $p_j$, $p_\ell$ with~$i<j<\ell$ such that~$p_i$ is discarded between the insertions of~$p_j$ and~$p_\ell$.
Then the algorithm cannot check whether~$p_\ell$ is geometrically close to~$p_i$, so condition~(ii) may be violated.
If~$p_i$ is replaced with another point~$p'_i$ from~$D_i$, then it cannot be guaranteed that~$p'_i$ dominates~$p_j$, so condition~(iii) may be violated.
Thus, a more coordinated approach is needed.

\subparagraph*{Our Approach.}
Our goal is to maintain a set of points $p_1, \dots, p_k$ that satisfy conditions (i)--(iii), while bounding the space usage of the algorithm within each substream and the total number of substreams.
If we define $p_i$ as the longest-living point in $D_i$, then this set of points satisfies conditions (i) and (iii).
To restore condition~(ii), we discard points from $D_i$ if they are geometrically close to points from substreams $D_j$ with $j < i$ that are inserted later.
This provides the coordination between the different substreams that was lacking in the naive approach, but it creates two new issues.

The first issue arises when our algorithm does not store a point $p$ in substream $D_i$ because it is close to another point $p'_i$ in $D_i$ (in which case we consider $p$ to be \emph{covered} by $p'_i$ in $D_i$).
Later on, we may have to discard $p'_i$ to ensure condition (ii) because it is close to another point $p'_j$ from a substream $D_j$ with $j < i$.
We then consider~$p$ to be covered by~$p'_j$ instead, but this increases the distance between $p$ and the point that covers it by a constant factor. Because $p'_j$ may itself be discarded later on to ensure condition (ii), this effect can cascade up to $k$ times, which increases the approximation guarantee to $\calO(k)$.

The second issue is that the longest-living point from a substream $D_i$ may need to be discarded to restore condition (ii), and this can in turn cause condition (iii) to be violated because the points $p_1,\dots,p_k$ are no longer nested.
We restore the condition by moving the longest-living point among all substreams $D_j$ with~$j > i$ to $D_i$.
Note that this may potentially violate condition (ii) again.
However, we show that a single sweep over all substreams suffices to restore both conditions.
For a more detailed overview of the algorithm, see~\Cref{sec:kcenter}.

\section{Approximating the Diameter in Expiration Streams}
\label{sec:diameter}

This section presents algorithms that approximate the diameter of a point set in the expiration streaming model.
We first present an algorithm that matches the approximation factor known for sliding-window streams.
Afterwards, we modify this algorithm for Euclidean space to achieve a better approximation factor.
For simplicity, we assume in this section that the insertion times of the items are distinct, i.e., $S(p)\neq S(p')$ for all $p\neq p'$. If multiple items arrive simultaneously, we fix an arbitrary order to add them to the stream.

\subsection{$(3+\eps)$-Approximation in General Metrics}
\label{sec:diameterGeneralExpiration_new}
Our algorithm maintains a subset~$q_1,\dots,q_k$ of the long items that have been inserted so far. 
Along with every stored item~$q_i$, we store the radius~$r_i$ of the smallest ball centered at~$q_i$ that contains all elements that were inserted after~$q_i$ and are not dominated by $q_i$, i.e., expire later than $q_i$ (see \Cref{fig:diam_Gen_Exp}).
In addition, the algorithm stores an array~$A$.
For each~$j\in[\lceil \log_{1+\eps/3} \Delta\rceil]$, the entry~$A[j]$ stores the maximum expiration time~$E(p)$ among all pairs of points $q$ and $p$ such that (1) $q$ is long and stored, (2) $p$ is dominated by $q$, and (3) $d(q, p)\in [(1+\eps/3)^j,(1+\eps/3)^{j+1})$. 
Before the stream starts, we insert a dummy item~$p$ that 
expires immediately after it is inserted. Then, initially we store $q_1=p$, $r_1=0$, $k=1$, and $A[j]=0$ for all~$j$. 

\subparagraph*{Update Procedure.} Each time an item $(p, S, E)$ is inserted, we use \Cref{alg:diameter_general_exp} to update the data structure. If the item is long, we store it.
Line~\ref{li:updateRadius} ensures that for each stored (long) item~$q_i$, the radius~$r_i$ is correct: if~$q_i$ does not dominate the new item~$p$ and the distance from~$q_i$ to~$p$ is larger than~$r_i$, we update the radius to be this distance. Otherwise, if~$q_i$ dominates~$p$, then for the value of~$j$ such that their distance is in $[(1+\eps/3)^j, (1+\eps/3)^{j+1})$, the algorithm updates $A[j]$ to $\max\{A[j],E(p)\}$ (lines \ref{li:updateArray1}--\ref{li:updateArray2}).

To achieve our bound on the number~$k$ of the stored items, we further 
discard all stored items that are inserted between two stored items with a similar radius. 
In particular, if $q_i$ and $q_j$ with $i\leq j$ are two stored points such that $(1+\eps/3)\cdot r_j>r_i$, then we discard all items $q_t$ with $t\in \{i+1, \dots, j-1\}$. Finally, at the end of the update procedure of adding an item, we reindex the stored items starting from $1$, keeping their order.

\subparagraph*{Handling Expirations.} As long as at least two items are stored, the algorithm maintains the invariant that~$E(q_1)\leq T<E(q_2)$ for the current time step~$T$.
In the time step~$T=E(q_2)$, the algorithm discards~$q_1$ from the data structure along with its radius $r_1$.
The remaining items are reindexed starting from~$1$, keeping their order. As all stored items are long, at each time step at most one stored item expires. 

\begin{figure}
    \centering
    \includegraphics[width=\linewidth]{Figures/GeneralExpirationModel.pdf}
    \caption{Long items drawn in red. Item $p$ is dominated by $q_3$. At time $S(p)$, when $p$ is inserted, items $q_1, q_2, q_3$ are all stored. 
    The algorithm then updates the radii $r_1$ and $r_2$ corresponding to $q_1$ and $q_2$, respectively, 
    and also the entry in $A$ corresponding to the distance $d(p, q_3)$.}
    \label{fig:diam_Gen_Exp}
\end{figure}

\subparagraph*{Answering Queries.} 
At time~$T$, we report $r=\max\{r_2\}\cup \{(1+\eps/3)^j\mid A[j]> T\}$ as an approximation to the diameter if $q_2$ exists. Otherwise, there are no active points in the stream at time~$T$.
We show that this is a $(3+\eps)$-approximation.

\begin{algorithm}
    \caption{Diameter update procedure for inserted item $(p, S, E)$.}
    \label{alg:diameter_general_exp}
    \If{$E>E(q_{k})$}
    {
        $q_{k+1}\gets p$, $r_{k+1}\gets 0$ \tcp*{$(p,S,E)$ is long}\label{li:onlyLongItemsStored}
    }
    \For{$i\in[k]$}
    {
        \If{$E>E(q_{i})$}
        {
            $r_i\gets \max\{r_i,d(q_i, p)\}$ \label{li:updateRadius}
        }
        \Else{
            let $j$ be such that $d(p, q_i)\in [(1+\eps/3)^j , (1+\eps/3)^{j+1})$\;\label{li:updateArray1}
            $A[j]\gets \max\{A[j], E\}$\; \label{li:updateArray2}
        }
    }
    $i\gets 1$\;
    
    \While{$i\leq k-1$} 
    {
        find the largest $j\geq i$ such that $(1+\eps/3)\cdot r_j>{r_i}$\label{li:diameter_compare_Radii}
        
        discard $q_t$ and $r_t$ for all $t\in\{i+1, \dotso, j-1\}$
        
        $i\gets\max\{j, i+1\}$ \label{li:diameter_end_while}
    }
    $k\gets$ number of remaining elements
    
    reindex the values $q_i$ and $r_i$ (keeping their order) to $q_1, \dotso, q_k$ and  $r_1, \dotso, r_k$
\end{algorithm}

\begin{lemma}\label{lem:diameter-consec-items}
    For every consecutively stored items $q_{i-1},q_i$, 
    and for every item $p$ satisfying $S(q_{i-1})< S(p)< S(q_i)$ and $E(q_{i-1}) < E(p)$, 
    it holds that~$d(q_{i-1},p)\leq (1+\eps/3)\cdot r_i$.
\end{lemma}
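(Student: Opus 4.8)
The plan is to prove the inequality whenever $q_{i-1},q_i$ are consecutive stored items (at any point of the execution), by a case analysis on \emph{how} they became consecutive. First I would record the elementary fact that a discarded item is never stored again, so a currently stored item has been stored continuously since its insertion; consequently \cref{li:updateRadius} guarantees that at every time $r_j=\max\big(\{0\}\cup\{d(q_j,p): p \text{ has arrived},\ S(p)>S(q_j),\ E(p)>E(q_j)\}\big)$ for each stored $q_j$, and each $r_j$ is non-decreasing in time while $q_j$ stays stored. Since insertion times are distinct in this section, no item can be inserted between $q_{i-1}$ and $q_i$ once $q_i$ has arrived; hence there is a well-defined earliest time $\tau$ at which $q_{i-1}$ becomes the immediate predecessor of $q_i$, and the pair remains consecutive from $\tau$ onward. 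At $\tau$, exactly one of two things happens.

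\textbf{Case (A): $q_i$ is already stored at $\tau$.} Then the pair becomes consecutive because the while loop of \cref{alg:diameter_general_exp} discards everything strictly between $q_{i-1}$ and $q_i$, and I would argue that this occurs precisely in the iteration processing the index of $q_{i-1}$, with $j$ equal to the index of $q_i$: no earlier iteration of that run can discard an item placed after $q_{i-1}$, since that would force $q_{i-1}$ itself into the discarded range. By \cref{line:diameter_compare_Radii} this means $(1+\eps/3)\,r_i > r_{i-1}$ at time $\tau$. Any item $p$ with $S(q_{i-1})<S(p)<S(q_i)$ and $E(p)>E(q_{i-1})$ has arrived by time $\tau$ (as $S(p)<S(q_i)\le\tau$), so the elementary fact gives $d(q_{i-1},p)\le r_{i-1}(\tau)<(1+\eps/3)\,r_i(\tau)\le(1+\eps/3)\,r_i$, using monotonicity of $r_i$.

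\textbf{Case (B): $\tau$ is the insertion time of $q_i$ and $q_{i-1}$ is the last stored item then} (the while-loop run right after appending $q_i$ cannot discard $q_{i-1}$, since $r_i=0$ there). Here I claim there is \emph{no} item $p$ with $S(q_{i-1})<S(p)<S(q_i)$ and $E(p)>E(q_{i-1})$, so the statement is vacuous. If such a $p$ were long, it was appended to the store on arrival; since $q_{i-1}$ is still the last stored item when $q_i$ arrives, $p$ must have been discarded earlier, but the upper item $b$ of the pair witnessing that discard is again a stored---hence long---item inserted strictly between $q_{i-1}$ and $q_i$, and iterating yields an infinite strictly increasing chain of insertion times below $S(q_i)$, which is impossible. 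If instead $p$ is dominated, let $\ell$ be a long item dominating $p$, so $S(\ell)<S(p)<S(q_i)$ and $E(\ell)>E(p)>E(q_{i-1})$: if $S(\ell)<S(q_{i-1})$ then $\ell$ dominates $q_{i-1}$; if $S(\ell)=S(q_{i-1})$ then $\ell=q_{i-1}$ and $p$ is dominated by $q_{i-1}$, forcing $E(p)<E(q_{i-1})$; and if $S(\ell)>S(q_{i-1})$ then $\ell$ is a long item strictly between $q_{i-1}$ and $q_i$---each possibility contradicts a hypothesis or the long case.

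I expect the main obstacle to be the vacuity claim in Case (B): it is exactly where one must show that the pruning rule never discards a point which would genuinely ``separate'' two surviving consecutive items while still being relevant (i.e.\ expiring after $q_{i-1}$), and it relies on both the infinite-descent argument for long candidates and the domination argument for dominated ones. The supporting bookkeeping in Case (A)---that the merge creating the pair is the iteration processing $q_{i-1}$, and that a stored item's recorded radius faithfully accounts for all relevant distances seen so far---is routine but must be stated carefully.
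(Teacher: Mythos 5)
Your proof is correct and follows essentially the same route as the paper's: the paper also splits into the vacuous case (no long item was ever inserted between $q_{i-1}$ and $q_i$, which coincides with your Case (B)) and the case where the pair becomes consecutive via a while-loop discard, at which moment \cref{line:diameter_compare_Radii} gives $r_{i-1}\le(1+\eps/3)r_i$, followed by monotonicity of $r_i$. Your Case (B) infinite-ascent and domination arguments, and the Case (A) bookkeeping identifying the merging iteration, simply spell out details the paper leaves implicit.
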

\begin{proof}
    Because~$q_{i-1}$ and~$q_i$ are consecutive, any long item that was inserted between them has been discarded.
    If no such long item exists, then the claim follows because there is no item~$p$ with $S(q_{i-1})< S(p)< S(q_i)$ and $E(q_{i-1}) < E(p)$.
    Otherwise, let~$T$ be the time at which the last long item between $q_{i-1}$ and $q_{i}$ is discarded.
    Denote with $r_{i-1}^T$ (resp. $r_i^T$) the radius stored at time~$T$ for the point which is now $q_{i-1}$ (resp. $q_{i}$).
    Then, $r_{i-1}^T<(1+\eps/3)\cdot r_i^T$ holds due to line~\ref{li:diameter_compare_Radii}.
    In particular, for every item $p$ with $S(q_{i-1}) < S(p) < S(q_i)$ and $E(q_{i-1}) < E(p)$, it holds that $d(q_{i-1},p)\leq r_{i-1}^T\leq (1+\eps/3)\cdot r_i^T$. The radius corresponding to point $q_i$ can only increase afterwards, so the claim follows.
\end{proof}

To show that the algorithm achieves $(3+\eps)$-approximation, 
we analyze different cases based on the insertion and 
expiration times of the items that realize the diameter. 

\ThmDiameterGenExp* 


\begin{proof}
By lines~\ref{li:diameter_compare_Radii}--\ref{li:diameter_end_while} of \Cref{alg:diameter_general_exp}, it holds that ${{(1+\eps/3)}\cdot r_{i+2}\leq {r_i}}$ for each~$i \in [k-2]$.
Therefore, $k\in \calO(\log_{1+\eps/3} \Delta)$. 
The size of the array $A$ is in $\calO(\log_{1+\eps/3} \Delta)$.
Since $\calO(\log_{1+\eps/3} \Delta)=\calO((1/\eps)\log \Delta)$, the space bound follows. 

Let $T$ be the current time. If there does not exist a stored point $q_2$, then all long items are expired at time~$T$. Therefore, all items are also expired and there are no active items at time~$T$. Otherwise, let $j$ be the maximum value such that $A[j]>T$. Then, the query algorithm at time~$T$ returns $r=\max\{r_2, (1+\eps/3)^j\}$. 
By our invariant, $q_2$ has not yet expired at time~$T$. 
As~$r_2$ is the radius corresponding to $q_2$ and $q_2$ is still active, there exists an item $p$ with $S(p)\geq S(q_2)$ and $E(p)> E(q_2)$ such that $d(p, q_2)=r_2$. Similarly, if $j$ exists, there exist two items $p$ and $p'$ with $E(p')\geq E(p)=A[j]>T$ such that $d(p, p')\geq (1+\eps/3)^j$. 
Hence, the diameter at time~$T$ is at least~$r$.
It remains to prove that the diameter is at most $(3+\eps)r$.
We first show that the following holds for every item~$p$ that is active at time~$T$:
\begin{enumerate}[(1)]
    \item If~$S(p) \leq S(q_2)$, then~$d(p,q_1) \leq (1 + \eps/3)r$ by~\Cref{lem:diameter-consec-items}.
    \item If~$S(p) \geq S(q_2)$, then~$d(p,q_2) \leq (1 + \eps/3)r$: If~$E(q_2) < E(p)$, then we have $d(q_2, p)\leq r_2\leq r$, since $r_2$ is the radius of the smallest ball centered at $q_2$ that contains all items that were inserted after $q_2$ and expire after $q_2$.
Otherwise, we have $T<E(p)\leq E(q_2)$ and $d(q_2, p)< (1+\eps/3)^{j+1}\leq (1+\eps/3) r$ by the definition of $A[j]$.
\end{enumerate}
Now, let $p$ and $p'$ with $S(p)\leq S(p')$ be the items that realize the diameter at time~$T$. As all stored items are long and $E(q_{1})\leq T<E(p)$, it holds that $S(q_1)<S(p)$. We consider the following three cases to prove that $d(p, p')\leq (3+\eps)r$.
\begin{enumerate}[a)]
    \item If $S(p')<S(q_2)$ then
    $d(p, p')\leq d(p, q_{1})+ d(q_{1}, p')\leq (2+\eps)r$ by~(1). 
    \item If $S(p)< S(q_2) \leq S(p')$ then 
    $d(p, p')\leq d(p, q_{1})+d(q_{1}, q_2)+d(q_2, p')\leq (3+\eps)r$~by~(1),~(2).\label{case:3+eps}
    \item Otherwise $S(p)\geq S(q_2)$. Then $d(p, p')\leq d(p, q_2)+d(q_2, p')\leq (2+\eps) r$ by~(2). \qedhere 
\end{enumerate}
\end{proof}

The same algorithm approximates also the minimum enclosing ball.

\begin{restatable}{theorem}{GeneralMetricMEB}\label{thm:generalMetricMEB}
    There is a deterministic expiration-streaming algorithm storing $\calO((1/\eps)\log \Delta)$ words that maintains a $(4+\eps)$-approximation of the minimum enclosing ball in a metric $\calM$. 
\end{restatable}
\begin{proof}
    We use the same algorithm as in \Cref{thm:diameterGenExp_3apx}. Therefore, the space bound follows. At time~$T$, we return the ball centered at $q_2$ with radius $r=(2+\eps)\cdot \max\{r_2\}\cup \{(1+\eps/3)^j\mid A[j]>T\}$. By the proof of \Cref{thm:diameterGenExp_3apx}, there exist two active points that have distance at least $r/(2+\eps)$. Hence, the radius of the minimum enclosing ball of all points is at least~$r/(4+2\eps)$. Further, again by the proof of \Cref{thm:diameterGenExp_3apx}, all active points have distance at most~$2r$ to~$q_2$.
\end{proof}

\subsection{$(1+\sqrt{3}+\eps)$-Approximation in Euclidean Spaces}

In the proof of~\Cref{thm:diameterGenExp_3apx}, the case that prevents a better approximation factor is Case~\ref{case:3+eps}), in which the diameter is realized by two items~$p$ and~$p'$ such that~$p$ is inserted before~$q_2$ and~$p'$ after~$q_2$. Here, our best available bound for the distance between~$q_1$ to~$p'$ is via~$q_2$.
We show that in the Euclidean metric, this bound can be improved from $2+\eps$ to~$\sqrt{3}+\eps$ by storing an additional point $\widetilde{q_2}$ that has a sufficiently large distance to $q_2$.
This yields an approximation factor of~$1+\sqrt{3}+\eps$, which is strictly better than the current best approximation factor, 
 even in the sliding-window model. 
The modified update procedure is depicted in~\Cref{alg:Euclidean_diameter_general_exp}.
The expiration handling and query answering remain unchanged.

In the beginning, for every newly stored item $q_i$ we define $\widetilde{q_i}=q_i$. Consider the case that long items inserted between $q_{i-1}$ and $q_i$ are discarded. Further, let $p$ be the item such that $q_{i-1}$ and $q_i$ become consecutive stored items at time $S(p)$. In this case, we update $\widetilde{q}_i$ to $p$. In contrast to \Cref{alg:diameter_general_exp}, the value $r_i$ is defined by also considering distances to $\widetilde{q_i}$. We update $r_i$ such that it holds that
\begin{enumerate}[(1)]
    \item $d(p, q_i)\leq r_i$ for all $p$ with $S(q_i)<S(p)<E(q_i)<E(p)$, 
    \item $d(p, \widetilde{q_i})\leq r_i$ for all $p$ with $S(\widetilde{q}_i)<S(p)<E(q_i)<E(p)$, and
    \item there exists an item $p$ with $S(\widetilde{q}_i)\leq S(p)<E(q_i)<E(p)$ such that $r_i=\max\{d(p, q_i), d(p, \widetilde{q}_i)\}$.
\end{enumerate}
We store and reindex $\widetilde{q}_i$ and $r_i$ along with $q_i$.


\begin{algorithm}
    \caption{Euclidean diameter update procedure for inserted item $(p, S, E)$.}\label{alg:Euclidean_diameter_general_exp}
    \If{$E>E(q_{k})$}
    {
        $q_{k+1}\gets p$, $r_{k+1}\gets 0$, $\widetilde{q}_{k+1}\gets p$ \tcp*{$(p,S,E)$ is long}\label{li:Eu_onlyLongItemsStored}
    }
    \For{$i\in[k]$}
    {
        \If{$E>E(q_{i})$}
        {
            $r_i\gets \max\{r_i, d(q_i, p), d(\widetilde{q}_i, p)\}$ \label{li:Eu_updateRadius}
        }
        \Else{
            let $j$ be such that $d(p, q_i)\in [(1+\eps/3)^j , (1+\eps/3)^{j+1})$\;\label{li:Eu_updateArray1}
            $A[j]\gets \max\{A[j], E\}$\; \label{li:Eu_updateArray2}
        }
    }
    $i\gets 1$\;
    
    \While{$i\leq k-1$}
    {
        find the largest $j\geq i$ such that $(1+\eps/3)\cdot r_j>{r_i}$\label{li:Eu_diameter_compare_Radii}\label{li:removeLongItems_start}\;
        discard $q_t$ and $r_t$ for all $t\in\{i+1, \dotso, j-1\}$\;
        \lIf{$j>i+1$}{
            $\widetilde{q_j}\gets p$\label{li:Eu_diameter_set_extra}
        }
        
        $i\gets\max\{j, i+1\}$\label{li:Eu_diameter_end_while}
    }
    $k\gets$ number of remaining elements
    
    reindex $q_i$, $\widetilde{q_i}$, and $r_i$ (keeping their order) to $q_1, \dotso, q_k$,  $\widetilde{q_1}, \dotso, \widetilde{q_k}$, and $r_1, \dotso, r_k$
\end{algorithm}
The additional stored points~$\widetilde{q}_i$ give us a stronger version of~\Cref{lem:diameter-consec-items}.
\begin{restatable}{lemma}{diameterConsecItemsEuclidean}
    \label{lem:diameter-consec-items-euclidean}
    For every consecutively stored items $q_{i-1},q_i$, 
    and for every item $p$ satisfying $S(q_{i-1})< S(p)< S(\widetilde{q}_i)$ and $E(q_{i-1}) < E(p)$, it holds that~$d(q_{i-1},p)\leq (1+\eps/3)\cdot d(\widetilde{q}_i,q_i)$.
\end{restatable}
\begin{proof}
    First, consider the case that~$\widetilde{q}_i=q_i$. Further, let $p$ be the item with smallest insertion time $S(p)>S(q_{i-1})$ such that $E(p)>E(q_{i-1})$. Then, it holds that $p$ is a long item. Since $\widetilde{q}_i=q_i$, it holds that no long items in between $q_{i-1}$ and $q_i$ got inserted. Therefore, there cannot exist an item $p$ with $S(q_{i-1})< S(p)< S(\widetilde{q}_i)$ and $E(q_{i-1}) < E(p)$ and the claim is trivially true.
    Otherwise, let~$T=S(\widetilde{q}_i)$ be the time at which~$\widetilde{q}_i$ is set in line~\ref{li:Eu_diameter_set_extra}.
    Denote with $r_{i-1}^T$ (resp. $r_i^T$) the radius stored at time~$T$ for the point which is now $q_{i-1}$ (resp. $q_{i}$).
    Then, it holds that~$r_i^T=d(\widetilde{q}_i,q_i)$ because otherwise the condition~$(1 + \eps/3)\cdot r_i > r_{i-1}$ in line~\ref{li:Eu_diameter_compare_Radii} would have already been fulfilled in time step~$T-1$.
    For every item $p$ with $S(q_{i-1}) < S(p) < S(\widetilde{q}_i)=T$ and $E(q_{i-1}) < E(p)$, it then holds that $d(q_{i-1},p)\leq r_{i-1}^T\leq (1+\eps/3)\cdot r_i^T=(1+\eps/3)\cdot d(\widetilde{q}_i,q_i)$.
\end{proof}


In the following, we use~$B(p,r)$ to denote the ball of radius~$r \in \mathbb{R}$ around the point~$p \in \mathbb{R}^d$. We use the following two lemmas.

\begin{restatable}{lemma}{DistInLens}
\label{lem:distInLens}
    Let $a, b, p, q\in \mathbb{R}^d$ be points such that 
    $d(a, b)=r$ and $p, q\in B(a, r)\cap B(b,r)$. 
    Then, it holds that $d(p, q)\leq \sqrt{3}r$. 
\end{restatable}
\begin{proof}
    We assume that $a=(0,\dots, 0)$, $b=(r, 0, \dots, 0)$, and $p=(p_1, \dots, p_d)\in B(a, r)\cap B(b,r)$ with $p_1\geq r/2$ without loss of generality. Define $m=(r/2, 0, \dots, 0)$. 
    Then we obtain
    \[ d(p, m)^2=\left(p_1-\frac{r}{2}\right)^2+\sum_{i=2}^{d}{p_i^2}=\sum_{i=1}^{d}{p_i^2}+\frac{r^2}{4}-p_1r\leq r^2+\frac{r^2}{4}-\frac{r^2}{2}=\frac{3}{4}r^2.\]
    Hence, for all $p, q\in B(a, r)\cap B(b,r)$ it holds that $d(p, q)\leq d(p,m) + d(q,m) \leq \sqrt{3}r$.
\end{proof}

\begin{restatable}{lemma}{Angle}
\label{lem:angle5/6}
    Let $q_2$, $s$, $p'$, and $c$ be points and $\alpha$ the angle at $s$ in the triangle spanned by the points $q_2$, $p'$, and $s$ such that $d(q_2, c)=d(q_2, s)$, $c\in \overline{q_2 p'}$, and $\alpha\geq (5/6) \pi$ (see \Cref{fig:proof_angle}). Then, it holds that $d(s, p')\leq \sqrt{3} d(c, p')$. 
\end{restatable}
\begin{proof} 
    Let $\theta$ be the angle at $q_2$ in the triangle spanned by $q_2$, $p'$, and $s$. It holds that 
    \[\frac{d(q_2, s)}{\sin(\alpha+\theta)}=\frac{d(p', q_2)}{\sin(\alpha)}=\frac{d(s, p')}{\sin(\theta)}.\]
    Further, since $d(s, q_2)=d(q_2, c)$ and $c\in \overline{q_2p'}$ it holds that 
    \[d(q_2, s)+d(c, p')=d(p', q_2)=d(s, p')\cdot \frac{\sin(\alpha)}{\sin(\theta)}.\]
    Hence, 
    \begin{align*}
        \frac{d(c, p')}{d(s, p')}&=\frac{\sin(\alpha)}{\sin(\theta)}- \frac{d(q_2, s)}{d(s, p')}=\frac{\sin(\alpha)}{\sin(\theta)}- \frac{\sin(\alpha+\theta)}{\sin(\theta)}\\
        &= \frac{\sin(\alpha)}{\sin(\theta)}- \frac{\sin(\alpha)\cos(\theta) + \sin(\theta)\cos(\alpha)}{\sin(\theta)}\\
        &= \sin(\alpha) \cdot \frac{1 - \cos(\theta)}{\sin(\theta)} - \cos(\alpha)\\
        &= \sin(\alpha) \cdot \frac{2\sin^2(\theta/2)}{2\sin(\theta/2)\cos(\theta/2)} - \cos(\alpha)\\
        &= \sin(\alpha) \cdot \tan(\theta/2) - \cos(\alpha).
    \end{align*}
    We prove that $f(\alpha,\theta) = \sin(\alpha) \cdot \tan(\theta/2) - \cos(\alpha) \geq \frac{1}{\sqrt{3}}$ for $\alpha\in[(5/6) \pi, \pi)$ and $\theta\in (0, \pi/6]$. 
    Because~$\sin(\alpha)$ and~$\tan(\theta/2)$ are non-negative in the given intervals, it suffices to show that $-\cos(\alpha) \geq \frac{1}{\sqrt{3}}$ for~$\alpha\in[(5/6) \pi, \pi)$.
    The minimum is attained for~$\alpha=(5/6)\pi$ with~$-\cos((5/6)\pi)=\frac{\sqrt{3}}{2} > \frac{1}{\sqrt{3}}$.
\end{proof}

\ThmEuclideanGenExpiration* 


\begin{figure}
    \centering
    \includegraphics[page=3,width=\linewidth]{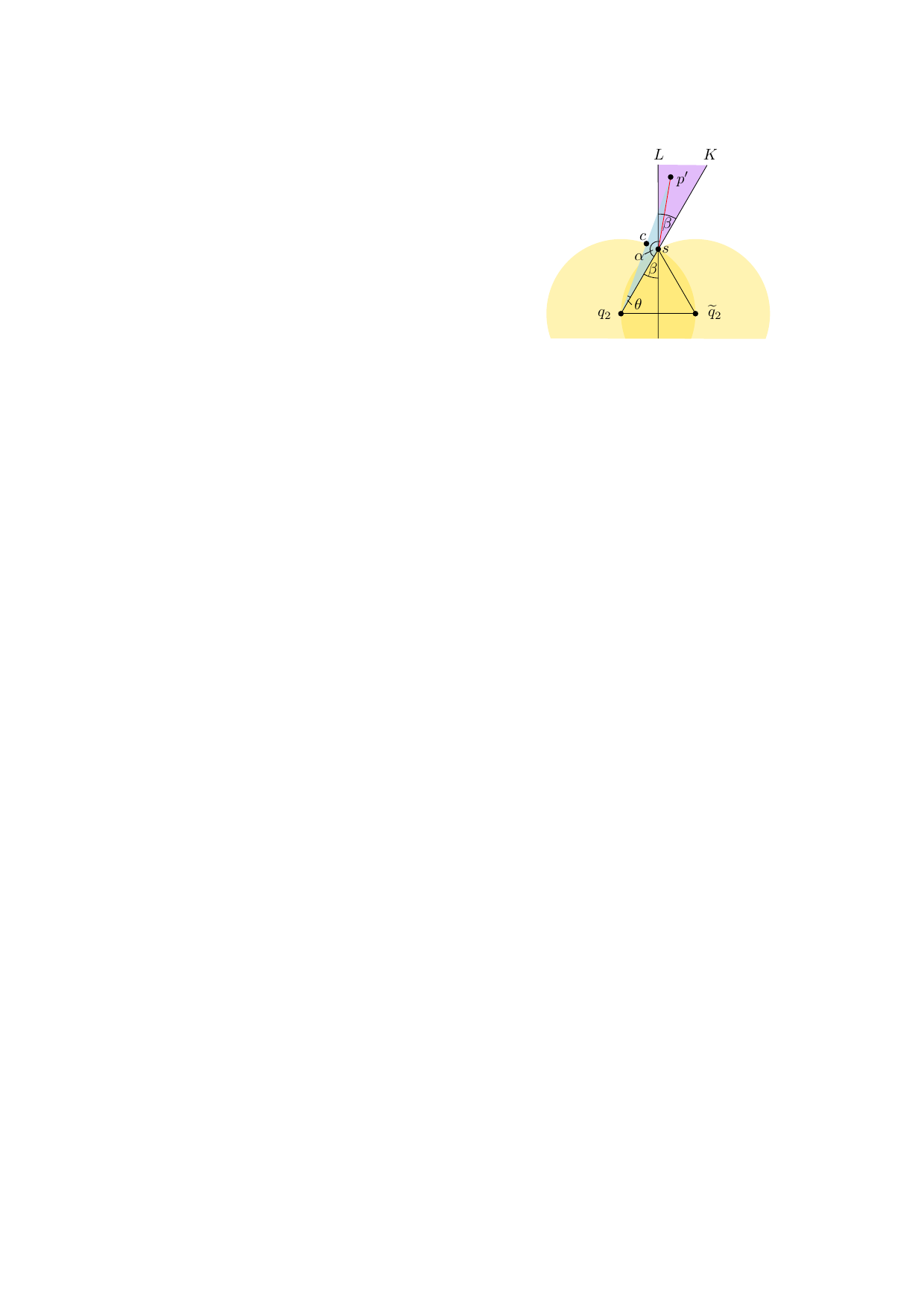}
    \caption{Visualization of Case i)-iii) (from left to right) of the proof of \Cref{thm:EuclideanGenExpiration}.}
    \label{fig:EuclideanGenExp}
\end{figure}
\begin{proof}
    Let $r$ be the returned radius at time~$T$, i.e., 
    $r=\max\{r_2\}\cup\{(1+\eps/3)^j\mid A[j]>T\}$. 
    By construction of $r_2$ and $A$, it holds that there are two items that are active and have distance at least $r$. Hence, the diameter is at least $r$ at time~$T$. Let $p$ and $p'$ with $S(p)< S(p')$ be the items that realize the diameter at time~$T$. Similarly to the proof of \Cref{thm:diameterGenExp_3apx}, we consider different cases for the values of $S(p)$ and $S(p')$.
    \begin{enumerate}[a)]
    \item If $S(p)\geq S(q_2)$, then $d(p, p')\leq 2 r$, by the same arguments as in the proof of \Cref{thm:diameterGenExp_3apx}.
    \item If $S(p')<S(\widetilde{q}_2)$, then $d(p, p')\leq d(q_1, p)+d(q_1, p')\leq (2+\eps) d(\widetilde{q}_2,q_2) \leq (2+\eps)r$~by~\Cref{lem:diameter-consec-items-euclidean}. 
    \item Otherwise, $S(p)< S(q_2)\leq S(\widetilde{q}_2)< S(p')$. Then, it holds that $d(p, p')\leq d(p, q_1)+d(q_1, p')$.
\end{enumerate}
    For Case c), we show that $d(q_1, p')\leq (\sqrt{3}+\eps)r$.
    Note that since $S(q_1)< S(p)<S(q_2)$ and $E(q_1)<E(p)\leq E(q_2)$, a long item was inserted between $q_1$ and $q_2$ but is not stored at time~$T$. Hence, it holds that $q_2\neq \widetilde{q}_2$.
    Let $\tilde{r}=d(q_2, \widetilde{q}_2)$.
    By~\Cref{lem:diameter-consec-items-euclidean}, it holds that $d(q_1, q_2)\leq (1+\eps/3)\tilde{r}$ and $d(q_1, \widetilde{q}_2)\leq (1+\eps/3)\tilde{r}$ , i.e., $q_1 \in B(q_2, (1+\eps/3)\tilde{r}) \cap B(\widetilde{q}_2, (1+\eps/3)\tilde{r})$.
    We assume that $d(\widetilde{q}_2, p')\leq d(q_2, p')$ as the other case follows analogously. 
    To prove $d(q_1, p')\leq (\sqrt{3}+\eps)r$, we distinguish between three cases (see \Cref{fig:EuclideanGenExp}).
    \begin{enumerate}
        \item[i)] If $d(q_2, p')\leq \tilde{r}$, then it follows that~$d(\widetilde{q}_2,p') \leq d(q_2,p') \leq \tilde{r}$.
        Hence, we have~$p' \in B(q_2,\tilde{r}) \cap B(\widetilde{q}_2,\tilde{r})$. By \Cref{lem:distInLens} it holds that $d(q_1, p')\leq (\sqrt{3}+\eps/\sqrt{3})\tilde{r}\leq (\sqrt{3}+\eps/\sqrt{3})r$.
    \end{enumerate}
    Otherwise, define $c$ to be the point on $\overline{q_2 p'}$ with $d(c, q_2)=\tilde{r}$.
    \begin{enumerate}
    \item[ii)] If $d(c, \widetilde{q}_2)\leq \tilde{r}$, then $d(q_1, c)\leq (\sqrt{3}+\eps/\sqrt{3})\tilde{r}$ by \Cref{lem:distInLens}.
    Hence, 
            \[d(q_1, p')\leq d(q_1, c)+d(c, p')= d(q_1, c)+d(q_2, p')-d(q_2, c) \leq (\sqrt{3}+\eps/\sqrt{3})\tilde{r}+ r-\tilde{r}\leq (\sqrt{3}+\eps/\sqrt{3}) r.\]
    \item[iii)] Otherwise, it holds that $d(c, \tilde{q}_2)>\tilde{r}$. Then,
    consider the plane containing $q_2$, $\widetilde{q}_2$ and~$p'$. Let~$s$ be the point in this plane with $d(q_2,s)=d(\widetilde{q}_2,s)=\tilde{r}$ that is closest to~$p'$ (see~\Cref{fig:proof_angle}). 
    Then, $d(q_1, s)\leq (\sqrt{3}+\eps/\sqrt{3})\tilde{r}$ by \Cref{lem:distInLens}.
    We use \Cref{lem:angle5/6} to obtain a bound on $d(s, p')$.
    Let $\alpha$ be the angle at $s$ in the triangle spanned by the points~$q_2$,~$p'$,~and~$s$.
    Further, let~$L$ denote the perpendicular bisector of~$\overline{q_2\widetilde{q}_2}$ and let~$\beta$ be the smaller angle between the line~$L$ and the line containing $q_2$ and $s$ denoted by~$K$.
    We have~$\beta=\pi/6$ because the triangle spanned by~$q_2$, $\widetilde{q}_2$ and~$s$ is equilateral with angle~$2\beta$.
    Because~$d(c,\widetilde{q}_2)>\tilde{r} =d(c,q_2)$, it follows that~$c$ and~$q_2$ lie on the same side of~$L$, which implies that~$p'$ lies on the same side of the line~$K$ as $c$.
    On the other hand, $p'$ lies on the same side of~$L$ as~$\widetilde{q}_2$ because~$d(\widetilde{q}_2,p') \leq d(q_2,p')$.
    Hence, $p'$ lies in the sector spanned by~$K$ and~$L$ at~$s$ away from $q_2$ (colored in purple in \Cref{fig:proof_angle}).
    Therefore, it follows that~$\alpha \geq \pi - \beta = (5/6)\pi$.
    Hence, by \Cref{lem:angle5/6} we have $d(s, p')\leq \sqrt{3}d(c, p')$ and thus
    \[d(q_1, p')\leq d(q_1, s)+d(s, p')\leq (\sqrt{3}+\eps/\sqrt{3})\tilde{r}+\sqrt{3} d(c, p')\leq (\sqrt{3}+\eps/\sqrt{3}) r. \] 
    \end{enumerate}
    Therefore, in Case c) it holds that $d(p,p')\leq d(p, q_1)+d(q_1, p')\leq (1+\sqrt{3}+\eps) r$.
\end{proof}
\begin{figure}
    \centering
    \includegraphics[page=5, height=4.5cm]{Figures/Sqrt3.pdf}
    \caption{Visualization of Case iii) in~\Cref{thm:EuclideanGenExpiration}. The angle $\beta$ is $\pi/6$ because the triangle spanned by~$q_2$, $\widetilde{q}_2$ and~$s$ is equilateral. Because~$p'$ lies in the sector spanned by~$L$ and~$K$ at~$s$ (shown in purple), it follows that~$\alpha \geq \pi - \beta = (5/6)\pi$.}
    \label{fig:proof_angle}
\end{figure}

Again, the same algorithm can be used to approximate the minimum enclosing ball in Euclidean space $\RR^d$.

\begin{restatable}{theorem}{EuclideanMEB}\label{thm:EucideanMEB}
    There is a deterministic expiration-streaming algorithm storing $\calO((1/\eps)\log \Delta)$ words that maintains $(1+\sqrt{3}+\eps)$-approximation of Euclidean Minimum Enclosing Ball.
\end{restatable}
\begin{proof}
    We use the same algorithm as in \Cref{thm:EuclideanGenExpiration}. Therefore, the space bound follows. Define $\widetilde{r}=(1+\eps/3)d(q_2, \widetilde{q}_2)$ and $r=\max\{r_2\}\cup \{(1+\eps/3)^{j}\mid A[j]>T\}$. Then, it holds that $\widetilde{r}\leq (1+\eps/3)r$, since $d(q_2, \widetilde{q_2})\leq r_2$. 
    By construction, there are two active items that have distance at least~$r$. Thus, the radius of the minimum enclosing ball is at least~$r/2$.
    
    If $\widetilde{q}_2=q_2$, then we return the ball $B$ centered at $q_2$ with radius $(1+\eps) r$. Since $\widetilde{q}_2=q_2$, no long item was inserted between $q_1$ and $q_2$. Hence, because $q_1$ is expired, all active items were inserted not later than $q_2$. Therefore, by construction all active items are contained in~$B$ and $B$ is a $(2+\eps)$-approximation of the minimum enclosing ball.

    Otherwise, it holds that $\widetilde{q}_2\neq q_2$ and therefore $d(q_1, q_2)\leq \widetilde{r}$ and $d(q_1, \widetilde{q}_2)\leq \widetilde{r}$ by lines \ref{li:removeLongItems_start}--\ref{li:Eu_diameter_set_extra}.\linebreak 
    Define $m$ to be the midpoint between $q_2$ and $\widetilde{q}_2$ and let $c$ be the point on the straight-line segment between $q_1$ and $m$ such that $d(c, m)=\widetilde{r}/2$. Then, we return the ball $B$ centered at~$c$ with radius $r_c=((1+\sqrt{3}+\eps)/2)\cdot r$. We show that all active items are contained in $B$, which shows that~$B$ is a $(1+\sqrt{3}+\eps)$-approximation of the minimum enclosing ball.
    By the proof of \Cref{lem:distInLens}, it holds that $d(m, q_1)\leq (\sqrt{3}/2)\cdot \widetilde{r}$ and hence that $d(q_1, c)\leq ((\sqrt{3}-1)/2)\cdot \widetilde{r}$. 
    Let $p$ be an active item. If $S(p)<S(\widetilde{q}_2)$, then $d(p, q_1)\leq \widetilde{r}$ by \Cref{lem:diameter-consec-items-euclidean}. Hence,
    $d(p, c)\leq d(p, q_1)+d(q_1, c)\leq \widetilde{r}+((\sqrt{3}-1)/2) \widetilde{r}\leq r_c$. Otherwise, $S(p)>S(\widetilde{q}_2)$ and $d(p, \widetilde{q}_2)\leq (1+\eps/3)r$ and $d(p, q_2)\leq (1+\eps/3)r$ by construction. Hence, $d(p, m)\leq (\sqrt{3}/2) (1+\eps/3)\cdot r$ by the proof of \Cref{lem:distInLens}. Therefore, it holds that $d(p, c)\leq d(p, m)+d(m, c)\leq (\sqrt{3}/2) (1+\eps/3)\cdot r+ \widetilde{r}/2\leq r_c$.
\end{proof}
\section{Approximating $\parK$-Center in Expiration Streams}
\label{sec:kcenter}

The work of~\cite{Cohen-AddadSS16} gives a $(6+\eps)$-approximation for metric $\parK$-center in the sliding-window streaming model storing $\calO((\parK/\eps)\log \Delta)$ words. It is not hard to verify that this algorithm also works under the consistent expiration streaming model with the same space bound and same approximation factor.
For general expirations, the same algorithm still yields a $(6+\eps)$-approximation. However, the required space can be arbitrarily large.
Below, we give an algorithm that computes a~$(6\parK+2+\eps)$-approximation in the general expiration model storing $\calO((1/\eps)\parK^2 \log \Delta)$ words.
Because the sliding-window algorithm by~\cite{Cohen-AddadSS16} is used as a subroutine, we briefly summarize its main ideas.

\subsection{Preliminaries on the Sliding-Window Algorithm}

For a given estimate~$\gamma$ of the solution value, the~$\parK$-center algorithm by~\cite{Cohen-AddadSS16} operates as follows.
The algorithm maintains a set~$A$ of at most~$\parK+1$ \emph{attraction points}.
Each attraction point~$a$ is associated with a ball of radius~$2\gamma$ centered at~$a$.
The algorithm guarantees that no attraction point lies inside the ball of another attraction point.
The \emph{representative}~$R(a)$ of~$a$ is the longest-living point inside this ball inserted while $a$ is active.
The set of all representatives is denoted by~$R$.
When~$a$ expires, its representative~$R(a)$ is kept in memory until it itself expires.
The representatives of expired attraction points are called \emph{orphans} and are stored in a set~$O$.


When a new point $p_t$ is inserted and it
lies in the ball of at least one attraction point, then~$p_t$ is considered \emph{covered} 
and the representative~$R(a)$ might be updated.
If~$p_t$ is not covered, it is inserted as a new attraction point, with itself as the representative.
If adding~$p_t$ increases the number of stored attraction points above~$\parK+1$, then the algorithm identifies the shortest-living attraction point~$a_\text{old}$ and discards it (and the representative~$R(a_\text{old})$ becomes an orphan).
Finally, if the number of stored attraction points is greater than~$\parK$, then all orphans that do not outlive~$a_\text{old}$ are discarded.

For every estimate~$\gamma \in \{ (1+\eps)^i \mid i \in [\lceil\log_{1+\eps} \Delta\rceil] \}$, an instance of the algorithm is run in parallel.
To answer a query, the algorithm iterates over the estimates in the ascending order.
For each estimate~$\gamma$, the algorithm attempts to construct a solution~$C$ by picking an arbitrary point in~$A \cup R \cup O$ and then greedily adding any point~$p \in A \cup R \cup O$ with~$d(p,C) > 2\gamma$.
If~$|C|>k$, then it is a certificate for~$\text{OPT}>\gamma$. Otherwise, $C$ is returned. It can be shown that $C$ is a~$6\gamma$-coreset for the stream because~$R\cup O$ is a $4\gamma$-coreset.


The main insight regarding the space bound is that the orphans that are still in memory belong to a subset of the~$\parK+1$ most recently discarded attraction points.
Any orphan~$o$ that belongs to an even older attraction point must have been inserted before the most recently discard attraction point~$a$, because the algorithm stores at most~$\parK+1$ attraction points at any time.
Due to the consistent expiration property of the stream, $o$ expires before~$a$, so~$o$ is discarded at the latest when~$a$ is discarded.
In the general expiration model, consistent expirations are not guaranteed.
This is the main reason why the size of~$O$ cannot be bounded, because then the orphans belonging to arbitrarily old attraction points may still be in memory.

\subsection{Overview of Our Algorithm}
We present an algorithm that, given a parameter $\maxK\in \mathbb{N}$, solves $\parK$-center under the (general) expiration streaming model for every $\parK\leq \maxK$.
As with~\cite{Cohen-AddadSS16}, our algorithm solves the decision version of the problem for a given estimate~$\gamma$, and this decision algorithm is then run in parallel for different estimates.
We observe that in the algorithm by~\cite{Cohen-AddadSS16}, the bound on the number of stored orphans does not require that the entire stream has consistent expirations, only that the attraction points are long.
This ensures that the attraction points always outlive the orphans of already discarded attraction points.
In the general expiration model, we cannot guarantee that all attraction points are long.
Therefore, we split the stream into~$\maxK$ substreams.
In each substream~$i$, we run a modified version of the algorithm by~\cite{Cohen-AddadSS16}, storing the sets~$A_i$, $R_i$ and~$O_i$ of attraction points, representatives and orphans.
When a point is inserted, it is placed into the first substream in which it is long or covered by the $2\gamma$-ball of an existing attraction point.
This ensures that within each substream, all attraction points are long and the number of orphans is bounded.

The main challenge lies in ensuring that~$\parK$ substreams are sufficient to approximate~$\parK$-center.
If there is a point~$p$ that does not fit into any of the first~$\parK$ substreams, we maintain a certificate that~$\text{OPT} > \gamma$ until~$p$ expires.
This certificate consists of the longest-living point from each substream, plus~$p$ itself.
Hence, we need to ensure that these points are pairwise more than~$2\gamma$ apart.
This is done by discarding points in different substreams if they are too close to each other: a point~$p$ from substream~$i$ is now also considered covered (and is therefore discarded) if it expires not later than a stored point~$p'$ from a lower substream~$j<i$ with~$d(p,p')\leq 2\gamma$.
To ensure that it is sufficient to consider lower substreams, we maintain the invariant that the substreams are ordered in descending order of their longest-living stored point.
If the longest-living point in substream~$i$ becomes covered, the invariant is violated.
It is restored by moving the longest-lived point among all substreams~$j>i$ down to substream~$i$ (which may lead to cascading effects).

A side effect of the expanded covering rule is that it causes the approximation factor to be dependent on~$\parK$.
When a representative or orphan~$r$ is discarded from substream~$i$ because it is covered by a point~$p$ from a lower substream, then the~$4\gamma$-coreset for substream~$i$ may be destroyed.
If we replace~$r$ with~$p$ in the coreset, its approximation guarantee increases to~$6\gamma$.
Moreover, the effect may be cascading because~$r$ may later be discarded when it is covered by another point that is inserted into an even lower substream.

\subsection{Detailed Algorithm Description}\label{sec:algorithm}
In this section, we give a detailed description of the $\parK$-center algorithm under the expiration model.
In addition to the sets~$A_i$, $R_i$ and~$O_i$, each substream~$i$ also maintains two values~$e_i$ and~$t_i$.
The value~$e_i$ is the expiration time of the longest-living point stored in substream~$i$, which is used to maintain the ordering of the substream.
The value~$t_i$ is the earliest time such that every item~$p$ that is still active was originally inserted into a substream~$j \leq i$ (note that it may have since been moved to a lower substream).
In other words, until time~$t_i$ there is at least one active item that was not originally inserted into any of the substreams~$1$ to~$i$, and~$t_i$ is the last expiration time among all such items.
We show that as a consequence of this definition, until time~$t_i$ every solution to $i$-center has radius greater than~$\gamma$.

Finally, the algorithm maintains a time~$\Abig$, which indicates that until time~$\Abig$, there is a set of at least~$\maxK+1$ active points that were at some time all included in the set~$A_i$ for some substream~$i$.
Note that if~$|A_i|>\maxK$ holds, then~$A_i$ is a certificate that all~$\parK$-center solutions have radius greater than~$\gamma$ until the first point from the set expires.
However, points from~$A_i$ may be discarded before they expire if they are covered by a newly inserted point from a lower substream, which destroys the certificate.
Hence, we store~$\Abig$ to indicate that the certificate still exists, even if some the points are no longer stored.

\subparagraph*{Invariants.} The algorithm maintains the following invariants at all times.
\begin{enumerate}[(1)]
    \item For all $i\in [\maxK]$, it holds that~$e_i=0$ if~$A_i \cup R_i \cup O_i = \emptyset$, and~$e_i = \max\{E(p) \mid p \in A_i \cup R_i \cup O_i \}$ otherwise.\label{in:Ei_longest_time}
    \item It holds that $e_1\geq e_2\geq \cdots \geq e_{\maxK}$. \label{in:orderEi}
    \item For all $a\in \bigcup_{i\in [\maxK]}A_i$, it holds that $d(a, R(a))\leq 2\gamma$. \label{in:distAR(a)}
    \item For all distinct $a, a'\in \bigcup_{i\in [\maxK]}A_i$, it holds that $d(a, a')>2\gamma$. \label{in:distAA}
    \item For all distinct $q_i\in R_i\cup O_i$ and $q_j\in R_j\cup O_j$ with $i\leq j$ and $E(q_i)\geq E(q_j)$, it holds that $d(q_i, q_j)>2\gamma$. \label{in:distLongRO}
\end{enumerate}

\subparagraph*{Initialization.}
The data structures are initialized as follows. 
\begin{itemize}
    \item $A_i, R_i, O_i\gets \emptyset$ for all $i\in [\maxK]$,
    \item $e_i, t_i\gets 0$ for all $i\in [\maxK]$, and
    \item $\Abig \gets 0$.
\end{itemize}

\begin{algorithm}
    \caption{$k$-center update procedure for inserted item~$p$.}\label{alg:updatek-center}
        \For{$i = 1,\dots,K$}
        {
            \If{$\exists q\in R_i\cup O_i$ with $E(q)\geq E(p)$ and $d(p, q)\leq 2\gamma$\label{li:distToRepr}}
            {
                 \Return \tcp*{$p$ is added to substream $i$} 
            }
            \If{$\exists a\in A_i$ with $d(p,a)\leq 2\gamma$\label{li:distToAttr}}
            {
                \If{$E(p)>E(R(a))$}
                {  
                    $R_i \gets \{ p \}\cup R_i \setminus \{ R(a) \} $ \;
                    $R(a)\gets p$\label{li:updateRepr}\;
                    $\DiscardCoveredPoints(p, i)$\;
                }
                \Return \tcp*{$p$ is added to substream $i$}
            }
            \If{$E(p) > e_i$\label{li:E>ei}}
            {
                $\AddAttractionPoint(p, i)$\;\label{li:addToA}
                \Return \tcp*{$p$ is added to substream $i$}
            }
            $t_i\gets \max\{t_i,E\}$ \tcp*{$p$ is not added to substream $i$}\label{li:setT}
        }
        \tcp{reorder substreams}
        \For{$i=1,\dots,K$\label{li:beginEnsureOrder}}
        {
            \If{$\bigcup_{j=i}^{\maxK} R_j\cup O_j= \emptyset$}{
                $e_i\gets 0$ \tcp*{substream $i$ is empty}
                \textbf{continue}
            }
            $e_i\gets \max\{E(q)\mid q\in \bigcup_{j=i}^{\maxK} R_j\cup O_j\}$\label{li:orderEi}\;
            Let $j\geq i$ be minimal such that $q\in R_j\cup O_j$ with $E(q)=e_i$\label{li:minj}\;
            Let $q\in R_j\cup O_j$ such that $E(q)=e_i$\label{li:longestItem}\;
            \If{$i\neq j$}{
                \If{$q\in R_j$}{
                    Let $a$ be such that $R(a)=q$\;
                    $R_j\gets R_j\setminus q$, $A_j\gets A_j\setminus a$ \label{li:deleteFromRA}\tcp*{discard $a$ and $q$ from substream $j$}
                    $\AddAttractionPoint(q, i)$\label{li:AddToRA}
                }
                \Else{
                    $O_j\gets O_j\setminus q$\label{li:deleteFromO}\;
                    $\AddAttractionPoint(q, i)$\label{li:AddOtoA}\;
                }
            }
        }
\end{algorithm}

\begin{algorithm}
    \caption{Subroutines of the $k$-center update procedure for inserted item~$p$.}
    \myproc{$\DiscardCoveredPoints(p,i)$}{
        \For{$j=i+1, \dots, \maxK$}
        {
            \tcp{$F$ is moved to substream $i$}
            $F\gets \{a\in A_j\mid E(R(a))<E(p) \text{ and } (d(a, p)\leq 2\gamma \text{ or } d(R(a), p)\leq 2\gamma) \}$\; 
            $A_j\gets A_j\setminus F$\;
            $R_j\gets R_j\setminus \{R(a)\mid a\in F\}$\;
            \BlankLine
            \tcp{$G$ is moved to substream $i$}
            $G\gets \{o\in O_j\mid E(o)<E \text{ and } d(o,p)\leq 2\gamma\}$\;
            $O_j\gets O_j\setminus G$\;
        }
    }
    \myproc{$\AddAttractionPoint(a, i)$}{
            $A_i\gets A_i\cup \{a\}$\;
            $R(a)\gets a$\;
            $R_i\gets R_i\cup \{R(a)\}$\;
            $\DiscardCoveredPoints(a, i)$\;
            $e\gets \min_{a\in A_i} E(a)$\;
            Let $a_{\text{old}}\in A_i$ be such that $E(a)=e$ \label{li:oldA}\;
            \If{ $|A_i|>\maxK+1$\label{li:sizeofA}}
            {
                $A_i \gets A_i \setminus \{a_\text{old}\}$ \label{li:delALeadToOrphans}\;
                $R_i \gets R_i \setminus \{R(a_\text{old})\}$\;
                $O_i \gets O_i \cup \{R(a_\text{old})\}$\;
            }
            \If{$|A_i|>\maxK$}
            {
                $O_i\gets \{o\in O_i\mid E(o) \geq E(a_\text{old})\}$\label{li:boundSizeO}\;
                $\Abig\gets \max\{\Abig, E(a_\text{old})\}$\label{li:Abig}
            }
        }
\end{algorithm}

\subparagraph*{Item Insertion.}
When a point~$p$ is added to the stream, \Cref{alg:updatek-center} is called.
We iterate over the different substreams with index $i=1$ to $K$ until we find a substream to which $p$ can be added.
If a point~$q=R(a)$ in $R_i\cup O_i$ exists that expires later than~$p$ or at the same time as~$p$ and has distance at most~$2\gamma$ to~$p$, then $p$ is added to substream with index $i$ and $p$ is covered by~$q$. 
Note that as $p$ expires not later than $q$, 
we do not have to update~$R(a)$ and we do not store~$p$.
Further, if an attraction point $a\in A_i$ exists with $d(a, p)\leq 2\gamma$, we add $p$ to substream $i$ and~$p$ gets covered by~$a$ and check whether we have to update the representative point~$R(a)$. To maintain Invariant~\ref{in:distLongRO} when the representative point~$R(a)$ is updated, we run $\DiscardCoveredPoints$.

If~$p$ lives longer than all points currently in substream~$i$, then~$p$ would be a long item in substream~$i$ with the definition of \Cref{sec:diameter}.
In the case that $p$ has distance greater than $2\gamma$ to all points in $A_i$, it becomes an attraction point in~$A_i$ and we set $R(p)=p$.
Then, $\AddAttractionPoint$ is called to maintain Invariant~\ref{in:distAA} and to bound the storage size.
If~$p$ was not added to substream~$i$ at the end of iteration~$i$, we update the earliest time~$t_{i}$ at which all active points are originally inserted into substreams with index at most~$i$ to $\max\{t_{i}, E\}$.



The rest of \Cref{alg:updatek-center} (lines~\ref{li:beginEnsureOrder}--\ref{li:AddOtoA}) ensures that Invariant~\ref{in:orderEi} remains true. So, the longest surviving point in substream $i$ lives at least as long as the longest surviving point in substream $i+1$ for all $i\in [\maxK-1]$, i.e., $e_1\geq e_2\geq \cdots\geq e_\maxK$. We iterate over all substreams $i=1$ to $\maxK$ and find in iteration $i$ the longest surviving point~$q$ in the substreams with index $j=i$ to $\maxK$. If $q$ is not already contained in substream $i$, we remove it 
from its substream and add it to substream $i$ as a new attraction point. In this case, its corresponding attraction point is discarded if it was still stored.

\subparagraph*{Item Expiration.}
If a point in $O_i$ for any $i\in [\maxK]$ expires, it is removed from~$O_i$. If a point $a\in A_i$ for any $i\in [\maxK]$ expires, it is removed from $A_i$ and $R(a)$ is moved from $R_i$ to $O_i$. Note that an item $R(a)$ is contained in $R_i$ only if $a\in A_i$. As $E(R(a))\geq E(a)$, an item $R(a)$ in $R_i$ can only expire if $E(a)=E(R(a))$. Then, $a$ is removed from $A_i$, $R(a)$ is added to $O_i$, and as $R(a)\in O_i$ expires it is removed from $O_i$.

\subparagraph*{Query Algorithm.}
As in~\cite{Cohen-AddadSS16}, 
we run the algorithm in parallel for every estimate~$\gamma\in \{(1+\eps)^i \mid i \in [\lceil\log_{1+\eps}\Delta\rceil]\}$.
A query for~$\parK \leq \maxK$ and time~$T$ is answered by iterating through the estimates in ascending order.
If~$\max\{t_{\parK}, \Abig\} \geq T$ for the current estimate~$\gamma$, then it is skipped.
Otherwise,  we first choose an arbitrary point $q\in \bigcup_{i\in[\parK]} A_i\cup R_i\cup O_i$ and add it to the center set $C$. Then, we greedily add any point $q\in \bigcup_{i\in[\parK]} A_i\cup R_i\cup O_i$ to $C$ with distance greater than $2\gamma$ to all points in $C$. If $|C|>\parK$ at termination, we have a certificate that no solution with radius $\gamma$ can exist. For the smallest value of $\gamma$ with $|C|\leq \parK$, we return $C$ and show that this gives a $(6\parK+2+\eps)$-approximation to metric $\parK$-center.

\subparagraph*{Notation.}
We denote with $A_i^T$, $R_i^T$, and $O_i^T$ the sets $A_i$, $R_i$, and $O_i$, with $R^T(a)$ the point $R(a)$, and with~$e_i^T$ and $t_i^T$ the values $e_i$ and $t_i$ at time~$T$. If an attraction point $a$ is not stored anymore at time~$T$, we denote with $R^T(a)$ the representative of $a$ at the time $a$ is removed from $\bigcup_{i\in [\maxK]}A_i$. In addition, we define $R^{\infty}(a)$ to be the last representative point of~$a$. 

\subsection{Storage Size}
In this section, we prove a space bound for the stored sets $A_i$, $R_i$, and $O_i$.

\begin{observation}\label{obs:long}
    At the time when $a$ is added to $A_i$, it holds that $e_i=E(a)$.
\end{observation}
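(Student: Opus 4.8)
The plan is a case analysis over the ways a point enters some~$A_i$. Inspecting \Cref{alg:updatek-center} and its subroutines, the assignment $A_i\gets A_i\cup\{a\}$ inside $\AddAttractionPoint(a,i)$ is the only place a point is put into an attraction set, and $\AddAttractionPoint$ is invoked from exactly two kinds of locations: from the insertion loop of \Cref{alg:updatek-center} on \cref{li:addToA} (as $\AddAttractionPoint(p,i)$, under the guard $E(p)>e_i$ of \cref{li:E>ei}), and from the reordering loop on \cref{li:AddToRA} and~\ref{li:AddOtoA} (moving a point~$q$ from a lower substream up to substream~$i$). I read ``at the time $a$ is added to $A_i$'' as the state at the end of the update step performing the insertion, and would prove $e_i=E(a)$ for the two cases in tandem with the inductive maintenance of Invariants~\ref{in:Ei_longest_time} and~\ref{in:orderEi}, which may be assumed to hold at the start of the step.

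For the reordering-loop case the claim is immediate from the code: just before $\AddAttractionPoint(q,i)$ is called, \cref{li:orderEi} has set $e_i=\max\{E(q'):q'\in\bigcup_{j\geq i}R_j\cup O_j\}$ and \cref{li:longestItem} picked $q$ with $E(q)=e_i$, so $e_i=E(q)=E(a)$ at the moment of the call. One then checks that $e_i$ is not disturbed afterwards: the body of $\AddAttractionPoint$ never writes $e_i$ (and even if its size test on \cref{li:sizeofA} deletes the freshly added $q$, the representative $R(q)=q$ is moved into $O_i$ with the same expiration, so $q$ remains the longest-living point of substream~$i$), while every later iteration of the reordering loop only rewrites $e_{i''}$ and reshuffles substreams for indices $i''>i$, never touching substream~$i$. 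Hence $e_i=E(a)$ at the end of the step.

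The insertion-loop case is where the work is. Here $\AddAttractionPoint(p,i)$ is called with $E(p)>e_i$, so by Invariant~\ref{in:Ei_longest_time} $p$ becomes the strictly longest-living point of substream~$i$; in particular it is never the point $a_{\text{old}}$ removed on \cref{li:sizeofA}, so $p$ stays in~$A_i$. The substream maximum of~$i$ is now $E(p)$, and since the step re-establishes Invariant~\ref{in:Ei_longest_time} for substream~$i$, it must set $e_i=E(p)=E(a)$ — provided the reordering phase performs no move affecting substream~$i$. This follows from the sub-claim that $E(p)\leq e_{i'}$ for every $i'<i$, which is immediate: the insertion loop returns as soon as $p$ is placed, and $p$ is placed at iteration~$i$, so at each earlier iteration $i'$ the guard $E(p)>e_{i'}$ on \cref{li:E>ei} was false. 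Combining this with Invariant~\ref{in:orderEi} and $E(p)>e_i\geq e_{i+1}$, after $p$ is inserted the substream maxima still form a non-increasing chain $e_1\geq\cdots\geq e_{i-1}\geq E(p)>e_{i+1}\geq\cdots$, so no reordering move occurs, the recomputation on \cref{li:orderEi} for index~$i$ evaluates to $E(p)$ (realized by $p\in R_i$, while each lower substream contributes only $e_j<E(p)$), and $e_i=E(p)=E(a)$ holds at the end of the step.

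I expect the main obstacle to be precisely this coupling with the reordering phase and with Invariants~\ref{in:Ei_longest_time}--\ref{in:orderEi}: the statement is \emph{not} true ``locally'' at the assignment $A_i\gets A_i\cup\{a\}$ in the insertion case, since $e_i$ still carries its smaller pre-update value at that instant, so one is forced to track the update step to completion and fold the argument into the invariant-maintenance induction. The only finicky point beyond that is tie-breaking in $\AddAttractionPoint$'s deletion step, which is harmless for the reason noted above (a deleted newcomer's representative inherits its expiration into~$O_i$).
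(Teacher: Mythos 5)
Your proof is correct and follows essentially the same approach as the paper's: a case split between the reordering-loop insertions (\cref{li:AddToRA}/\ref{li:AddOtoA}), which are immediate from \cref{li:orderEi}--\ref{li:longestItem}, and the insertion-loop case (\cref{li:addToA}), where one tracks the update step to the end and verifies that the reordering phase leaves substream~$i$ alone because $E(p)\leq e_{i'}$ for $i'<i$ (the guard failed there) while $E(p)>e_i\geq e_{i+1}\geq\cdots$. Your write-up is more explicit than the paper's terse argument about why the reordering phase leaves $e_i=E(a)$ intact, but the decomposition and the key facts used are the same; one small caution is that you use ``lower substream'' to mean a substream of \emph{larger} index, which is the opposite of the paper's usage and could confuse a reader of the combined text.
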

\begin{proof}
    If $a$ is added to $A_i$ in line~\ref{li:AddToRA} or~\ref{li:AddOtoA} of \Cref{alg:updatek-center}, the claim holds by line~\ref{li:longestItem}.
    Otherwise, $a$ is added to~$A_i$ in line~\ref{li:addToA}. Let this time be $T$. Then, $e_i^{T-1}<E(a)$. $\AddAttractionPoint$ does not change the points of the substreams $j$ for $j\in [i-1]$ and it holds that $a\in A_i^T\cup R_i^T\cup O_i^T\subset A_i^{T-1}\cup R_i^{T-1}\cup O_i^{T-1}\cup \{a\}$. Hence, it follows that $e_j^T=e_j^{T-1}$ for all $j\in [i-1]$ and $e_i=E(a)$.
\end{proof}
    
\begin{lemma}\label{lem:spaceBoundARO}
    The size of each set $A_i$, $R_i$, and $O_i$ is at most $\maxK+1$ for any $i\in [\maxK]$.  
\end{lemma}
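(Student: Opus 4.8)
The plan is to bound the three sets separately. Bounding $|A_i|$ and $|R_i|$ is routine, and the entire difficulty is concentrated in bounding $|O_i|$.

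\emph{The sets $A_i$ and $R_i$.} I would first show $|A_i|\le\maxK+1$ by induction over the update steps. The set $A_i$ grows only inside a call to $\AddAttractionPoint$, which inserts exactly one point and then, if this brings $|A_i|$ to $\maxK+2$, deletes the shortest-living attraction point (\cref{li:sizeofA}--\ref{li:delALeadToOrphans}), restoring $|A_i|\le\maxK+1$; note that $\AddAttractionPoint$ is invoked both from \cref{li:addToA} and from the reorder loop (\cref{li:AddToRA},~\ref{li:AddOtoA}), but in each case it adds a single point, while every other operation ($\DeleteCoveredPoints$, the reorder deletion in \cref{li:deleteFromRA}, and item expiration) only removes points from $A_i$. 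For $R_i$ I would maintain the invariant that $R_i=\{R(a)\mid a\in A_i\}$ with $a\mapsto R(a)$ injective, giving $|R_i|=|A_i|\le\maxK+1$: injectivity follows from Invariant~\ref{in:distAA} (attraction points are pairwise $>2\gamma$ apart while each representative is within $2\gamma$ of its attraction point) together with the fact that an arriving point updates the representative of at most one attraction point before returning (\cref{li:distToAttr}); preservation is immediate from the code, since $\AddAttractionPoint$ adds $a$ with $R(a)=a$, \cref{li:updateRepr} swaps the old representative for the new one, and whenever $a$ leaves $A_i$ its representative simultaneously leaves $R_i$.

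\emph{The set $O_i$: structural input.} This is the heart of the proof, and it adapts the orphan-counting argument of~\cite{Cohen-AddadSS16} to the multi-substream setting. The key structural fact is a strengthening of \cref{obs:long}: whenever a point $a$ is added to $A_i$ --- whether via \cref{li:addToA} or moved down by the reorder loop --- it satisfies $E(a)=e_i$, hence by Invariants~\ref{in:Ei_longest_time}--\ref{in:orderEi} it holds that $E(a)\ge E(q)$ for every $q$ currently stored in $A_i\cup R_i\cup O_i$; moreover, the very condition that allows the addition (\cref{li:E>ei}, respectively the minimality of $j$ in \cref{li:minj}) forces $a$ to \emph{strictly} outlive every point then present in substream $i$. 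In particular, a newly added attraction point is never the shortest-living one in $A_i$ (unless it is the only one), so it is never the one deleted in \cref{li:delALeadToOrphans} in the same step.

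\emph{The set $O_i$: the count.} Using this, I would argue: (i) each attraction point $a$ ever removed from $A_i$ contributes at most one orphan, namely its last representative $R^{\infty}(a)$, and distinct removed attraction points give distinct orphans (by Invariant~\ref{in:distAA} and the fact that a point can be the representative of at most one attraction point), so $|O_i^T|$ is at most the number of removed attraction points whose last representative is still active at time $T$; (ii) if such an orphan $o=R^{\infty}(a)$ is still in $O_i^T$, then fewer than $\maxK+1$ distinct attraction points were added to $A_i$ in the time window between the moment $o$ entered $O_i$ and $T$ --- because $o$ stays in $O_i$ continuously over that window, every attraction point added during it strictly outlives $o$, and once $\maxK+1$ such points have been added the set $A_i$ consists of $\maxK+1$ points each strictly outliving $o$, so the pruning in \cref{li:boundSizeO} (whose threshold $E(a_{\text{old}})$ then still exceeds $E(o)$) removes $o$, a contradiction; (iii) since $|A_i|\le\maxK+1$, at most $\maxK+1$ attraction points can be "among the most recently added" at any time, and combining this with (i)--(ii) yields $|O_i^T|\le\maxK+1$.

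\emph{Main obstacle.} The delicate part is making steps (ii)--(iii) fully rigorous. One must fix a linear order on attraction points by addition time (handling update steps in which both the main loop and the reorder loop add an attraction point to the same substream), and then track, for a surviving orphan $o$, exactly when each later attraction point enters and leaves $A_i$, when each orphan is created, pruned via \cref{li:boundSizeO}, removed by a $\DeleteCoveredPoints$ call issued from a lower substream, or expired, and verify that at the relevant pruning step the comparison $E(o)<E(a_{\text{old}})$ genuinely holds. The subtle point here is that a representative $R(a_j)$ of an attraction point $a_j$ may later be updated to a point that outlives the attraction point which was longest-living when $a_j$ was added, so one cannot simply say "added later $\Rightarrow$ longer-living than all orphans"; one has to use the stronger "strictly outlives everything \emph{then present}" property and chase it through all intermediate changes to $A_i$, $R_i$, $O_i$. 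The invariants on $e_i$ (\ref{in:Ei_longest_time}--\ref{in:orderEi}) are the bookkeeping device linking "longest-living point currently in substream $i$" to the stored value $e_i$, which should keep this chase manageable.
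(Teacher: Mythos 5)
Your treatment of $A_i$ and $R_i$ is fine and matches the paper. The problem is step (ii) of your orphan count. You claim that once $\maxK+1$ attraction points have been added to $A_i$ after $o$ entered $O_i$, ``the set $A_i$ consists of $\maxK+1$ points each strictly outliving $o$'', so that the pruning in \cref{li:boundSizeO} fires with threshold above $E(o)$. This does not follow: attraction points can leave $A_i$ through $\DeleteCoveredPoints$ (triggered from a lower substream), through the reorder loop (\cref{li:deleteFromRA}), or by expiring, so $\maxK+1$ additions during the window need not ever bring $|A_i|$ above $\maxK$, and \cref{li:boundSizeO} need never execute at all. Even when it does execute, its threshold is $E(a_\text{old})$ for the shortest-living point of the \emph{entire} current $A_i$, which may still contain attraction points predating $o$'s entry into $O_i$ that expire before $o$ does (a representative can far outlive the attraction point that created it), so the threshold need not exceed $E(o)$. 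And even granting (ii), the combination in (iii) does not yield $\maxK+1$: orphans whose attraction points were already sitting in $A_i$ when the window opened are not controlled by counting additions during the window, and the natural bookkeeping gives only $(\maxK+1)+\maxK$.

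The paper closes the argument with a different accounting: it counts not additions to $A_i$ but the deletions from $A_i$ that actually create orphans (those in \cref{li:delALeadToOrphans} or by expiration). Using \cref{obs:long} it shows that among these attraction points, deletion order, insertion order and expiration order all coincide; consequently any orphan other than the $\maxK+1$ most recently created ones belongs to an attraction point that had already left $A_i$ when the most recently deleted attraction point $\widehat{a}_1$ was inserted, so its expiration is below $E(\widehat{a}_1)$, and it is flushed by \cref{li:boundSizeO} at the moment $\widehat{a}_1$ is deleted (or has already expired, if $\widehat{a}_1$ left $A_i$ by expiring). Deletions of the other kinds ($\DeleteCoveredPoints$, reorder) produce no orphans and therefore never enter the count. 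You would need to switch to this deletion-based ordering argument, or something equivalent, for your step (ii)--(iii) to close.
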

\begin{proof}
    Let $i\in [\maxK]$. Points are only added to~$A_i$ in $\AddAttractionPoint$. Hence, the size of $A_i$ is at most $\maxK+1$ for all $i\in[\maxK]$. Further, the set $R_i$ contains exactly one point for each point in $A_i$.
    
    To show that $O_i$ has size at most $\maxK+1$, we use similar arguments as in the proof of Lemma~7 in~\cite{Cohen-AddadSS16}. Let $T$ be the current time and let $\widehat{a}_w, \widehat{a}_{w-1}, \dots, \widehat{a}_1$ be the points that were removed from~$A_i$ in line~\ref{li:delALeadToOrphans} of~$\AddAttractionPoint$ or because they expired, sorted in ascending order of their expiration times. 
    Note that due to the choice of~$a_\text{old}$ in line~\ref{li:oldA} in $\AddAttractionPoint$, this is also in ascending order of their expiration times.
    Furthermore, due to~\Cref{obs:long}, this is the same order in which they were inserted into~$A_i$. 
    Then, the points that have been in $O_i$ until time~$T$ are $R^\infty(\widehat{a}_w), R^\infty(\widehat{a}_{w-1}), \dots, R^\infty(\widehat{a}_1)$.
    We show that all points~$R^\infty(\widehat{a}_{\maxK+j})$ with~$j\in\{2, 3\dots, w\}$ have been removed from~$O_i$ by time~$T$, which implies~$|O^T_i|\leq \maxK+1$.
    Since line~\ref{li:sizeofA} of~$\AddAttractionPoint$ enforces~$|A_i|\leq \maxK+1$, the point~$\widehat{a}_{\maxK+j}$ must have already been removed from~$A_i$ by the time~$\widehat{a}_1$ was inserted.
    Hence, $R^\infty(\widehat{a}_{\maxK+j})$ was added to $R_i$ before $\widehat{a}_1$ was added to $A_i$. Therefore, it holds that $E(R^\infty(\widehat{a}_{\maxK+j}))<E(\widehat{a}_1)$ by~\Cref{obs:long}. 
    If~$\widehat{a}_1$ was removed from $A_i$ because it expired, then $R^\infty(\widehat{a}_{\maxK+j})$ has also expired at time~$T$ and hence is not contained in $O^T_i$. Otherwise, it was discarded in line~\ref{li:delALeadToOrphans} of $\AddAttractionPoint$ at some time~$t \leq T$.
    Then, line~\ref{li:boundSizeO} of $\AddAttractionPoint$ was called with $a_{\text{old}}=\widehat{a}_1$.
    Hence, $R^\infty(\widehat{a}_{\maxK+j})$ was removed from $O_i$ at~$t$ at the latest.
\end{proof}

\subsection{Correctness}
To show correctness, we begin with proving that Invariants~(1)-(5) hold.

The next observation shows that items are moved only from substreams with higher index to substreams with lower index. This is a crucial property to get a bound on the approximation factor.
\begin{observation}\label{obs:MoveOnlyLowerStream}
    Let $q\in A_i^t\cup R_i^t\cup O_i^t$ for some time~$t$.
    For all~$T>t$ and~$i < j \leq K$, it holds that~$q \notin A_j^T\cup R_j^T\cup O_j^T$.
\end{observation}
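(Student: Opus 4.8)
\textbf{Proof plan for Observation~\ref{obs:MoveOnlyLowerStream}.}
The plan is to argue by examining every operation in \Cref{alg:updatek-center} and its subroutines that can change which substream contains a point, and to show that none of them can move a point to a substream with a \emph{higher} index than the one it currently occupies. Since the statement quantifies over all times $T > t$, I would phrase the argument inductively on the number of stream updates (insertions and expirations) between time $t$ and time $T$: it suffices to show that a single update never relocates a point $q$ from substream $i$ to substream $j$ with $j > i$. Expirations only remove points (and move $R(a)$ from $R_i$ to $O_i$ within the \emph{same} substream $i$), so they are harmless. Thus the whole burden is on the insertion procedure.

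For an insertion of a new point $p$, I would enumerate the places where the membership of an already-stored point $q$ can change. First, $\DeleteCoveredPoints(p,i)$ and the representative-update on \cref{li:updateRepr}: inspecting the subroutine, the sets $F$ and $G$ are taken from $A_j, O_j$ with $j = i+1,\dots,K$, i.e.\ strictly higher indices, and their members are removed from substream $j$ and (conceptually) re-attached to substream $i$ via later calls — crucially the target index $i$ is \emph{smaller} than $j$, so this only moves points \emph{down}. Likewise $\AddAttractionPoint(a,i)$ calls $\DeleteCoveredPoints(a,i)$, again only affecting substreams of index $>i$ and pulling them down to $i$. The only subtlety is that a point pulled from $A_j$ (or $R_j$, $O_j$) into substream $i$ becomes a new attraction point there, which itself triggers $\DeleteCoveredPoints$ again; but since every such nested call has target index $\le i < j$, no point is ever pushed to a larger index. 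Second, the reordering loop (\cref{li:beginEnsureOrder}--\ref{li:AddOtoA}): in iteration $i$ it picks the longest-living point $q$ among substreams $j \ge i$ and, if that $j>i$, moves $q$ from substream $j$ down to substream $i$. Here too the target index $i$ is at most the source index $j$, so this is a downward move.

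Putting these observations together, every single elementary operation either leaves $q$'s substream index unchanged or strictly decreases it; hence the index of the substream containing $q$ is non-increasing over time, and once $q$ leaves substream $i$ it can only reappear in a substream of index $< i$, never in substream $j$ with $j > i$. I expect the main obstacle to be purely bookkeeping: making sure the enumeration of membership-changing operations is exhaustive (in particular that the reordering loop and the cascaded calls of $\DeleteCoveredPoints$ inside $\AddAttractionPoint$ are both accounted for), and being careful that the case $j = i$ — where a point is re-added to the \emph{same} substream as a fresh attraction point during reordering — is correctly classified as "index unchanged" rather than a move. No deep geometric or combinatorial insight is needed; the statement is structural and follows from the monotone way the algorithm is written.
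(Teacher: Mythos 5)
Your proposal is correct and follows essentially the same route as the paper's proof: an induction over updates showing that the only operation that relocates a stored point between substreams is the reordering loop (\cref{li:beginEnsureOrder}--\ref{li:AddOtoA}), and that it always moves points to a strictly smaller index. One small correction: the comments in \DeleteCoveredPoints notwithstanding, the sets $F$ and $G$ are simply deleted from the data structure, not re-attached to substream $i$ via later calls; they are ``conceptually'' covered by $p$ only for the purposes of the approximation analysis (\Cref{lem:distToAStep} and \Cref{lem:ROinduction}), not for storage. This makes your case analysis over-conservative rather than wrong --- a deleted point is a fortiori absent from every substream of larger index --- so the conclusion stands; the paper's own proof sidesteps this by only tracking the case where $q$ leaves substream $i$ yet remains stored somewhere, which forces the removal to occur on \cref{li:deleteFromRA} or \cref{li:deleteFromO}.
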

\begin{proof}
    Let $q\in A_i^t\cup R_i^t\cup O_i^t$ be such that $q\notin A_i^{t+1}\cup R_i^{t+1}\cup O_i^{t+1}$ and $q\in \bigcup_{\ell\in [\maxK]} A_\ell^{t+1}\cup R_\ell^{t+1}\cup O_\ell^{t+1}$. Then, $q$ is removed from $A_i\cup R_i\cup O_i$ in line~\ref{li:deleteFromRA} or \ref{li:deleteFromO} of \Cref{alg:updatek-center}. By lines~\ref{li:minj}, \ref{li:AddToRA}, and \ref{li:AddOtoA} of \Cref{alg:updatek-center}, there exist an index $j\leq i$ such that $q$ is added to $A_j\cup R_j\cup O_j$. Hence, the claim follows by induction. 
\end{proof}

\begin{observation}\label{obs:Invariants123}
    Invariants (1)-(3) hold during the algorithm.
\end{observation}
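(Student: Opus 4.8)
The plan is to prove invariants~(1)--(3) together by induction over the sequence of events, an event being either an item insertion (processed by \Cref{alg:updatek-center} and its subroutines) or the simultaneous expiration of all stored points whose expiration time equals the current step. The base case is immediate: after initialization all sets are empty and all $e_i=0$, so (1) holds, (2) reads $0\ge\cdots\ge 0$, and (3) is vacuous. In the inductive step I would walk through each event and check every line at which some $A_i$, $R_i$, $O_i$, or $e_i$ changes, using \Cref{obs:long} (that $e_i=E(a)$ at the moment $a$ enters $A_i$) as the basic bookkeeping fact for attraction-point insertions.

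Invariant~(3) should be quickest: a representative is (re)assigned only when a new attraction point is created, where $R(a)=a$, or in \cref{li:updateRepr}, where $R(a)$ becomes the incoming point $p$ with $d(p,a)\le 2\gamma$ by the guard in \cref{li:distToAttr}; all other modifications merely delete points, and an attraction point is always deleted together with the deletion or orphan-demotion of its representative, so every surviving $a\in\bigcup_i A_i$ keeps $d(a,R(a))\le 2\gamma$.

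For (1) and (2) the content sits in the reorder loop (from \cref{li:beginEnsureOrder} onward in \Cref{alg:updatek-center}), since placing $p$ into a substream and the calls to $\DeleteCoveredPoints$ it triggers may leave some $e_i$ stale or violate $e_i\ge e_{i+1}$ among the lower-priority substreams. I would analyze the loop by an inner induction on $i$: iteration~$i$ sets $e_i$ to the maximum expiration time over $\bigcup_{j\ge i}R_j\cup O_j$ and, unless the realizing point $q$ already lies in substream~$i$, moves $q$ there via $\AddAttractionPoint$. The crucial point is that this call cannot break $E(q)=e_i$: the only point it removes from substream~$i$ is the shortest-living attraction point $a_{\text{old}}$ of \cref{li:oldA} (whose representative is demoted to an orphan, not lost), plus orphans not outliving $a_{\text{old}}$; since $q$ is the longest-living point of the substream, $q\neq a_{\text{old}}$ and $q$ survives, so afterwards the maximum expiration of substream~$i$ equals $e_i$, which gives~(1) for this~$i$. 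Moreover $\DeleteCoveredPoints$ affects only substreams $j>i$, hence neither alters substream~$i$ nor overwrites earlier iterations, and $\bigcup_{j\ge i+1}R_j\cup O_j$ at the start of iteration~$i+1$ lies inside $\bigcup_{j\ge i}R_j\cup O_j$ at the start of iteration~$i$ with its maximum element $q$ removed, so $e_{i+1}\le e_i$, which is~(2). (When $p$ is instead stored into a substream~$i$ during the first loop, the fact that $p$ was rejected by substreams $1,\dots,i-1$ forces $E(p)\le e_{i-1}$, which keeps~(2) intact there too.)

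For an expiration at time~$T$, every stored point with expiration~$T$ is removed --- an expiring attraction point leaves $A_i$, its representative moves to $O_i$ and then also vanishes if it too expires --- and each $e_i$ is recomputed; so (1) holds by recomputation and (3) is untouched, while for (2) note that $e_i$ can drop only if the longest-living point of substream~$i$ had expiration~$T$, i.e.\ $e_i=T$ beforehand, in which case~(2) forces every point of substreams $i,\dots,\maxK$ to expire by time~$T$, so all of them become empty and $e_i=\cdots=e_{\maxK}=0\le e_{i-1}$. I expect the main obstacle to be the reorder-loop analysis: one must verify that pulling the globally longest-living stored point up to substream~$i$, together with the cascade of deletions this sets off through $\AddAttractionPoint$ and $\DeleteCoveredPoints$ in substreams $j>i$, simultaneously repairs $e_i$ and the ordering and never discards the point certifying $e_i$ --- exactly where \Cref{obs:long} and the ``$a_{\text{old}}$ is the shortest-living attraction point'' choice in \cref{li:oldA} are essential.
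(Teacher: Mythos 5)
Your proof is correct and takes essentially the same approach as the paper's (very terse) proof, which appeals to \Cref{obs:MoveOnlyLowerStream} and the reorder loop (\cref{li:orderEi}, \cref{li:AddToRA}, \cref{li:AddOtoA}) for Invariants~(1)--(2) and to the two sites where $R(a)$ is assigned for Invariant~(3). You flesh this out into a careful event-by-event induction — including the observation that $\AddAttractionPoint$ cannot evict the realizing point $q$ (since $q$ is longest-living, not $a_{\text{old}}$), that $\DeleteCoveredPoints(\cdot,i)$ only touches substreams $j>i$, and that expirations either leave $e_i$ unchanged or zero out all of substreams $i,\dots,\maxK$ — which is more explicit than the paper but reaches the same conclusion along the same structural lines.
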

\begin{proof}
    Invariant~\ref{in:Ei_longest_time} and Invariant~\ref{in:orderEi} follow by \Cref{obs:MoveOnlyLowerStream} and lines~\ref{li:orderEi}, \ref{li:AddToRA}, and \ref{li:AddOtoA} of \Cref{alg:updatek-center}.
    To prove Invariant~\ref{in:distAR(a)} is to show that $d(a, R(a))\leq 2\gamma$ for all $a\in \bigcup_{i\in [\maxK]}A_i$. This holds since $R(a)$ is set in line~\ref{li:updateRepr} of \Cref{alg:updatek-center} or in $\AddAttractionPoint$.
\end{proof}

\begin{observation}\label{obs:distROtoA}
    Let $q\in R_i\cup O_i$. Then, for all $a\in \bigcup_{j\in [i-1]}A_j$, it holds that $d(q, a)>2\gamma$.
\end{observation}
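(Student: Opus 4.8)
The plan is to prove \Cref{obs:distROtoA} by induction over the run of the algorithm, carried out simultaneously with Invariants~\ref{in:Ei_longest_time}--\ref{in:distLongRO}, since the delicate case below appeals to Invariant~\ref{in:distLongRO}. Fix a configuration and a pair $q\in R_i\cup O_i$, $a\in A_j$ with $j<i$. By \Cref{obs:MoveOnlyLowerStream} the substream index of $q$ is non-increasing, so $q$ entered substream~$i$ at some earlier moment --- either as a newly inserted point processed by \Cref{alg:updatek-center}, or by being moved down during its reorder loop --- and has remained in substream~$i$ ever since. An attraction point never changes its substream, so $a$ was added to $A_j$ at some earlier moment (on \cref{li:addToA}, \cref{li:AddToRA}, or \cref{li:AddOtoA}) and has stayed in $A_j$ ever since. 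Let $\tau$ be whichever of these two events occurs later. Since no operation moves a point in the metric space, $d(q,a)$ is constant, so it suffices to establish $d(q,a)>2\gamma$ at $\tau$; at that moment either $q$ is just placed into substream~$i$ while $a$ already lies in $A_j$, or $a$ is just added to $A_j$ while $q$ already lies in substream~$i$.

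\emph{First event is $q$'s placement.} If $q$ is placed into substream~$i$ as a newly inserted point, then in \Cref{alg:updatek-center} it was processed through substream~$j<i$ without being absorbed, so the test on \cref{li:distToAttr} failed for every attraction point in $A_j$; as $A_j$ is untouched while $q$ is processed in substreams $1,\dots,j$ and contains $a$ at that moment, $d(q,a)>2\gamma$. If instead $q$ is moved down into substream~$i$ during the reorder loop, then just before the move it was a representative or orphan of some substream~$j^\ast$ with $j^\ast>i>j$; applying the inductive form of \Cref{obs:distROtoA} to that configuration (with $a\in A_j$ and $j<j^\ast$) gives $d(q,a)>2\gamma$, and the move preserves the distance.

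\emph{First event is $a$'s addition to $A_j$.} Here $q$ already lies in substream~$i$, so $E(q)\le e_i\le e_j$ by Invariants~\ref{in:Ei_longest_time} and~\ref{in:orderEi}. If $a$ is added to $A_j$ as a newly inserted point, it entered through the test $E(a)>e_j$ on \cref{li:E>ei}, whence $E(q)<E(a)$; the call $\DeleteCoveredPoints(a,j)$ issued by $\AddAttractionPoint(a,j)$ then processes substream~$i>j$ and would delete $q$ --- as an orphan, or together with its attraction point if $q=R(\alpha)$ --- whenever $d(q,a)\le 2\gamma$, so survival of $q$ forces $d(q,a)>2\gamma$. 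If $a$ is moved down into $A_j$ during the reorder loop, it was picked as a point of maximum expiration among all substreams of index $\ge j$; since $q$ sits in a substream of index $i\ge j$, we get $E(q)\le E(a)$. If $E(q)<E(a)$, the same $\DeleteCoveredPoints(a,j)$ argument applies. If $E(q)=E(a)$, then just before the move $a$ is a representative or orphan in some substream~$j^\ast$, and Invariant~\ref{in:distLongRO}, applied to the two representatives/orphans $q$ and $a$ which have equal expiration and lie in substreams $i$ and $j^\ast$, yields $d(q,a)>2\gamma$; again the move preserves the distance.

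I expect the main obstacle to be making the ``just before the move'' appeals rigorous: they require Invariants~\ref{in:Ei_longest_time}--\ref{in:distLongRO} and \Cref{obs:distROtoA} itself to hold not only between updates but after every partial step of the reorder loop and of the subroutines $\AddAttractionPoint$ and $\DeleteCoveredPoints$, so the induction must be carried out over the run of the algorithm \emph{and}, inside a single update, over the reorder-iteration index. One also has to check that the remaining operations --- item expirations, the orphaning of a representative when its attraction point leaves $A_i$, and every $\DeleteCoveredPoints$ call --- never create a fresh violating pair, which holds because each only deletes a stored point or reclassifies one within the same substream without changing its position or substream index. The genuinely delicate point is the equality case $E(q)=E(a)$, where no strict inequality inside $\DeleteCoveredPoints$ bites and one must fall back on Invariant~\ref{in:distLongRO} for the pre-move configuration --- which is exactly why \Cref{obs:distROtoA} and Invariant~\ref{in:distLongRO} have to be proved together.
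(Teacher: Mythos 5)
Your proof is correct and follows essentially the same inductive case analysis as the paper's: (a) when $q$ is newly inserted, \cref{li:distToAttr} forces $d(q,a)>2\gamma$; (b) when $a$ is added to $A_j$ after $q$ is already in substream $i$, $\DeleteCoveredPoints(a,j)$ would remove $q$ if it were too close; and (c) when $q$ is moved down from a higher substream, the claim follows by induction. The one place you go beyond the paper's terse argument is the sub-case where $a$ arrives in $A_j$ via the reorder loop (\cref{li:AddToRA}/\cref{li:AddOtoA}) rather than as a fresh stream item: the paper asserts $E(a) > e_j^t \geq e_i^t \geq E(q)$ uniformly, but for a moved $a$ one has $e_j = E(a)$ by \Cref{obs:long}, not $e_j < E(a)$, so the strict inequalities inside $\DeleteCoveredPoints$ do not immediately bite when $E(q)=E(a)$. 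Your resolution --- noting that the tie-breaking on \cref{li:minj} forces $a$'s source substream $j^\ast \le i$, and then invoking Invariant~\ref{in:distLongRO} on the pre-move pair --- is a genuine and necessary refinement, and your observation that the induction must therefore be run jointly with Invariants~\ref{in:Ei_longest_time}--\ref{in:distLongRO} and at the granularity of individual reorder iterations makes explicit what the paper leaves implicit.
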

\begin{proof}
    When~$q$ is first inserted into the stream, line~\ref{li:distToAttr} ensures that the claim holds. 
    When a new attraction point~$a$ is inserted at time~$t$ into~$A_j$ with~$j < i$, $\DiscardCoveredPoints$ is called.
    We have~$E(a) > e^t_j \geq e^t_i \geq E(q)$ by Invariant (1), so~$q$ is discarded if~$d(q,a)>2\gamma$ holds. 
    If~$q$ is moved to substream~$i$ from another substream~$\ell > i$, the claim already holds for all~$a \in \bigcup_{j \in [\ell-1]}A_j$ by induction.
\end{proof}

\begin{lemma}\label{lem:Invariants45}
    Invariants (4) and (5) hold during the algorithm. 
\end{lemma}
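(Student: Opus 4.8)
The plan is to establish Invariants~(\ref{in:distAA}) and~(\ref{in:distLongRO}) simultaneously, by induction over the operations of the algorithm, and in fact to treat them as loop invariants of \Cref{alg:updatek-center}: I would show that they are preserved after every line of \Cref{alg:updatek-center} (including the internal steps of $\DeleteCoveredPoints$ and $\AddAttractionPoint$), not only between consecutive updates. An expiration step only deletes points from the sets $A_i,R_i,O_i$ or relabels a representative as an orphan inside the same substream (so that $R_i\cup O_i$ is unchanged), hence both invariants are trivially preserved; all the work is in a single insertion, i.e., one call of \Cref{alg:updatek-center} on a new point~$p$. Throughout, I would rely on \Cref{obs:Invariants123} for Invariants~(1)--(3), on \Cref{obs:MoveOnlyLowerStream} (points migrate only to lower-indexed substreams), on \Cref{obs:long} ($e_i=E(a)$ at the moment $a$ joins $A_i$), and above all on \Cref{obs:distROtoA} ($d(q,a)>2\gamma$ whenever $q\in R_i\cup O_i$ and $a\in\bigcup_{j<i}A_j$).

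For Invariant~(\ref{in:distAA}): the only genuinely new attraction point created during an insertion is~$p$ itself, added at \cref{li:addToA}. Since the loop reached iteration~$i$, the coverage tests at \cref{li:distToRepr,li:distToAttr} failed in substreams $1,\dots,i$, so in particular $d(p,a)>2\gamma$ for every stored attraction point~$a$ of index at most~$i$; for attraction points in higher substreams, the ensuing call $\DeleteCoveredPoints(p,i)$ deletes every $a'\in A_{i'}$ with $i'>i$, $E(R(a'))<E(p)$ and $\min\{d(a',p),d(R(a'),p)\}\le2\gamma$, and the triggering condition $E(p)>e_i$ at \cref{li:E>ei} together with Invariants~(1)--(2) forces $E(R(a'))\le e_{i'}\le e_i<E(p)$, so the deletion fires and every surviving $a'$ has $d(a',p)>2\gamma$. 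In the reorder loop, a point~$q$ is moved from a substream~$j$ down to a substream~$i<j$: just before the move $q\in R_j\cup O_j$, so \Cref{obs:distROtoA} gives $d(q,a)>2\gamma$ for every $a\in\bigcup_{\ell<j}A_\ell\supseteq A_i$, and the subsequent $\DeleteCoveredPoints(q,i)$ removes the closer attraction points of index $>i$, using that $E(R(a'))\le e_i=E(q)$ after the recomputation at \cref{li:orderEi}.

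For Invariant~(\ref{in:distLongRO}): a new point enters some $R_m\cup O_m$ only when $p$ replaces an existing representative at \cref{li:updateRepr}, when $p$ becomes a fresh attraction point at \cref{li:addToA} (so $p=R(p)$), or when a point is relocated to a lower substream in the reorder loop. In each case I would show that this point is more than $2\gamma$ from every stored point of $\bigcup_m R_m\cup O_m$ that is comparable to it in the sense of~(\ref{in:distLongRO}). Comparability against a point of index at most~$i$ that outlives it is excluded by the failed test at \cref{li:distToRepr}; comparability against a point of index $>i$ that it outlives is excluded by $\DeleteCoveredPoints$, whose orphan branch deletes $o\in O_{i'}$ with $E(o)<E(p)$ and $d(o,p)\le2\gamma$, and whose attraction branch deletes $a'\in A_{i'}$ --- removing $R(a')$ from $R_{i'}$ --- whenever $E(R(a'))<E(p)$ and $d(R(a'),p)\le2\gamma$. \Cref{obs:MoveOnlyLowerStream} and \Cref{obs:long} ensure that a relocation never inverts the index order against the expiration order, so the reorder loop cannot create a bad pair; the remaining boundary case, where two stored points share a maximal expiration time, I would discharge using the minimal-index choice of~$q$ at \cref{li:minj} together with the inductive hypothesis applied to the pre-move configuration, in which the relocated point still sits in its higher substream and hence has index at least that of its partner.

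The step I expect to be the main obstacle is the cross-substream coordination inside a single insertion: one call of \Cref{alg:updatek-center} can update a representative and trigger a cascade of $\DeleteCoveredPoints$ through all higher substreams, and then run the reorder loop, which relocates points one substream at a time with further $\DeleteCoveredPoints$ cascades, so the induction must certify every intermediate configuration, and in particular that a relocation or a representative update never recreates a violation that an earlier step removed. The subtlest point is the mismatch between the \emph{strict} expiration comparisons used inside $\DeleteCoveredPoints$ and at \cref{li:distToRepr} and the \emph{non-strict} comparisons in the invariants: equal-expiration collisions have to be resolved by combining the minimal-index tie-break of the reorder loop with the inductive hypothesis (or, more simply, by assuming the expiration times are distinct, as is done for the insertion times in \Cref{sec:diameter}), and checking that this closes every such case is where the argument is most error-prone.
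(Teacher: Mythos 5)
Your proposal is correct and follows essentially the same route as the paper's proof: establish Invariant~(\ref{in:distAA}) by combining the failed coverage tests at \cref{li:distToRepr,li:distToAttr} (for lower-indexed substreams), \Cref{obs:distROtoA} (for relocated points), and the $\DeleteCoveredPoints$ call (for higher-indexed substreams), and establish Invariant~(\ref{in:distLongRO}) by the same split into the two comparability directions (i) and (ii) at the moment a point enters some $R_i\cup O_i$, with induction over relocations. Your observation about the mismatch between the strict comparisons in $\DeleteCoveredPoints$ and the non-strict comparisons in the invariants identifies a genuine tie-breaking subtlety that the paper's proof also leaves implicit; resolving it by assuming distinct expiration times, as you suggest, is adequate.
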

\begin{proof}
    We begin with proving that Invariant~\ref{in:distAA} holds, i.e., $d(a, a')>2\gamma$ for all distinct $a, a'\in \bigcup_{i\in [\maxK]}A_i$. 
    Let $a$ be an item that is added to $\bigcup_{i\in [\maxK]} A_i$ at time $t$. If it is added to $A_\ell$ in line~\ref{li:addToA} of \Cref{alg:updatek-center}, then by line~\ref{li:distToAttr} of \Cref{alg:updatek-center}, for all $a'\in \bigcup_{i\in [\ell]} A_i$, it holds that $d(a', a)>2\gamma$. 
    Otherwise, $a$ is added to $A_\ell$ in line~\ref{li:AddToRA} or \ref{li:AddOtoA} of \Cref{alg:updatek-center}. Hence, there exists an index $i>\ell$ such that $a\in R_i^{t-1} \cup O_i^{t-1}$. By \Cref{obs:distROtoA}, it holds that for all $a'\in \bigcup_{j\in [i-1]}A_j$ that $d(a, a')>2\gamma$.
    Further, in both cases $\DiscardCoveredPoints(a, \ell)$ is called when $a$ is inserted, which ensures that $d(a', a)>2\gamma$ holds for all $a'\in \bigcup_{i=\ell+1}^{\maxK}A_i$. 

    It remains to prove Invariant~\ref{in:distLongRO}, i.e., for all distinct $q_i\in R_i\cup O_i$ and $q_j\in R_j\cup O_j$ with $i\leq j$ and $E(q_i)\geq E(q_j)$, it holds that $d(q_i, q_j)>2\gamma$.
    Consider the time at which~$q_i$ is inserted into $R_i \cup O_i$. We show that $d(q_i,q_j) > 2\gamma$ holds for all~$q_j \in R_j \cup O_j$ with
    \begin{enumerate}[(i)]
        \item $j \leq i$ and $E(q_j) \geq E(q_i)$, or
        \item $j \geq i$ and $E(q_j) \leq E(q_i)$.
    \end{enumerate}
    When~$q_i$ is first inserted into the stream, then line~\ref{li:distToRepr} of \Cref{alg:updatek-center} ensures (i).
    If $q_i$ is moved from~$R_\ell \cup O_\ell$ for some~$\ell > i$ in line~\ref{li:AddToRA} or~\ref{li:AddOtoA}, then~(i) holds by induction.
    In both cases, $\DiscardCoveredPoints$ is called, which ensures~(ii).
\end{proof}

Next we show two properties that ensure that the solution to $\parK$-center has radius greater~$\gamma$.

\begin{lemma}\label{lem:sizeACertificate}
    Let $T$ be the current time. 
    If $|\bigcup_{i\in [\maxK]} A_i|>\parK$ or $\Abig>T$, then the solution for the metric $\parK$-center has radius greater than~$\gamma$.
\end{lemma}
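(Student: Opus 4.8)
The plan is to reduce both hypotheses to a single geometric witness: the existence of more than $\parK$ points that are simultaneously active at time $T$ and pairwise more than $2\gamma$ apart. First I would isolate the only genuinely ``clustering'' step as a one-line pigeonhole fact: if the active set $X$ contains $m>\parK$ points that are pairwise farther than $2\gamma$, then $OPT_\parK>\gamma$. Indeed, for any $C\subseteq\calM$ with $|C|=\parK$, two of these points, say $p$ and $q$, share a closest center $c\in C$, and then $d(p,c)+d(q,c)\ge d(p,q)>2\gamma$ forces $\max\{d(p,C),d(q,C)\}=\max\{d(p,c),d(q,c)\}>\gamma$; since $C$ is arbitrary, the claim follows. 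After this, everything is bookkeeping on the data structure, relying on two standing facts: a point currently stored is currently active (expired items are removed and never re-inserted, by the item-expiration handling), and every processed item has insertion time at most $T$.

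For the hypothesis $|\bigcup_{i\in[\maxK]}A_i|>\parK$: by \Cref{obs:MoveOnlyLowerStream} an item lies in at most one set $A_i$ at a time, so $\bigcup_{i\in[\maxK]}A_i$ is a disjoint union and contains more than $\parK$ distinct items, all active at $T$. Invariant~\ref{in:distAA}, established in \Cref{lem:Invariants45}, says these items are pairwise more than $2\gamma$ apart, and the pigeonhole fact above finishes this case.

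For the hypothesis $\Abig>T$: the key point is that $\Abig$ is only ever modified by the assignment $\Abig\gets\max\{\Abig,E(a_\text{old})\}$ in \cref{li:Abig} of \AddAttractionPoint, starting from $0$, so some execution of that line --- triggered by an insertion at a time $\le T$ --- had $E(a_\text{old})=\Abig>T$. I would then freeze the state of $A_i$ at the instant \cref{li:oldA} is evaluated during that execution. Reaching \cref{li:Abig} forces $|A_i|\ge\maxK+1$ at that instant (it is at most $\maxK+1$ after the preceding removal, and at least $\maxK+1$ since the enclosing test $|A_i|>\maxK$ held), and $a_\text{old}$ is an element of $A_i$ of minimum expiration time, so every element of $A_i$ there has expiration $\ge E(a_\text{old})>T$ and is hence active at time $T$. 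Moreover, Invariant~\ref{in:distAA} holds at that instant (again via \Cref{lem:Invariants45}, since the call to \DeleteCoveredPoints just above has restored pairwise $2\gamma$-separation for the enlarged $A_i$), so the items of $A_i$ are pairwise more than $2\gamma$ apart, and this persists to time $T$ because distances are time-invariant. As $\parK\le\maxK$, this is once more a set of more than $\parK$ active points pairwise farther than $2\gamma$, and the pigeonhole fact gives $OPT_\parK>\gamma$.

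The step I expect to require the most care is the $\Abig$ case: one must argue precisely that the ``certificate'' recorded in $\Abig$ was valid at the moment it was recorded, even though the algorithm may since have deleted some of those attraction points (when they got covered by points moved into lower substreams, via \DeleteCoveredPoints). Concretely, I need to pin down that at \cref{li:oldA} the invariant~\ref{in:distAA} already holds for the freshly enlarged $A_i$, and that $a_\text{old}$ is selected as a minimum-expiration element of the full set of $\ge\maxK+1$ attraction points present at that instant (rather than of the set after $a_\text{old}$ is removed); both are local facts about the order of operations inside \AddAttractionPoint. The clustering content itself is routine.
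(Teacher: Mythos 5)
Your proof is correct and takes essentially the same route as the paper's: Invariant~\ref{in:distAA} together with the pigeonhole/triangle-inequality argument handles the first hypothesis, and for the $\Abig$ case both arguments observe that when \cref{li:Abig} fired, more than $\maxK\geq\parK$ pairwise $2\gamma$-separated attraction points were all guaranteed to remain active until $E(a_\text{old})>T$. One small nit: the pairwise separation inside the enlarged $A_i$ is not established by $\DeleteCoveredPoints$ (that routine only removes covered points from \emph{higher} substreams); it comes from the admission test in \cref{li:distToAttr} resp.\ \Cref{obs:distROtoA}, but since you ultimately cite \Cref{lem:Invariants45} for Invariant~\ref{in:distAA} this does not affect the argument.
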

\begin{proof}
    By Invariant~\ref{in:distAA} and \Cref{lem:Invariants45}, all distinct points in $\bigcup_{i\in [\maxK]} A_i$ have pairwise distance greater than~$2\gamma$. Therefore, no two points in $\bigcup_{i\in [\maxK]} A_i$ can be in the same cluster of a solution with radius $\gamma$. Hence, if $|\bigcup_{i\in [\maxK]} A_i|>\parK$, there cannot exist a solution to $\parK$-center with radius at most~$\gamma$.
    If~$\Abig$ is set to~$E(a_\text{old})$ in line~\ref{li:Abig} of $\AddAttractionPoint$ for some substream~$i \in [K]$, then we have~$|A_i|>\maxK\geq\parK$ and all points in~$A_i$ will remain active until~$E(a_\text{old})$.
    Hence, if~$\Abig=E(a_\text{old})>T$, it follows that there is no solution with radius~$\leq\gamma$.
\end{proof}

\begin{lemma}\label{lem:ek+1Certificate}
    Let $T$ be the current time and $\parK\in[\maxK]$. If $t_{\parK}> T$, then the solution to the metric $\parK$-center has radius greater than $\gamma$.
\end{lemma}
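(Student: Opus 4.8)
The plan is to produce, at the query time~$T$, a set of $\parK+1$ pairwise $(>2\gamma)$-separated points that are all active, and then conclude exactly as in \Cref{lem:sizeACertificate} via pigeonhole and the triangle inequality. The key idea is that this set should be read off not from the data currently stored, but from the configuration of the substreams at the moment the responsible ``overflow'' item was inserted. First I would unpack the hypothesis: since $t_\parK$ is initialized to~$0$ and only ever increased in \cref{li:setT} of \Cref{alg:updatek-center}, the assumption $t_\parK^T > T$ produces an item~$p$ with $E(p)=t_\parK^T>T$ that, when inserted at time~$S(p)$, was rejected by every one of substreams $1,\dots,\parK$ (so that iteration~$\parK$ of the loop executed \cref{li:setT} setting $t_\parK\gets\max\{t_\parK,E(p)\}$). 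As this update is present at time~$T$ we get $S(p)\le T<E(p)$, so $p$ is active at time~$T$.

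Next I would extract the other $\parK$ points. Because $p$ is rejected by substreams $1,\dots,\parK$, iterations $1,\dots,i-1$ of the loop touch only \cref{li:setT} and leave $A_i,R_i,O_i$ unchanged, so the tests of \cref{li:distToRepr}, \cref{li:distToAttr}, and~\cref{li:E>ei} during iteration~$i$ are evaluated against the state of substream~$i$ immediately before~$p$'s arrival. Failure of \cref{li:E>ei} gives $E(p)\le e_i$, so by Invariant~\ref{in:Ei_longest_time} substream~$i$ holds a point of expiration~$e_i\ge E(p)$; since any attraction point~$a$ realizing this maximum has $R(a)\in R_i$ with $E(R(a))\ge E(a)$, one can choose a witness $q_i\in R_i\cup O_i$ with $E(q_i)=e_i\ge E(p)$. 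Failure of \cref{li:distToRepr}, applied to~$q_i$, then forces $d(p,q_i)>2\gamma$. Now assemble $\{p,q_1,\dots,q_\parK\}$: by Invariant~\ref{in:orderEi} we have $E(q_1)\ge\cdots\ge E(q_\parK)\ge E(p)$, so for $i<j$ Invariant~\ref{in:distLongRO} (which holds at the time just before~$p$'s insertion) applied to $q_i\in R_i\cup O_i$ and $q_j\in R_j\cup O_j$ gives $d(q_i,q_j)>2\gamma$; together with $d(p,q_i)>2\gamma$ this makes all $\parK+1$ points pairwise more than~$2\gamma$ apart, hence distinct. Each~$q_i$ was stored before~$p$ arrived, so $S(q_i)\le S(p)\le T$, while $E(q_i)\ge E(p)=t_\parK^T>T$; thus $p$ and every~$q_i$ is active at time~$T$. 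Since any feasible $\parK$-center solution must serve two of these $\parK+1$ active points from the same center, the triangle inequality forces its radius to exceed~$\gamma$, just as in \Cref{lem:sizeACertificate}.

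The step that I expect to demand the most care is the middle one: verifying that rejection of~$p$ by substream~$i$ genuinely certifies both $E(p)\le e_i$ and $d(p,q_i)>2\gamma$ for a witness~$q_i\in R_i\cup O_i$ that is still active at~$T$. This rests on two bookkeeping facts -- that for an overflow item the loop does nothing to substreams $1,\dots,\parK$ except update~$t_i$ before each iteration is reached (so the update procedure's tests really see the pre-insertion state), and that every attraction point is accompanied by a representative of at least the same expiration (so the bound $e_i\ge E(p)$ is witnessed inside $R_i\cup O_i$, not merely inside~$A_i$). The conceptual trap to avoid is trying to certify the claim using the points currently stored at time~$T$, where cascading deletions triggered by covering from lower substreams can in principle shrink the longest-lived stored point of a substream; looking back to time~$S(p)$ sidesteps this entirely, and everything else is a direct application of Invariants~\ref{in:orderEi} and~\ref{in:distLongRO} together with the pigeonhole bound already established in \Cref{lem:sizeACertificate}.
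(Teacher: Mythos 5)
Your proof is correct and takes essentially the same approach as the paper: both extract the overflow item~$p$ that last set~$t_\parK$, pick for each $i\in[\parK]$ a longest-lived witness $q_i\in R_i^{S(p)-1}\cup O_i^{S(p)-1}$ with $E(q_i)=e_i^{S(p)-1}$, and conclude that $\{p,q_1,\dots,q_\parK\}$ are active at time~$T$ and pairwise more than~$2\gamma$ apart via Invariants~\ref{in:orderEi}, \ref{in:distLongRO}, and the failed tests of \cref{li:distToRepr} and~\cref{li:E>ei}. The only differences are small bookkeeping details you spell out that the paper leaves implicit (why iterations before~$i$ leave the state unchanged, and why the maximizer of~$e_i$ can be taken in $R_i\cup O_i$ rather than $A_i$).
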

\begin{proof}
    Let $p$ be the item that caused $t_{\parK}$ to be set to~$E=E(p)$. 
    For all~$i \in [\parK]$, define $q_i\in R_i^{S(p)-1}\cup O_i^{S(p)-1}$ to be the point that expires at time $e_i^{S(p)-1}$, i.e., $e_i^{S(p)-1}=E(q_i)$. By Invariant~\ref{in:orderEi} and~\ref{in:distLongRO}, it holds that $d(q_i, q_j)>2\gamma$ for all $i\neq j\in [\parK]$. 
    Further, for all $i\in [\parK]$ it holds that $t_{\parK}=E(p)\leq e_i^{S(p)-1}=E(q_i)$ by line~\ref{li:E>ei} of \Cref{alg:updatek-center}. Therefore, $d(p, q_i)>2\gamma$ by line~\ref{li:distToRepr} of \Cref{alg:updatek-center}. 
    Hence, $q_1, \dots, q_k$ and $p$ are all still active at time~$T$ and have pairwise distance greater $2\gamma$. It follows that the solution to the metric $\parK$-center at time~$T$ has radius greater than~$\gamma$.
\end{proof}

To handle the case that a solution of radius $\gamma$ exists, we prove some distance bounds between different points of the stream.

\begin{lemma}\label{lem:distToAStep}
    Let $R^{\infty}(a)$ be the last representative point of an attraction point $a$. Further, let $t$ be the time and~$i$ the index such that $R^{\infty}(a)\in R_i^{t-1}\cup O_i^{t-1}$ and $R^{\infty}(a)\notin\bigcup_{j\in[\maxK]} R^{t}_j\cup O^{t}_j$. Then one of the following holds:
    \begin{enumerate}[(i)]
        \item $t=E(R^{\infty}(a))$, 
        \item $\Abig^T\geq E(R^{\infty}(a))$ for all $t\leq T$, or
        \item there is an~$\ell < i$ and a point $p\in R_\ell^{t}\cup O_\ell^{t}$ such that $E(R^{\infty}(a)) \leq E(p)$ and $d(a, p)\leq 4\gamma$.
    \end{enumerate}
\end{lemma}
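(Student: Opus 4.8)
The plan is to trace exactly when and how $R^\infty(a)$ can leave the union $\bigcup_j (R_j \cup O_j)$, and show the three listed outcomes exhaust the possibilities. A point of the form $R(a)$ or an orphan can disappear from substream $i$ for only a few reasons: it expires (giving case (i)); it is removed in \cref{li:boundSizeO} of \texttt{AddAttractionPoint} when $|A_i| > \maxK$ (which is exactly when $\Abig$ gets set, so we aim for case (ii)); it is removed by \texttt{DeleteCoveredPoints} because some newly inserted attraction point in a lower substream $\ell < i$ is within $2\gamma$ of $a$ or of $R^\infty(a)$ (aiming for case (iii)); or it is relocated to a \emph{lower} substream during the reordering phase (\cref{li:AddToRA}/\cref{li:AddOtoA}) — but by \Cref{obs:MoveOnlyLowerStream} that keeps it in the union, so this does not actually count as a departure, and we would only need to handle it by induction on the substream index.

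First I would fix $t$ and $i$ as in the statement, so $R^\infty(a) \in R_i^{t-1} \cup O_i^{t-1}$ but $R^\infty(a) \notin \bigcup_j (R_j^t \cup O_j^t)$. Since $R^\infty(a)$ is the \emph{last} representative of $a$ (or an orphan derived from it), it is never again updated, so its departure at time $t$ is genuine. I would case on the line of the algorithm responsible. If $t = E(R^\infty(a))$, we are in case (i). Otherwise the departure is a deletion before expiration. If it happens in \cref{li:boundSizeO}: that line runs only inside the $|A_i| > \maxK$ branch of \texttt{AddAttractionPoint}, which on the next line sets $\Abig \gets \max\{\Abig, E(a_\text{old})\}$ where $a_\text{old}$ is the attraction point just deleted; one checks that the orphans surviving \cref{li:boundSizeO} are exactly those with $E(o) \geq E(a_\text{old})$, so $R^\infty(a)$ being deleted means $E(R^\infty(a)) < E(a_\text{old}) \leq \Abig$, and since $\Abig$ is monotone non-decreasing and $E(a_\text{old})$ is a valid future time (all points of that oversized $A_i$ stay active until then, by \Cref{lem:sizeACertificate}'s reasoning), we get $\Abig^T \geq E(R^\infty(a))$ for all $T \geq t$, which is case (ii).

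The remaining possibility is deletion inside \texttt{DeleteCoveredPoints}$(p,\ell)$ for some newly inserted or relocated point $p$ entering substream $\ell$. By the loop in \texttt{DeleteCoveredPoints} we have $\ell < i$ (it only deletes from substreams with \emph{higher} index than $\ell$), and $R^\infty(a)$ is deleted because either it is an orphan with $d(R^\infty(a), p) \leq 2\gamma$ and $E(R^\infty(a)) < E(p)$, or it is $R(a')$ for some $a' \in A_i$ in the set $F$, meaning $d(a', p) \leq 2\gamma$ or $d(R(a'), p) \leq 2\gamma$ with $E(R(a')) < E(p)$; in the latter subcase $a' = a$ and, using Invariant~\ref{in:distAR(a)} (i.e.\ $d(a, R(a)) \leq 2\gamma$) and the triangle inequality, $d(a, p) \leq d(a, R^\infty(a)) + d(R^\infty(a), p) \leq 2\gamma + 2\gamma = 4\gamma$, or directly $d(a,p) \leq 2\gamma \leq 4\gamma$; in the orphan subcase $R^\infty(a)$ itself plays the role of a point within $2\gamma$ of $p$, and since an orphan $o = R^\infty(a)$ satisfies $d(a, o) \leq 2\gamma$ at the time it was a representative, again $d(a, p) \leq 4\gamma$. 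In all of these, $E(R^\infty(a)) \leq E(p)$ and $p \in R_\ell^t \cup O_\ell^t$ (as $p$ is added as a representative into $R_\ell$ when it becomes an attraction point, or is the point just relocated into substream $\ell$), giving case (iii).

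\textbf{The main obstacle} I anticipate is the bookkeeping around relocations in the reorder phase (\cref{li:beginEnsureOrder}--\cref{li:AddOtoA}): when $R^\infty(a)$ is moved down to a lower substream it is \emph{not} deleted from the union, so strictly speaking the hypothesis ``$R^\infty(a) \notin \bigcup_j R_j^t \cup O_j^t$'' is not yet triggered — but a subtlety is that during this move, \texttt{AddAttractionPoint} is called on the moved point and could itself evict \emph{other} representatives/orphans, and one must be careful that the point we are tracking, $R^\infty(a)$, is the one doing the moving, not the one being evicted; also one must confirm that a relocated point lands in $R_\ell$ (so case (iii)'s ``$p \in R_\ell^t \cup O_\ell^t$'' holds with $p = $ the relocated point, not with $p = R^\infty(a)$ unless $R^\infty(a)$ is relocated). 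Handling this cleanly likely requires phrasing the argument as: follow $R^\infty(a)$ through possibly several relocations (each strictly decreasing its substream index, hence finitely many by \Cref{obs:MoveOnlyLowerStream}), and apply the case analysis at the \emph{final} moment it leaves the union altogether; an induction on the substream index, as hinted in \Cref{obs:distROtoA}, packages this neatly.
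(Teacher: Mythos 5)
Your case analysis correctly covers the paper's cases (a), (b), and (c) — expiration, deletion via \cref{li:boundSizeO}, and deletion inside \texttt{DeleteCoveredPoints} — and your handling of each (in particular the monotonicity argument for $\Abig$ and the applications of the triangle inequality with Invariant~\ref{in:distAR(a)}) matches the paper. You also correctly observe that relocation during the reorder phase is not a genuine departure from $\bigcup_j(R_j\cup O_j)$ and must be tracked through, eventually applying the case analysis at the moment $R^\infty(a)$ truly disappears.

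However, there is a fourth departure mode that your enumeration misses: the \emph{representative update} at \cref{li:updateRepr}. If $R^\infty(a)$ has been relocated (via \cref{li:AddToRA} or \cref{li:AddOtoA}) to become an attraction point of some lower substream, sitting in $A_j$ with itself as its own representative in $R_j$, then a later insertion of an item $p$ with $d(p,R^\infty(a))\le 2\gamma$ and $E(p)>E(R^\infty(a))$ triggers the assignment $R_j\gets\{p\}\cup R_j\setminus\{R^\infty(a)\}$, at which moment $R^\infty(a)$ leaves $\bigcup_j(R_j\cup O_j)$ without expiring, without \cref{li:boundSizeO}, and without entering \texttt{DeleteCoveredPoints}. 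The paper handles this as its case (d): since $R^\infty(a)$ is at that moment an attraction point, Invariant~\ref{in:distAR(a)} applies to it directly to give $d(R^\infty(a),R^t(R^\infty(a)))\le 2\gamma$, which together with $d(a,R^\infty(a))\le 2\gamma$ gives $d(a,p)\le 4\gamma$; combined with $E(R^\infty(a))\le E(p)$ and \Cref{obs:MoveOnlyLowerStream} to place $p$ in $R_\ell^t$ for the right $\ell$, this yields case (iii). Because your plan is precisely to apply the case analysis at the final moment $R^\infty(a)$ leaves the union, but that analysis omits this mode, the proof as written fails whenever the final departure is a representative update. You need to add this fourth case, and when you do, note that $R^\infty(a)$ being an attraction point there is exactly what lets you invoke Invariant~\ref{in:distAR(a)} a second time.
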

\begin{proof}
    There are the following four possibilities for removing $R^{\infty}(a)$ from $\bigcup_{j\in[\maxK]}R_j\cup O_j$.
    \begin{enumerate}[(a)]
        \item The point $R^{\infty}(a)$ expires. Then, $t=E(R^{\infty}(a))$.
        \item The point is removed in line~\ref{li:boundSizeO} of $\AddAttractionPoint$. Then it holds that $\Abig\geq E(R^{\infty}(a))$.
        \item The point is removed from $\bigcup_{j\in[\maxK]}A_j\cup R_j\cup O_j$ in $\DiscardCoveredPoints(p, \ell)$ for some $\ell<i$ and a point~$p$. Then, $E(R^\infty(a))<E(p)$ and $d(p, R^\infty(a))\leq 2\gamma$ or $d(p, a)\leq 2\gamma$. By Invariant~\ref{in:distAR(a)}, it holds that $d(a, R^\infty(a))\leq 2\gamma$ and hence $d(a, p)\leq 4\gamma$.
        Note that if \Cref{alg:updatek-center} calls $\DiscardCoveredPoints(p', i')$ at time~$T$ then if later at time~$T$ $\DiscardCoveredPoints(p'', i'')$ is called it holds that $i''>i'$. Hence, it follows that $p\in R_\ell^{t}\cup O_\ell^{t}$ and (iii) follows. 
        \item The point $R^\infty(a)$ is added as an attraction point to $A_j$ at some time $t'<t$ for an $j<i$ in line~\ref{li:AddToRA} or \ref{li:AddOtoA} and then at time~$t$, the representative of $R^\infty(a)$ is updated to $R(R^\infty(a))\neq R^\infty(a)$. Then, $d(a, R^t(R^\infty(a)))\leq d(a, R^\infty(a))+d(R^\infty(a), R^t(R^\infty(a)))\leq 4\gamma$ by Invariant~\ref{in:distAR(a)}. Hence, (iii) follows as $R^t(R^\infty(a))\in R_\ell^t$ for some $\ell\leq j$ by \Cref{obs:MoveOnlyLowerStream}.
        \qedhere
    \end{enumerate}
\end{proof}

\begin{lemma}\label{lem:ROinduction}
    Let $a$ be an attraction point and let $R^{\infty}(a)$ be its last representative point. Further, let $t$ be the time and $i$ the index such that $R^{\infty}(a)\in R_i^{t-1}\cup O_i^{t-1}$ and $R^{\infty}(a)\notin \bigcup_{j\in [\maxK]} R_j^{t}\cup O_j^t$.
    Then, for any time $t\leq T<E(R^{\infty}(a))$ such that $\Abig^T\leq T$, 
    there exists an $\ell \leq i$ and a point~$r\in R_\ell^T \cup O_\ell^T$ such that $d(a, r)\leq (i-\ell)6\gamma+2\gamma$.
\end{lemma}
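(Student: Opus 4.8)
The plan is to prove the claim by strong induction on the substream index $i$, using \Cref{lem:distToAStep} as the main structural tool and \Cref{obs:MoveOnlyLowerStream} (points only ever migrate to substreams of smaller index) to track locations. Fix $a,t,i$ as in the statement and a time $T$ with $t\le T<E(R^\infty(a))$ and $\Abig^T\le T$; if no such $T$ exists the claim is vacuous. By \Cref{lem:distToAStep}, one of the alternatives (i)--(iii) holds at the deletion time $t$ of $R^\infty(a)$. Alternative~(i) forces $t=E(R^\infty(a))$, leaving no admissible $T$; alternative~(ii) forces $\Abig^T\ge E(R^\infty(a))>T$, contradicting $\Abig^T\le T$. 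For $i=1$, alternative~(iii) is impossible too (it requires $\ell'<1$), so there is no admissible $T$ and the claim holds vacuously --- this is the base of the induction. For $i\ge2$ we may assume alternative~(iii) holds (otherwise the claim is again vacuous): it yields an index $\ell'<i$ and a point $p\in R_{\ell'}^t\cup O_{\ell'}^t$ with $d(a,p)\le 4\gamma$ and $E(p)\ge E(R^\infty(a))>T$.

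Next I would pass from $p$ to an attraction point $b$: the attraction point currently represented by $p$ if $p\in R_{\ell'}^t$, or the one whose last representative was $p=R^\infty(b)$ before it was deleted if $p$ is an orphan. Invariant~\ref{in:distAR(a)} gives $d(b,p)\le 2\gamma$, hence $d(a,b)\le 6\gamma\le(i-\ell')6\gamma$ since $\ell'\le i-1$. Because a new representative for $b$ is installed only if it outlives the current one, $E(R^\infty(b))\ge E(p)>T$, so $b$'s last representative is still active at time $T$; and by \Cref{obs:MoveOnlyLowerStream}, $b$ together with all of its representatives and orphans stays in substreams of index at most $\ell'$ from time $t$ onward.

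The remaining work is a forward trace, which I would organize as a loop maintaining a ``current'' attraction point $\widehat b$ and an index $\widehat\ell$ such that, from some time on, $\widehat b$ and all of its representatives and orphans stay in substreams of index $\le\widehat\ell\le i-1$, and moreover $d(a,\widehat b)\le(i-\widehat\ell)6\gamma$ and $E(R^\infty(\widehat b))>T$; initially $\widehat b=b$ and $\widehat\ell=\ell'$. If $\widehat b$ is still an attraction point at time $T$, then $R^T(\widehat b)\in R_\ell^T$ for some $\ell\le\widehat\ell$, and by Invariant~\ref{in:distAR(a)} again $d(a,R^T(\widehat b))\le(i-\widehat\ell)6\gamma+2\gamma\le(i-\ell)6\gamma+2\gamma$, so we return it. Otherwise $\widehat b$ has been deleted; write $\rho:=R^\infty(\widehat b)$. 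If $\rho$ was, at some time before $T$, relocated downward and re-installed as an attraction point $b^*$ (necessarily in a substream of index $\ell^*\le\widehat\ell-1$), then $d(a,b^*)\le d(a,\widehat b)+2\gamma\le(i-\widehat\ell)6\gamma+2\gamma\le(i-\ell^*)6\gamma$, and we continue the loop with $(\widehat b,\widehat\ell)\leftarrow(b^*,\ell^*)$; this recurs only finitely often, as $\widehat\ell\ge1$ strictly decreases. Otherwise $\rho$ is never re-promoted, hence it is a representative or orphan at every point where it is stored, so either $\rho$ is still stored at time $T$ --- then it lies in some $R_\ell^T\cup O_\ell^T$ with $\ell\le\widehat\ell$, and $d(a,\rho)\le(i-\widehat\ell)6\gamma+2\gamma$ suffices --- or $\rho$ leaves storage before time $T$, necessarily (expiration is impossible since $E(\rho)>T$, and the trimming of the $O$-sets is impossible since it would force $\Abig^T>T$) because it was covered by a newer point in a lower substream. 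In that last case $\rho$ sat in $R_{m'}^{\bar t-1}\cup O_{m'}^{\bar t-1}$ for some $m'\le\widehat\ell<i$ just before leaving storage at time $\bar t\le T$, so the hypothesis of the present lemma holds for the attraction point $\widehat b$ and index $m'$; the induction hypothesis at time $T$ then gives $\ell\le m'$ and $r\in R_\ell^T\cup O_\ell^T$ with $d(\widehat b,r)\le(m'-\ell)6\gamma+2\gamma$, whence $d(a,r)\le d(a,\widehat b)+d(\widehat b,r)\le(i-\widehat\ell)6\gamma+(\widehat\ell-\ell)6\gamma+2\gamma=(i-\ell)6\gamma+2\gamma$.

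The step I expect to be the main obstacle is exactly this forward trace: one must carefully enumerate how a last representative can leave storage --- deleted with its attraction point when covered, deleted as an orphan when covered, or relocated downward and re-installed as a fresh attraction point --- rule out the remaining possibilities (expiration; $O$-trimming) using $E(R^\infty(\widehat b))>T$ and $\Abig^T\le T$, and check that in each case the substream-index drop is large enough to absorb the distance increase, so that the running invariant $d(a,\widehat b)\le(i-\widehat\ell)6\gamma$ is preserved down the finite chain: a relocation costs only $2\gamma$ in distance but frees a full $6\gamma$ of budget, while a covering or recursive step costs $6\gamma$ but likewise frees a level. The individual distance estimates are routine triangle inequalities; essentially all the care is in this bookkeeping and in checking that the hypotheses of \Cref{lem:distToAStep} and of the induction hypothesis hold at each recursive step.
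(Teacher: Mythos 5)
Your proof is correct and follows essentially the same inductive route as the paper: strong induction on the substream index~$i$, with \Cref{lem:distToAStep} supplying the nearby witness~$p$ in a lower substream, \Cref{obs:MoveOnlyLowerStream} and Invariant~\ref{in:distAR(a)} controlling indices and distances, and the same triangle-inequality accounting at the end. The only structural difference is your explicit ``forward trace'' loop over re-installations of~$R^\infty(\widehat{b})$: the paper applies the inductive hypothesis directly to the attraction point~$a_m$ of the witness (with removal index~$i_m \le m$), and the re-installation chain you unroll by hand is already absorbed into that single recursive call, because case~(d) of \Cref{lem:distToAStep} packages the ``relocated downward and re-installed as an attraction point, then its representative was updated'' scenario; your loop is correct but redundant bookkeeping relative to the paper's cleaner one-step recursion.
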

\begin{proof}
    We show the claim by induction over~$i$. If $i=1$, then the claim follows by~\Cref{lem:distToAStep} and Invariant~\ref{in:distAR(a)}. Now let $i>1$. By~\Cref{lem:distToAStep}, there exists a point $p_m\in R_m^{t}\cup O_m^{t}$ with $m < i$, $E(R^{\infty}(a)) \leq E(p_m)$ and $d(p_m, a)\leq 4\gamma$. Let $a_m$ be the corresponding attraction point to $p_m$, i.e., at some time $t'$ it holds that $R^{t'}(a_m)=p_m$. Then, $d(a, a_m)\leq 6\gamma$ and $d(a, R^T(a_m))\leq 8\gamma$ by Invariant~\ref{in:distAR(a)}. Hence, the claim holds if $R^T(a_m)\in \bigcup_{j\in [\maxK]}R_j^T\cup O_j^T$. Otherwise by \cref{obs:MoveOnlyLowerStream}, it follows that there exists a time $t'<T$ and an index $i_m\leq m$ such that $R^{\infty}(a_m)\in R_{i_m}^{t'-1}\cup O_{i_m}^{t'-1}$ and $R^{\infty}(a_m)\notin \bigcup_{j\in[\maxK]} R_{j}^{t'}\cup O_{j}^{t'}$. It holds that $T<E(R^\infty(a))\leq E(p_m)\leq E(R^\infty(a_m))$.
    Hence, by induction there exists an $\ell\leq i_m$ and a point $r\in R_{\ell}^T\cup O_{\ell}^T$ such that $d(a_m, r)\leq (i_m-\ell)6\gamma+2\gamma$. Therefore, 
    \[d(a, r)\leq d(a, a_m)+d(a_m, r)\leq 6\gamma+(i_m-\ell)6\gamma+2\gamma\leq (i-\ell)6\gamma+2\gamma.\qedhere\]
\end{proof}

Next we obtain a distance bound for any active point~$p$ that expires later than $\max\{t_{\parK}, \Abig\}$.

\begin{lemma}\label{lem:distanceToRO}
     Let $T$ be the current time and $\parK\in [\maxK]$. If $\max\{t_{\parK}^T, \Abig^T\}<T$, then for any item~$p$ active at time~$T$, i.e.,  $S(p)\leq T<E(p)$, there exists a point $r\in \bigcup_{i\in[\parK]}R_i^T\cup O_i^T$ with $d(p,r)\leq 6\parK\gamma$.
\end{lemma}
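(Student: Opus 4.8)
The plan is to follow the active item $p$ through the covering relations of the algorithm until we reach a stored representative or orphan, losing only $6\parK\gamma$ in distance. The first step is to show that, because $t_{\parK}^T<T$, the item $p$ was accepted into one of the first $\parK$ substreams when it arrived. Whenever a substream $i$ rejects an incoming item, \Cref{alg:updatek-center} executes \cref{li:setT} and sets $t_i\ge E(p)$; since the values $t_i$ are never decreased, if $p$ had been accepted into some substream $i_0>\parK$ (or rejected by all $K$ substreams), then $t_{\parK}^T\ge E(p)>T$, contradicting the hypothesis. Fix the accepting substream $i_0\le\parK$; acceptance was triggered by \cref{li:distToRepr}, \cref{li:distToAttr}, or \cref{li:E>ei}.

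The second step is to attach to $p$ an attraction point $a$ with three properties: (i) $d(p,a)\le 4\gamma$; (ii) its last representative $R^{\infty}(a)$ satisfies $E(R^{\infty}(a))>T$; and (iii) from time $S(p)$ onward $R^{\infty}(a)$ always lies in a substream of index at most $i_0$. If $p$ became a new attraction point, take $a=p$. If $p$ was covered by some $a\in A_{i_0}$, keep this $a$; Invariant~\ref{in:distAR(a)} (via \Cref{obs:Invariants123}) gives $d(p,a)\le 2\gamma$, and whether or not $R(a)$ was then updated to $p$, the resulting representative has expiration $\ge E(p)>T$, so $E(R^{\infty}(a))>T$ because representatives are only ever replaced by longer-lived points. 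If $p$ was covered by some $q\in R_{i_0}\cup O_{i_0}$ with $E(q)\ge E(p)$, let $a$ be the attraction point whose representative is (or was) $q$; then $d(q,a)\le 2\gamma$ so $d(p,a)\le d(p,q)+d(q,a)\le 4\gamma$, while $E(R^{\infty}(a))\ge E(q)\ge E(p)>T$. In every case $a$ (or $q$) sits in substream $i_0$ at time $S(p)$, and since a representative update keeps the representative in its attraction point's substream, \Cref{obs:MoveOnlyLowerStream} yields property (iii).

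The third step is a case split on $R^{\infty}(a)$ at time $T$. If the current representative of $a$ is still stored (in particular if $a$ is still an attraction point, in which case we use $R^T(a)$), or if $a$ has been deleted but $R^{\infty}(a)$ is still stored, then by (iii) this point lies in some $R_\ell^T\cup O_\ell^T$ with $\ell\le i_0\le\parK$, and Invariant~\ref{in:distAR(a)} bounds its distance to $a$ by $2\gamma$; the triangle inequality then gives distance $\le 4\gamma+2\gamma\le 6\parK\gamma$ from $p$. Otherwise $R^{\infty}(a)$ was deleted at some time $t\le T$; as $E(R^{\infty}(a))>T\ge t$ this was not an expiration, and as $\Abig^T<T$ it was not the orphan purge of \cref{li:boundSizeO} of $\AddAttractionPoint$ (which would force $\Abig^T\ge E(R^{\infty}(a))>T$). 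Hence the hypotheses of \Cref{lem:ROinduction} hold with $t\le T<E(R^{\infty}(a))$ and $\Abig^T\le T$, and the index $i$ in its statement satisfies $i\le i_0\le\parK$. It produces a point $r\in R_\ell^T\cup O_\ell^T$ with $d(a,r)\le (i-\ell)6\gamma+2\gamma\le (\parK-1)6\gamma+2\gamma$, so $d(p,r)\le d(p,a)+d(a,r)\le 4\gamma+(6\parK-4)\gamma=6\parK\gamma$, as claimed.

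I expect the main obstacle to be the second step: cleanly pinning down the attraction point $a$ attached to $p$ in the ``covered by a representative/orphan'' branch, and verifying that $R^{\infty}(a)$ never migrates above substream $i_0$. This depends on \Cref{obs:MoveOnlyLowerStream} together with the structural fact that a representative is only replaced by a point being inserted into its attraction point's current substream, so the relevant index can only decrease over time. The one other delicate point is that the two contributions are \emph{exactly} $4\gamma$ and $6(\parK-1)\gamma+2\gamma$, summing to precisely $6\parK\gamma$ with no slack, so one must ensure the ``$p$ close to $a$'' bound never exceeds $4\gamma$ in any of the three insertion branches.
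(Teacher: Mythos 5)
Your proof is correct and follows essentially the same route as the paper's: identify from the insertion branch an attraction point $a$ with $d(p,a)\leq 4\gamma$ whose representative outlives $T$, handle the case where $R^T(a)$ is still stored directly via Invariant~\ref{in:distAR(a)}, and otherwise invoke \Cref{lem:ROinduction} together with \Cref{obs:MoveOnlyLowerStream} to obtain $d(p,r)\leq 4\gamma+(i-\ell)6\gamma+2\gamma\leq 6\parK\gamma$. The extra bookkeeping you add (ruling out expiration and the orphan purge before applying \Cref{lem:ROinduction}, and tracking the substream index of $R^{\infty}(a)$) is sound but not needed beyond what the paper's argument already uses.
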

\begin{proof} Define $S=S(p)$.
    It holds that $t_\parK^S\leq t_\parK^T<T$. Hence, by \Cref{alg:updatek-center} there exist
    \begin{enumerate}
        \item a point $R^S(a)\in \bigcup_{j\in [\parK]}R_j^S\cup O_j^S$ such that $T<E(p)\leq E(R^S(a))$ and $d(p, R^S(a))\leq 2\gamma$, or
        \item a point $a\in \bigcup_{j\in [\parK]}A_j^S$ such that $T<E(p)\leq E(R^S(a))$ and $d(p, a)\leq 2\gamma$. 
    \end{enumerate}
    Therefore, for any time $t\geq S$, it holds that $d(p, R^t(a))\leq 6\gamma$, since $d(a, R^S(a))\leq 2\gamma$ and $d(a, R^t(a))\leq 2\gamma$ by Invariant~\ref{in:distAR(a)}.
    Hence, in the case that $R^T(a)\in \bigcup_{i\in [\parK]}R_i^T\cup O_i^T$ the claim follows. 
    Otherwise by \Cref{obs:MoveOnlyLowerStream}, there exist a time $t\leq T$ and an index $i\leq \parK$ such that $R^\infty(a)\in R_i^{t-1}\cup O_i^{t-1}$ and $R^\infty(a)\notin \bigcup_{j\in [\maxK]} R_i^{t}\cup O_i^{t}$. Then by \Cref{lem:ROinduction}, there exists an $\ell\leq i\leq \parK$ and a point $r\in R_\ell^T\cup O_\ell^T$ such that $d(a, r)\leq (i-\ell)6\gamma+2\gamma$. Hence, 
    \[d(p, r)\leq d(p, a)+d(a, r)\leq 4\gamma+ (i-\ell)6\gamma+2\gamma\leq 6\parK \gamma.\qedhere\]
\end{proof}

Running the algorithm described in \Cref{sec:algorithm} for all values of $\gamma\in \{(1+\eps)^i \mid i=1, \dots, \lceil\log_{1+\eps}\Delta\rceil\}$ in parallel results in the following theorem.

\ThmKcenter* 


\begin{proof}
    The space bound follows by \Cref{lem:spaceBoundARO}.
    Let~$\gamma'$ be the smallest estimate for which the algorithm returns a solution~$C$ with at most~$\parK$ elements.
    Then, it follows from \Cref{lem:sizeACertificate} and \Cref{lem:ek+1Certificate} that $\text{OPT} > \gamma'/(1+\eps)$.
    By \Cref{lem:distanceToRO}, it holds that for every active item $p$ there exists a point $r\in \bigcup_{i\in [\parK]}R_i\cup O_i$ such that $d(p, r)\leq 6\parK\gamma'$. Hence, there exists a point $c\in C$ with $d(p, c)\leq (6\parK+2)\gamma' \leq (6\parK+2)(1+\eps)\text{OPT}$.
\end{proof}
\begin{corollary}
    In every metric space, there exists an algorithm in the expiration streaming model that can return at any time and for every value $\parK\leq \maxK$ 
    \begin{enumerate}[(1)]
        \item a $(6\parK+2+\eps)$-approximate solution for $\parK$-center storing $\calO((\maxK^3/\eps)\log \Delta)$ words, \label{item:6k+2+eps}
        \item a $(6\parK+3+\eps)$-approximate solution for $\parK$-center storing $\calO((\maxK^2/\eps+\maxK^3)\log \Delta)$ words, and \label{item:6k+3+eps}
        \item a $(7\parK)$-approximate solution for $\parK$-center with $\parK\geq 3$ storing $\calO(\maxK^2\log \Delta)$ words.\label{item:7k}
    \end{enumerate}
\end{corollary}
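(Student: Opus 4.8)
My plan is to obtain all three bounds directly from \Cref{thm:kcenter} by calibrating the step size of the geometric grid of estimates $\gamma$ that the decision algorithm of \Cref{sec:algorithm} is run on in parallel. Recall that with grid step $1+\delta$ the resulting algorithm is a $(6\parK+2)(1+\delta)$-approximation simultaneously for every $\parK\le\maxK$ (a single data structure serves all such $\parK$), and that it stores $\calO((1/\delta)\maxK^2\log\Delta)$ words, since \Cref{lem:spaceBoundARO} bounds the per-estimate storage by $\calO(\maxK^2)$ and the number of estimates is $\lceil\log_{1+\delta}\Delta\rceil=\calO((1/\delta)\log\Delta)$. For part~(1) I would take $\delta=\eps/(6\maxK+2)$; then $(6\parK+2)(1+\delta)=(6\parK+2)+\tfrac{6\parK+2}{6\maxK+2}\eps\le 6\parK+2+\eps$ for every $\parK\le\maxK$, and the space becomes $\calO((\maxK^3/\eps)\log\Delta)$, as claimed.

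For part~(3) I would use a fixed constant step, say $\delta=\tfrac1{20}$, so the space is $\calO(\maxK^2\log\Delta)$ with no dependence on $\eps$. The approximation factor is $(6\parK+2)\cdot\tfrac{21}{20}=6.3\,\parK+2.1$, which is at most $7\parK$ exactly when $\parK\ge 3$; this is the only place the hypothesis $\parK\ge3$ is needed, since for $\parK\le 2$ no step strictly above~$1$ keeps $(6\parK+2)(1+\delta)\le 7\parK$.

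For part~(2) the goal is the approximation $6\parK+3+\eps$ while avoiding the extra $\maxK$ factor in the $\eps$-dependent term that part~(1) pays. I would run two families of instances in parallel: one with step $1+\eps$, using $\calO((\maxK^2/\eps)\log\Delta)$ words and giving factor $(6\parK+2)(1+\eps)$, and one with step $1+\tfrac1{6\maxK+2}$, using $\calO(\maxK^3\log\Delta)$ words and giving factor $(6\parK+2)(1+\tfrac1{6\maxK+2})\le 6\parK+3$. At query time, given $\parK$, if $(6\parK+1)\eps\le 1$ I would return the solution of the first family, whose factor is then $(6\parK+2)(1+\eps)=6\parK+2+(6\parK+1)\eps+\eps\le 6\parK+3+\eps$; otherwise I would return the solution of the second family, whose factor $6\parK+3\le 6\parK+3+\eps$ holds unconditionally. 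The total space is $\calO((\maxK^2/\eps+\maxK^3)\log\Delta)$. (Alternatively, the single instantiation with $\delta=\tfrac1{6\maxK+2}$ already meets the stated space bound, since $\calO(\maxK^3\log\Delta)\subseteq\calO((\maxK^2/\eps+\maxK^3)\log\Delta)$, but running both families additionally yields the stronger factor $(6\parK+2)(1+\eps)$ for small $\parK$.)

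Since all the work is already contained in \Cref{thm:kcenter}, the only points requiring care are (i) that the calibrated step must make the approximation bound hold \emph{uniformly} over all $\parK\le\maxK$, which is why $\delta$ is tied to $\maxK$ rather than $\parK$ in parts~(1) and~(2), and (ii) the elementary query-time case split in part~(2) together with the observation in part~(3) that a constant step suffices precisely for $\parK\ge 3$. I do not expect any genuine obstacle beyond this calibration and bookkeeping.
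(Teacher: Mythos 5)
Your proposal is correct and follows essentially the same route as the paper: all three parts come from instantiating \Cref{thm:kcenter} with a suitably calibrated accuracy parameter, so that the $(6\parK+2)(1+\cdot)$ factor simplifies and the $\mathcal{O}((\maxK^2/\cdot)\log\Delta)$ space bound becomes the one claimed. The only differences are cosmetic constant choices ($1/(6\maxK+2)$ and $1/20$ versus the paper's $1/(6\maxK)$ and $1/24$), and your part~(2) is in fact slightly more careful: taking $\delta=1/(6\maxK+2)$ gives $(6\parK+2)(1+\delta)\le 6\parK+3$ cleanly for every $\parK\le\maxK$, whereas the paper's $\min\{(6\maxK)^{-1},\eps'\}$ leaves a stray $1/(3\maxK)$ slack that exceeds $\eps'$ for some $(\parK,\eps')$; the two-family construction you also sketch is unnecessary, as you yourself note, since the single instantiation already meets both the approximation and the stated space bound.
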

\begin{proof}
    We use \Cref{thm:kcenter} and use different values for $\eps$ to get the other bounds. Setting $\eps={\eps'}/{(8\maxK)}$ results in (\ref{item:6k+2+eps}). Setting $\eps=\min\{{(6\maxK)^{-1}}, \eps'\}$ gives (\ref{item:6k+3+eps}).
    Setting $\eps=1/24$ yields~(\ref{item:7k}). 
\end{proof}

\section{Counting in Expiration Streams}
\label{sec:GE-count}

Here we consider the central building block problem of counting elements, under the expiration streaming model after inserting $n$ items.  In the standard insertion-only model, exact counting is  possible in $\calO(\log n)$ bits of space.  Then Datar \etal~\cite{DGIM02} famously introduced the sliding-window model with focus on counting, and showed that exact counting with window size $w$ requires $\Theta(w)$ bits, and approximating a count within a $1+\eps$ factor can be done in $\Theta((1/\eps) \log^2 w)$ bits.  
The expiration model is more challenging, and so for our new upper bounds we work under the simplified setting where we count ``words.''  This allows us to work in the more general "timestamp" model where times of insertions, expirations, and queries can have any real value (those representable by a ``word").  

In particular, the counting problem is more challenging in the expiration model because we need to keep track of when items expire, which we only gain information at the time they are inserted.  
In short, if item $i$ is inserted with an expiration time $E_i$, then at some future time $t$ the counting task is to maintain how many items have been inserted, but not yet expired.  

In this section we focus on maintaining a \emph{sketch} of the data.  This is a small space data structure, where one can ask queries, and we guarantee approximation factors on the answer provided.  The most central goal is to trade off the space required with the approximation factor.  In the most basic problem, the query is a count at some future time $t$; we will define both additive and relative error guarantees.

\subparagraph*{Warmup: Consistent Expiration Model.}
We maintain two counters: $C_I$ and $\hat C_E$.  The first $C_I$ is the exact count of all items inserted so far, and can be done storing $\calO(1)$ words.  Then let $C_E$ be the number of items that have expired so far.  It is not obvious how to maintain this exactly in small space, so we instead maintain the approximate count $\hat C_E$ which satisfies $C_E \leq \hat C_E \leq C_E + \eps n$.  
To maintain $\hat C_E$, we just need to store the at most $1/\eps$ expiration time points $t_1, t_2, \ldots$ so that there are $\eps n$ expirations between~$t_j$ and~$t_{j+1}$.  
This algorithm works inductively.  Consider we have set time point $t_j$, signifying when $j \eps n$ items have expired.   Then we keep a counter of how many new items are inserted until we have witnessed $\lfloor \eps n \rfloor$; 
at this point we mark that expiration time as~$t_{j+1}$.  


The final estimate $\hat n_{\text{now}} = C_I - \hat C_E$ has properties described in the following theorem.  

\begin{theorem}
  In the consistent expiration model with $n$ items and error parameter $\eps \in (0,1/2]$, we can approximate the number of not yet expired items $n_{\text{now}}$ with $\hat n_{\text{now}}$ such that 
        $\hat n_{\text{now}} \leq n_{\text{now}} \leq \hat n_{\text{now}} + \eps n$ storing $\calO(1/\eps)$ words.
\end{theorem}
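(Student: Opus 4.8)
The plan is to follow the two-counter strategy sketched above: maintain the exact number $C_I$ of inserted items in $\mathcal{O}(1)$ words, maintain an overestimate $\hat C_E$ of the number $C_E$ of already-expired items in $\mathcal{O}(1/\eps)$ words, and return $\hat n_{\text{now}} := C_I - \hat C_E$. Since $n_{\text{now}} = C_I - C_E$, the desired bound $\hat n_{\text{now}} \le n_{\text{now}} \le \hat n_{\text{now}} + \eps n$ is equivalent to $C_E \le \hat C_E \le C_E + \eps n$, so the whole task reduces to approximating $C_E$ from above.

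The structure I would use for $\hat C_E$ exploits the defining feature of the consistent-expiration model: the expiration times $E(x_1) \le E(x_2) \le \cdots$ are non-decreasing and are revealed in this order, so items expire in the order they were inserted. Consequently, the number of expired items at any time $t$ is the prefix count $C_E(t) = |\{i : E(x_i) \le t\}|$, a monotone step function of $t$ whose $i$-th jump is at $E(x_i)$. I would sketch this function by keeping, for a granularity parameter $g$, only every $g$-th jump: when item $x_{jg}$ arrives we append the checkpoint $\tau_j := E(x_{jg})$ (which costs nothing, as $x_{jg}$ is being processed anyway). To answer a query at time $t$, let $j^\star$ be the largest index with $\tau_{j^\star} \le t$ (or $0$ if there is none), and return $\hat n_{\text{now}} := \max\{C_I - (j^\star+1)g,\ 0\}$, i.e.\ $\hat C_E = \min\{(j^\star+1)g,\ C_I\}$. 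Correctness is a short interval-squeeze: items $x_1,\dots,x_{j^\star g}$ have expired since their expiration times are at most $\tau_{j^\star}\le t$, whereas items $x_{(j^\star+1)g}, x_{(j^\star+1)g+1},\dots$ have not, since their expiration times are at least $\tau_{j^\star+1}>t$; hence $C_E(t)$ lies in an interval of length less than $g$ with left endpoint $j^\star g$, giving $C_E(t) \le \hat C_E \le C_E(t)+g$. When $j^\star$ is the index of the last stored checkpoint, the same squeeze applies because only the at most $g-1$ items of the incomplete final chunk are undetermined, which is exactly why taking the $\min$ with $C_I$ yields the correct value in that case.

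The only real obstacle is that $n$ is not known in advance, so $g$ cannot be fixed to $\eps n$ up front; it must be chosen adaptively from the current count and rescaled as the stream grows. I would set $g$ to be the largest power of two that is at most $\eps C_I$ (and, while $\eps C_I < 1$, simply store all at most $1/\eps$ expiration times exactly, incurring no error). Then at all times the number of stored checkpoints is $\lfloor C_I/g\rfloor < 2/\eps$, so the space is $\mathcal{O}(1/\eps)$ words, and the error is less than $g \le \eps C_I = \eps n$, as required. Whenever $C_I$ grows enough that $g$ doubles, the newly desired checkpoints — those at insertion indices that are multiples of the new $g$ — are precisely the even-indexed old checkpoints, so rescaling is implemented by deleting every other stored checkpoint; this work is charged against the doubling of $g$ and does not affect the space bound. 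Assembling these pieces gives the theorem.
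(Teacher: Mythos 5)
Your proof is correct and follows essentially the same two-counter scheme as the paper: maintain $C_I$ exactly, checkpoint the expiration time of every $g$-th inserted item, and exploit the fact that under consistent expirations the expired items always form a prefix of the insertion order, so the checkpoint straddling $t$ pins down $C_E(t)$ to within $g$. The one place you go beyond the paper's sketch is in handling the fact that $n$ is unknown during the stream: the paper takes the granularity to be $\lfloor\eps n\rfloor$ (implicitly presupposing $n$), whereas you set $g$ adaptively to the largest power of two at most $\eps C_I$ and thin the checkpoint list by deleting every other entry each time $g$ doubles, keeping the stored-checkpoint count below $2/\eps$ while preserving the invariant that checkpoints sit at multiples of the current $g$. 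This is a correct and worthwhile refinement; the rest of your argument (the interval squeeze, and taking $\min\{(j^\star+1)g,\ C_I\}$ to handle the incomplete final chunk) is sound and gives exactly the bounds $C_E \le \hat C_E \le C_E + g \le C_E + \eps n$ claimed in the theorem.
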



\subparagraph*{Hardness of Exact Counting.}
Here we show that some approximation is necessary to count under the consistent expiration model with sublinear space. We reduce to the INDEXING problem that considers a set $n$ bits.  Alice holds the bits and Bob will ask a question about one bit $i \in [n]$.  Alice can communicate to Bob, but not the other direction.  
It is known that for Bob to be able to determine the value of the $i$th bit with constant probability of success, Alice needs to communicate $\Omega(n)$ bits~\cite{Kushilevitz_Nisan_1996}.  

\begin{lemma}
    \label{lem:hard-count}
    In a consistent expiration stream of size $n$, to be able to answer the count of active items at some unknown future time $t$ with at least a constant probability of success, the sketch requires $\Omega(n)$ space.  
\end{lemma}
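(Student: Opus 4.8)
The plan is to reduce from the one-way communication problem INDEXING, exactly as hinted in the surrounding text. Recall the setup: Alice holds a string $x\in\{0,1\}^n$, Bob holds an index $i\in[n]$, only Alice may send a message to Bob, and Bob must output $x_i$ with constant probability; this forces Alice's message to have $\Omega(n)$ bits~\cite{Kushilevitz_Nisan_1996}. I would argue the contrapositive: from any sketch that processes a consistent-expiration stream of length $n$ and afterwards answers one count query — at a time not revealed to it until query time — correctly with constant probability, I build a one-way INDEXING protocol whose message is (a copy of) the sketch's memory, so that memory must use $\Omega(n)$ bits.

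The heart of the argument is the encoding Alice uses. On input $x$, she feeds in $n$ items, all carrying the data value $1$, where item $j$ is inserted at time $S(x_j)=j$ and given expiration time $E(x_j)=n+2j-1+x_j$. Since $x_j\in\{0,1\}$, consecutive expiration times satisfy $E(x_{j+1})\ge n+2j+1>n+2j\ge E(x_j)$, so the expiration times are strictly increasing and the stream is a legal consistent-expiration stream of size $n$; also $E(x_j)\ge n+1>j=S(x_j)$. Alice runs the sketch (with its own coins) on this stream and sends its memory to Bob, which costs $O(s)$ bits if the sketch uses $s$ bits of space. Given $i$, Bob queries the reconstructed sketch at the single future time $t_i:=n+2i-1$. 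Item $x_j$ is active at $t_i$ iff $S(x_j)\le t_i$, which holds for every $j\in[n]$, and $E(x_j)>t_i$, i.e.\ $2j+x_j>2i$: this fails for $j\le i-1$ (since $2j+x_j\le 2i-1$), holds for $j\ge i+1$ (since $2j\ge 2i+2$), and for $j=i$ holds precisely when $x_i=1$. Hence the number of active $1$'s at time $t_i$ equals $(n-i)+x_i$, so Bob recovers $x_i$ by subtracting the known quantity $n-i$ from the sketch's answer, succeeding whenever that one query is answered correctly. A sketch using $s=o(n)$ bits would thus give a one-way INDEXING protocol with $o(n)$ communication, a contradiction; therefore $s=\Omega(n)$.

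The step that actually requires design, as opposed to being lifted verbatim from the classical sliding-window count lower bound, is making a single \emph{exact} count query pin down one bit while keeping the expiration times non-decreasing (the consistency requirement). The fix is to space the candidate expiration times two apart — the coarse term $2j$ separates $E(x_j)$ from $E(x_{j\pm 1})$ regardless of the bits — and to encode $x_j$ as the sub-unit offset that decides whether $E(x_j)$ lands just before or just after the query time $t_i$ placed in the $j=i$ gap. I do not expect any further obstacle: one should just double-check that $t_i$ is a genuine future time unknown to the algorithm (it is, being $t_i=n+2i-1>n=S(x_n)$ and chosen by Bob only after the stream) and that the reduction is truly one-way (it is, as Alice sends only the post-stream memory). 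If ``$\Omega(n)$ space'' is read in machine words, the identical argument yields $\Omega(n)$ bits and hence the stated bound a fortiori.
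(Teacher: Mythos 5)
Your proof is correct and takes essentially the same approach as the paper: both reduce from INDEXING by encoding bit $x_j$ into whether item $j$ expires just before or just after a probe time spaced two apart from its neighbors, so that a single exact count query at a gap recovers the bit. Your construction differs only by a harmless shift of one in the expiration times (paper uses $E(x_j)=n+2j+x_j$ and queries $t=n+2i$; you use $E(x_j)=n+2j-1+x_j$ and query $t_i=n+2i-1$), and you spell out the one-way communication step more explicitly, but the argument is the same.
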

\begin{proof}
    The construction inserts all $n$ points.  It structures the expirations so each item $i$ expires at either time $n + 2i$ or $n + 2i+1$.  We encode the indexing problem such that if Alice's $i$th bit is $0$, then the $i$th expiration is at time $n + 2i$ and if it is $1$, then the $i$th item expires at $n + 2i + 1$.  

    Now for Bob to exactly answer a query to item $i$, the sketch needs to return the count at time $t = n + 2i$.  If the answer is $n-i$, then the $i$th bit is $0$; if it is $n-i+1$, then the $i$th bit is $1$.  Thus, if we can answer the counting question exactly, we can solve INDEXING.  Thus it requires $\Omega(n)$ space.  
\end{proof}

Recall that the consistent expiration model is contained in the general expiration model.  Thus exact counting is hard under the general expiration model;  $\Omega(n)$ space is required to solve it exactly.  




\subsection{Quantile Sketching}
The quantile problem considers a set of $n$ items in a stream~$X$, where the items come from a domain~$\mathcal{X}$ that can be totally ordered under a $\calO(1)$-time comparator operator.  
The goal is to store a small space sketch of small size so that for a query $v$ then one can approximately return a count $\hat C_v$ of the number of items which satisfy $x < v$, where the precise total count is $C_v = |\{x \in X \mid x < v\}|$.  
Additive error quantile sketches ensure the guarantee that $|\hat C_v - C_v| \leq \eps n$.  
Relative error quantile sketches ensure that $|\hat C_v - C_v| \leq \eps C_v$.  

After a long sequence of developments, the best space bound for additive error quantile sketch under the insertion only streaming model is the KLL sketch~\cite{karnin2016optimal}, which only requires $\calO((1/\eps) \log \log (1/\eps \delta))$ words of space;
it is randomized and fails with probability $1-\delta$.  This probabilistic result holds simultaneously for all quantile queries $v$; if one only requires it to hold for a single query, the $\eps$ can be removed from the $\log \log$ term, for a space bound of $\calO((1/\eps) \log \log (1/\delta))$.

Note that since these algorithms work for any domain with a comparator operator, to provide these general guarantees it must store elements from the set which demark the boundaries between queries $v$.  The stored elements have $\calO(\eps n)$ elements between them in the sorted order.  Moreover, the space bounds assume it takes $\Theta(1)$ space to store an element of $\mathcal{X}$.  Thus it implies that at least $\Omega(1/\eps)$ space is required.  


For the relative error quantile sketch, Gribelyuk \etal~\cite{gribelyuk2025near,cormode2021relative} show how to maintain a sketch of size $\tilde{\calO}((1/\eps) \log (\eps n) \log^3\log(1/\delta))$ in words, it fails with probability at most $\delta$. Cormode \etal~\cite{cormode2021relative} also note that $\Omega((1/\eps) \log (\eps n))$ space is required.

\subsection{Expiration Counting via Quantiles}
\label{sec:GE-count-quantiles}

In this subsection we argue that quantile sketches can be used to solve the counting problem under the general expiration model.  We apply a quantile sketch on expiration times $E_i$, with a reversing of the comparison-order so the ``minimum" possible value is the latest possible time.  Then a quantile sketch for query at time $t$ returns $\hat C_t$ which estimates the number of items that have been inserted but at time $t$ have not yet expired.  

We state the following theorem for the implications of this employing the additive-error and relative-error quantile sketch bounds, respectively.  

\begin{theorem}
\label{thm:GE-counter}
    Under the expiration streaming model with $n$ inserted items, each with an expiration time, we can maintain an approximation of the count $n_{\text{now}}$ of items that were inserted, but not yet expired at the current time.  Our estimate $\hat n_{\text{now}}$ satisfies 
    \begin{itemize}
        \item \underline{$\eps$-additive error:} $|n_{\text{now}} - \hat n_{\text{now}}| \leq \eps n$ storing $\calO((1/\eps) \log \log (1/ \delta))$ words and
        \item \underline{$\eps$-relative error:} $|n_{\text{now}} - \hat n_{\text{now}}| \leq \eps n_{\text{now}}$ storing $\tilde{\calO}((1/\eps) \log (\eps n)\log^3 \log(1/\delta))$ words.
    \end{itemize}
Both cases are randomized algorithms which succeed with probability at least $1-\delta$ and hold for approximating $n_{\text{now}}$ for at any one query at any one point in the stream.
\end{theorem}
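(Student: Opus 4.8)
The plan is to reduce expiration counting to a single quantile query against the multiset of expiration times, read under a reversed order. Fix the comparator $\prec$ on time values defined by $a \prec b \iff a > b$, and run a quantile sketch over the domain of times equipped with $\prec$: whenever an item $x$ arrives, insert the single value $E(x)$ into the sketch. This is a legitimate expiration-streaming algorithm, since each arrival triggers exactly one sketch update and no deletions are ever performed; in particular the algorithm never needs to access the expirations of items other than at their arrival.

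For correctness, recall that any item $x$ already processed by the algorithm has $S(x) \le t$ at the query time $t$ (insertion times are non-decreasing and the query arrives at the current time), so $x$ is active at time $t$ if and only if $E(x) > t$, i.e.\ if and only if $E(x) \prec t$. Hence, writing $m \le n$ for the number of items seen so far and $C_t := |\{x : E(x) \prec t\}|$ for the $\prec$-rank of $t$ in the sketched multiset, we have $C_t = n_{\text{now}}$ exactly. The strict inequality matches up at the boundary: an item with $E(x)=t$ has already expired and satisfies $E(x) \not\prec t$, so it is correctly excluded; an item that never expires ($E(x)=\infty$) is the $\prec$-minimum, satisfies $E(x) \prec t$ for every finite $t$, and is thus always counted, as it should be. If several items share an expiration time, the sketch simply stores a multiset, which all the quantile sketches cited in the preceding subsection support.

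It remains to answer the query by feeding $v=t$ to the quantile sketch and returning $\hat n_{\text{now}} := \hat C_t$. The additive-error guarantee yields $|\hat C_t - C_t| \le \eps m \le \eps n$, and the relative-error guarantee yields $|\hat C_t - C_t| \le \eps C_t = \eps n_{\text{now}}$, with the failure probability $\delta$ inherited directly from the sketch. For the space bounds I would instantiate: in the additive case, the single-query version of the KLL sketch~\cite{karnin2016optimal}, which stores $\mathcal{O}((1/\eps)\log\log(1/\delta))$ words — the $\eps$ inside the $\log\log$ is dropped precisely because we only need the guarantee at the one query $v=t$, which is exactly our setting; in the relative case, the quantile sketch of Gribelyuk \etal~\cite{gribelyuk2025near,cormode2021relative}, which stores $\tilde{\mathcal{O}}((1/\eps)\log(\eps n)\log^3\log(1/\delta))$ words, with $n$ an upper bound on $m$. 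Both bounds are stated in words under the assumption that one domain element fits in $\mathcal{O}(1)$ words, matching our word-counting convention.

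There is no genuinely hard step here: all of the difficulty is offloaded onto the black-box quantile sketches, and the single new observation is that the reversed-order rank of $t$ among the expiration times is literally the active count. The only points that merit a sentence of care are the behavior at $E(x)=t$ and $E(x)=\infty$, and the fact that we are in the single-query regime (which is what permits the sharper $\mathcal{O}((1/\eps)\log\log(1/\delta))$ additive bound); both are addressed above.
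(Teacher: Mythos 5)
Your proposal is correct and is essentially the same argument the paper gives (the paper's own ``proof'' is the short paragraph preceding the theorem, which just observes that sketching the expiration times under a reversed comparator makes the rank of the query time equal to the active count). You spell out the same reduction, with the added and correct care about the boundary at $E(x)=t$, items with $E(x)=\infty$, and why the single-query regime permits the sharper KLL space bound.
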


In fact, the expiration counting problem is intimately tied to the insertion-only quantiles problem.  Consider freezing all insertions, and define  $C_t$ as the count at a future time $t$ and its approximation as $\hat C_t$.  If we just wait until time $t$ (there are no additional insertions), then for correctness, the sketch must be able to satisfy all such future time queries.  The maintenance of the count at future times is precisely a distribution estimate as preserved by quantiles; for the additive error case, they both maintain the Kolmogorov-Smirnov distance between distributions up to $\eps$ error.  Thus the space lower bounds for insertion-only quantiles estimates apply to counting in the expiration streaming model. We can hence state the following theorem.

\begin{theorem}
\label{thm:GE-count-LB}
Consider the expiration streaming model where storing a time and a count each require $\Omega(1)$ space in words.  To maintain a sketch that at any future time $t$ can maintain an
\begin{itemize}
  \item \underline{$\eps$-additive error} estimate requires $\Omega(1/\eps)$ space in words; and 
  \item \underline{$\eps$-relative error} estimate requires $\Omega((1/\eps)\log (\eps n))$ space in words.  
\end{itemize}
\end{theorem}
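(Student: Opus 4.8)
The plan is to derive both bounds from the known lower bounds for \emph{insertion-only} quantile summaries, via the equivalence sketched just before \Cref{sec:GE-count-quantiles}: once insertions stop, the count of still-active items, viewed as a function of the future query time $t$, equals the number of inserted items whose expiration time exceeds $t$ --- that is, the (upper) rank of $t$ among the expiration times --- and the sketch must encode this whole function because the query time is revealed only after the stream is frozen.

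\textbf{The reduction.} Given an instance $x_1,\dots,x_n$ of the insertion-only quantile problem over a totally ordered domain (by a harmless shift we may assume the $x_i$ are distinct and larger than $1$), I would build the expiration stream of $n$ items in which item $i$ carries an arbitrary data point, has insertion time $S(x_i)=1$ and expiration time $E(x_i)=x_i$; this is a legal stream (insertion times weakly increasing, $E>S$) of the same length. For any query time $t\ge 1$ we then have $n_{\text{now}}(t)=|\{i:t<x_i\}|=n-r(t)$ where $r(t)=|\{i:x_i\le t\}|$. Since the sketch is frozen before $t$ is known, it must supply a correct estimate for every future $t$ simultaneously, so subtracting from $n$ turns it into a summary that answers every rank query on $x_1,\dots,x_n$ with the same error type; since we only ever query at $t$ strictly between consecutive $x_i$ values, the strict-vs-non-strict convention for rank is irrelevant.

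\textbf{Additive bound.} An $\eps$-additive counting sketch yields, for every $t$, the estimate $\hat r(t):=n-\hat n_{\text{now}}(t)$ with $|\hat r(t)-r(t)|\le\eps n$, i.e.\ an $\eps$-additive quantile summary of $x_1,\dots,x_n$. To see that this forces $\Omega(1/\eps)$ words in the word model, I would use a direct encoding argument instead of quoting \cite{karnin2016optimal}: set $m=\lfloor 1/(3\eps)\rfloor$ and consider the instances consisting of $m$ groups of $3\eps n$ identical expiration times, the $j$-th group placed at a value $g_j$, where $g_1<g_2<\dots<g_m$ are arbitrary values drawn from a domain of size $N=\poly(n)$ (so that a word holds $\Theta(\log n)$ bits). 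For $t$ strictly between $g_\ell$ and $g_{\ell+1}$ we have $r(t)=3\eps n\cdot\ell$, so an $\eps n$-additive estimate pins down $\ell$ exactly; hence the frozen sketch's answer function determines the entire tuple $(g_1,\dots,g_m)$, and there are $\binom{N}{m}$ such tuples. The sketch therefore stores $\Omega(\log\binom{N}{m})=\Omega(m\log N)=\Omega((1/\eps)\log n)$ bits, i.e.\ $\Omega(1/\eps)$ words.

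\textbf{Relative bound and main obstacle.} By the same reduction, an $\eps$-relative counting sketch is an $\eps$-relative quantile summary of the expiration times, except that it controls relative error with respect to the \emph{upper} rank $n_{\text{now}}(t)=n-r(t)$ rather than the lower rank $r(t)$; reversing the order of the domain in the hard instance interchanges these two notions, so the lower bound $\Omega((1/\eps)\log(\eps n))$ words for relative-error quantile summaries \cite{cormode2021relative} (matched by the sketch of \cite{gribelyuk2025near}) carries over. The one genuinely delicate point in writing this up is keeping the models aligned: one must verify that the quantile lower bounds invoked are indeed word-space, all-queries bounds --- the additive one by the encoding argument above, the relative one because \cite{cormode2021relative} states it in words and its hard instance only uses the all-queries guarantee --- and that the order-reversal needed to pass from "relative error in the tail" to the standard "relative error from the bottom" leaves the construction intact. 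Everything else (legality of the constructed stream, the rank convention, and the degenerate regime $\eps<1/n$ where $\Omega(1/\eps)$ should be read as $\Omega(\min\{1/\eps,n\})$) is routine.
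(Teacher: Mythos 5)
Your reduction --- treating the expiration times as an insertion-only quantile stream so that counting the active set at a future time $t$ is a rank query --- is exactly the paper's; the paper's ``proof'' is really just the paragraph preceding the theorem plus citations (the implicit $\Omega(1/\eps)$ element-storage bound for additive quantile sketches, and Cormode et al.\ for the relative-error bound). Your genuine addition is the self-contained encoding argument for the additive case: rather than relying on the observation that a comparison-based quantile sketch must store $\Omega(1/\eps)$ boundary elements, you show directly that any word-space sketch must distinguish $\binom{N}{m}$ instances with $m=\Theta(1/\eps)$, giving $\Omega(m\log(N/m))=\Omega((1/\eps)\log n)$ bits, i.e.\ $\Omega(1/\eps)$ words. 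This is cleaner and slightly more general, since it applies to arbitrary, not merely comparison-based, sketches. Your handling of the relative bound, including the remark that counting corresponds to an upper rank and that reversing the domain order aligns it with the usual lower-rank convention in Cormode et al., matches the paper's intent. The one loose end, shared with the paper, is that the encoding argument as written is for deterministic sketches; the standard fixing-the-coins averaging step recovers the same word bound for randomized sketches with constant success probability.
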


As noted above, in the insertion-only, the counts can be maintained exactly in $\calO(\log n)$ bits of space, or $\calO(1)$ words of space.  Recall, in the sliding-window model~\cite{DGIM02}, $\eps$-relative error count requires $\Theta((1/\eps) \log^2 w)$ bits for size $w$ sliding-windows.

\subsection{Implications of Counting}

A counter is a classic tool in many different, especially linear, sketches.  As a demonstration, we now show that we can leverage this approximate counter inside a Count-Min sketch to solve the frequent items problem.  

These approach consider the frequency approximation problem where a stream $X = \langle x_1, x_2, \ldots \rangle$ observes a sequence of items each $x_i \in [U]$ for some very large domain $[U] = \{1, 2, \ldots, U\}$ (e.g., all IP addresses).
The goal it be able to answer the frequency $f_j = |\{x \in X \mid x = j\}|$ of any item $j \in [U]$.

\subparagraph*{Count-Min Sketch.}
The Count-Min Sketch~\cite{cormode2005improved} maintains a $k \times t$ table of counters $C$ for $k = 2/\eps$ and $t = \log_2 (1/\delta)$.
It then uses a set of $t$ $4$-way independent hash functions $h_r : [U] \to [k]$ (for $r \in [t]$). At an insertion of an object $x$ the sketch increments one counter in each row of the table $C$. In particular, for row $r$ it updates counter $c = h_r(x)$ as  $C_{r,c} = C_{r,c} + 1$.  Then on a query $q \in [U]$, the sketch returns the smallest associated counter
\[
S(q) = \min_{r \in [t]} C_{r,h_r(q)}.  
\]
It guarantees that $S(q) \geq f_q$, and with probability at least $1-\delta$ that 
\[
S(q) - f_q \leq \eps n.
\]

Note that as a \emph{linear sketch}, it can handle streams that provide instantaneous insertions or deletions.  We simply increment or decrement each associated counter.  However, in the expiration streaming model this approach does not carry over, since we would need to retain the expiration times in memory until the time it expires.  

We can augment this sketch to fit in the expiration model by replacing each exact counter $C_{r,c}$ with an additive error expiration counter $\hat C_{r,c}$, as described in Theorem \ref{thm:GE-counter}.  
We make the observation that our new estimate $\hat S(q)$ for any query $q \in [U]$ satisfies, with probability~$1-\delta/t$.  
\[
|S(q) - \hat S(q)| \leq \left| \min_{r \in [k]} ( C_{r,h_r(q)}  -  \hat C_{r,h_r(q)}) \right| \leq \eps n, 
\]
and as a result, we have $|f_q - \hat S(q)| \leq 2 \eps n$.  
Adjusting for constant factors in $k$ and $t$ we can obtain the following theorem.

\begin{theorem}
    \label{thm:GE-CMS}
    By using expiration approximate counters in a Count-Min Sketch $\hat S$, storing $\calO((1/\eps^2) \log(1/\delta)\log\log(1/\delta))$ words, we can handle expiration updates and guarantee with probability at least $1-\delta$ that for any query $q$ we can approximate the frequency 
    \[
|\hat S(q) - f_q | \leq \eps n.
    \]
\end{theorem}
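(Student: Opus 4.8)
The plan is to instantiate the Count-Min Sketch with a sufficiently wide table and replace each exact cell counter by a general-expiration additive-error counter from \Cref{thm:GE-counter}, so that the error of the resulting estimate splits into a ``hashing'' part and a ``counting'' part, each absorbing half of the budget $\eps n$. Concretely, I would run the sketch with width $k=\lceil 4/\eps\rceil$ and depth $t=\lceil\log_2(2/\delta)\rceil$, using $t$ independent $4$-way independent hash functions $h_r\colon[U]\to[k]$, and I would replace each of the $kt$ exact cell counters $C_{r,c}$ by an instance $\hat C_{r,c}$ of the $\eps'$-additive-error general-expiration counter of \Cref{thm:GE-counter} with $\eps'=\eps/2$ and per-counter failure probability $\delta'=\delta/(2t)$. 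An arrival of an item $x$ with expiration time $E(x)$ is handled, for every row $r$, by feeding the expiration time $E(x)$ into the counter $\hat C_{r,h_r(x)}$; a query asking for the (active) frequency of a value $q$ at time $T$ is answered by $\hat S(q)=\min_{r\in[t]}\hat C_{r,h_r(q)}(T)$.

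For the analysis I would fix a query value $q$ and time $T$, write $C^T_{r,c}$ for the exact number of items active at time $T$ that map to cell $(r,c)$, set $S^T(q)=\min_r C^T_{r,h_r(q)}$, and let $f^T_q$ be the number of items active at time $T$ with value $q$ and $n^T\le n$ the number of items active at time $T$. Then
\[
  |\hat S(q)-f^T_q|\;\le\;|\hat S(q)-S^T(q)|\;+\;|S^T(q)-f^T_q|.
\]
For the second term, the standard Count-Min argument applies verbatim to the multiset of items active at time $T$: in each row the overcount $C^T_{r,h_r(q)}-f^T_q$ is a sum of collision contributions with expectation at most $n^T/k\le n/k=(\eps/4)n$, so by Markov's inequality it exceeds $(\eps/2)n$ in a fixed row with probability at most $1/2$; since the $t$ rows use independent hash functions, $0\le S^T(q)-f^T_q\le(\eps/2)n$ holds with probability at least $1-2^{-t}\ge 1-\delta/2$. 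For the first term I would invoke the elementary bound $|\min_r a_r-\min_r b_r|\le\max_r|a_r-b_r|$ together with \Cref{thm:GE-counter}: each relevant counter $\hat C_{r,h_r(q)}$ receives at most $n$ items, so $|\hat C_{r,h_r(q)}(T)-C^T_{r,h_r(q)}|\le\eps' n=(\eps/2)n$ with probability at least $1-\delta'$, and a union bound over the $t$ relevant cells gives $|\hat S(q)-S^T(q)|\le(\eps/2)n$ with probability at least $1-t\delta'=1-\delta/2$. A final union bound over the two failure events yields $|\hat S(q)-f^T_q|\le\eps n$ with probability at least $1-\delta$, which is the claimed guarantee (and reporting all $q$ with $\hat S(q)\ge\eps n$ then gives the $\ell_1$-heavy hitters in the usual way).

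The space bound is pure accounting: there are $kt=\mathcal{O}((1/\eps)\log(1/\delta))$ cell counters, each of which occupies $\mathcal{O}((1/\eps')\log\log(1/\delta'))$ words by \Cref{thm:GE-counter}; since $\eps'=\Theta(\eps)$ and $1/\delta'=\mathcal{O}(t/\delta)=\mathcal{O}(\delta^{-1}\log(1/\delta))$, we have $\log\log(1/\delta')=\mathcal{O}(\log\log(1/\delta))$, so each counter uses $\mathcal{O}((1/\eps)\log\log(1/\delta))$ words and the total is $\mathcal{O}((1/\eps^2)\log(1/\delta)\log\log(1/\delta))$ words.

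The individual steps are routine once the construction is fixed; the only points that need genuine care are (i) cleanly separating the two independent sources of randomness -- the hash functions, which enter only through the ``hashing'' term, and the internal randomness of the quantile-based counters, which enters only through the ``counting'' term -- and apportioning both the failure probability $\delta$ and the additive budget $\eps n$ between them, and (ii) checking that shrinking the per-counter failure probability to $\delta/(2t)$ does not spoil the space bound, which holds precisely because the dependence on it is only doubly logarithmic. I do not expect a substantive obstacle beyond this bookkeeping.
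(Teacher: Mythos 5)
Your proposal is correct and takes essentially the same approach as the paper: replace each exact Count-Min cell counter with a general-expiration additive-error counter from \cref{thm:GE-counter}, split the error into the usual hashing-collision term plus the counter-approximation term (each absorbing half of $\eps n$), and union-bound over rows and over the two failure events. Your write-up is merely more careful in apportioning $\eps$ and $\delta$ (and in using the clean bound $|\min_r a_r - \min_r b_r|\le\max_r|a_r-b_r|$), which the paper handles implicitly by ``adjusting constant factors in $k$ and $t$.''
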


We do not know how to obtain improved bounds using other frequency approximation sketches such as the Count Sketch~\cite{charikar2004finding} or the Misra-Gries Sketch~\cite{misra1982finding}.  In particular, the Count Sketch would be of interest since it could potentially apply to various matrix sketching models~\cite{TCS-060}.  We leave this extension, or showing a lower bound that it is not possible, to future work.

\section{Sampling in Expiration Streams}
\label{sec:GE-Sampling}

In this section we consider another fundamental building block used in many algorithms: a random sample of size $k$.
For insertion-only streams, the classic method to generate
a single sample ($k=1$) is reservoir sampling~\cite{vitter1985random}.
It extends easily to a sample of size $k$
by running $k$ independent copies of reservoir sampling,
which generates \emph{with-replacement} samples. 
Alternatively, one can maintain a single reservoir of size $k$,
and this provides \emph{without-replacement} samples.
This method is a bit harder to analyze,
but may lead to smaller variance when used in some estimators.
These methods can be generalized to \emph{weighted sampling},
where each item $i$ is sampled with probablity proportional to its weight $w_i>0$.
For $k=1$, reservoir sampling directly extends to weighted sampling
by maintaining the total weight $W$
and storing each new item $i$ in the reservoir (instead of the current sample)
with probablity $w_i/W$.
This method easily extends to $k$ with-replacement samples,
by running $k$ independent copies. 
Producing $k$ without-replacement samples has been explored as well.
The bottom-$k$ sampling approach of Cohen and Kaplan~\cite{CK07}
(see also \cite{Cohen97,CK08})
assigns each item $i$ a value $u_i \sim \mathsf{Exp}(w_i)$
and maintains the $k$ items with smallest values.
Efraimidis and Spirakis~\cite{ES06} alternatively choose $\bar{u}_i \sim \mathsf{Unif}(0,1]$
and select the $k$ items with largest value $\bar{u}_i^{1/w_i}$.
These methods are in fact equivalent
because one way to generate $u_i \sim \mathsf{Exp}(w_i)$
is to draw $\bar{u}_i \sim \mathsf{Unif}(0,1]$
and set $u_i = -\ln(\bar{u}_i)/w_i$. 

In the sliding-window model, Babcock, Datar, and Motwani~\cite{BabcockMR02}
provided several algorithms for uniform sampling (i.e., unit weights).
Most are for sequence-based windows,
where the number of active items remains fixed
(because at each time step, one item enters the window and one leaves)
and these algorithms rely on this structure.
They also provide another algorithm for timestamp-based windows,
where the number of active items might vary significantly
(because multiple items can enter and leave the window in a non-proportional manner).
This algorithm assigns each item a random value $u_i \sim \mathsf{Unif}(0,1]$,
called \emph{priority}, 
and the active item with smallest priority $u_i$ is chosen as the sample.
Running $k$ copies of this algorithm in parallel
maintains $k$ samples with replacement,
and with high probability uses $\calO(k \log n)$ words of space.  

In expiration streams, we are able to replicate many of these methods, with space overhead $\calO(\log n)$ per sample, similar to sliding-windows.
Our algorithms produce a random sample from the currently active set, and can furthermore produce a sample for any future time $t$ among the items already inserted (i.e., assuming no additional insertions).
We present the case of uniformly sampling $k$ items without replacement in \cref{sec:GE-reservoir}, and that of weighted sampling of $k$ items with replacement in \cref{sec:weighted}.
Note that the former can easily be adapted to with-replacement sampling by simply running $k$ independent copies of the $k=1$ algorithm.

These results have a wide variety of implications,
including maintaining an expiration sketch for several problems.
Recall that a \emph{sketch} is a small-space data structure that provides approximate answers to queries,
and often trades off the space bound with the error guarantee.
An \emph{expiration sketch} handles items that are inserted with an expiration time, and can answer a query about the active items at any future time $t$.  
Within this section, we formalize the following results,
where an item can be a real number or a vector in $\RR^d$ (e.g., a point or a row in a matrix);
we count space bounds in words and allow constant probability of failure.  
To simplify the list below, we assume $d$ is fixed in results that consider points, but not in matrix settings.

\begin{enumerate} \compactify
\renewcommand{\labelenumi}{(\alph{enumi})}
\item
  A quantile sketch with $\eps$-additive error using $\calO((1/\eps^2)\log n)$ space; see \Cref{thm:GE-quantile}.
\item
  A sketch for approximate range counting of shapes in $\RR^d$ whose VC-dimension is $\nu$,
using $\calO((\nu /\eps^2)\log n)$ space; see \Cref{thm:GE-ARC}.
\item
  A sketch for PAC learning of Boolean function classes with VC-dimension $\nu$, 
  using space in $\calO((\nu/\eps^2)\log n)$; see \Cref{thm:GE-learning}.
\item
  A sketch for $\frac{1}{t}\|w\|_2^2$-regularized logistic regression in $\RR^d$ (and other classification loss functions),
  using $\calO((t^3/\eps^2)\log n)$ space; see \Cref{thm:GE-logistic-reg}. 
\item
  A sketch for $\eps$-approximate kernel density estimation in $\RR^d$
  using $\calO((1/\eps^2)\log n)$ space; see \Cref{thm:GE-KDE}.  
 \item
   A sketch of a matrix $A \in \RR^{n \times d}$ presented in row-order with expirations,
   for approximating the best projection onto rank-$k$ matrices with additive error $\eps \|A\|_F^2$,
   using space in $\calO((dk/\eps^2) \log \|A\|_F^2)$; see \cref{thm:GE-mat-proj}.
 \item
   Another sketch of a matrix $A \in \RR^{n \times d}$ presented in row-order with expirations,
   for approximating the covariance matrix $A^T A$ with additive error $\eps \|A\|_F^2$, 
   using $\calO((d^2/\eps^2) \log \|A\|_F^2)$ space; see \cref{thm:GE-mat-cov}. 
\end{enumerate}

\subsection{Uniform Sampling with Expirations}
\label{sec:GE-reservoir}

For insertion-only streaming, reservoir sampling~\cite{vitter1985random} (and its variants) is a powerful technique to maintain a uniformly random set of $k$ items, called a \emph{reservoir}, from those inserted to the stream.
In the special case $k=1$, when the $i$-th item arrives, it is kept with probability $1/i$, and if so it replaces the item in the reservoir.
For general $k$, we traditionally keep $k$ one-item reservoirs independently, although in practice it is often better to have $k$ items \emph{without replacement},
which can be implemented by generating for each $i$-th item a random $u_i \sim \mathsf{Unif}(0,1]$,
which is often called \emph{priority},%
\footnote{Note that priority sampling~\cite{duffield2007priority} use $1/u_i$,
  although for our unweighted case the technique appeared earlier~\cite{Luby86,BabcockMR02,BorodinNR02}.
}
and maintaining a single size-$k$ reservoir containing the $k$ items with the smallest $u_i$ values. 
Its size is related to the $n$-th Harmonic number $H_n=1+\tfrac12+\ldots+\tfrac1n \in (\ln n, \ln n +1)$.
As a warm-up, even before introducing expirations, we generalize 
the with-replacement result of~\cite{BabcockMR02} to the without-replacement setting. 

\begin{lemma}
    \label{lem:k-res-sample}
    In insertion-only streams, the total number of items ever stored (possibly temporarily) in a size-$k$ without-replacement reservoir
    is at most $k H_n$ in expectation, and is at most 
    $k (8H_n + \ln(1/\delta))$ 
    with probability at least $1-\delta$.
\end{lemma}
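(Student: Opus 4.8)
The plan is to analyze the number of items ever stored in the size-$k$ without-replacement reservoir by identifying exactly when an item enters the reservoir. When the $i$-th item arrives with priority $u_i \sim \mathsf{Unif}(0,1]$, it is placed in the reservoir if and only if $u_i$ is among the $k$ smallest priorities seen so far, i.e.\ among $u_1,\dots,u_i$. By symmetry of the i.i.d.\ priorities, this event has probability $\min\{1,k/i\}$, independently of the prior history in the sense relevant to counting. So if $Z_i$ denotes the indicator that item $i$ ever enters the reservoir, then $\E[Z_i] = \min\{1,k/i\}$, and the total number of items ever stored is $Z = \sum_{i=1}^n Z_i$ with
\[
  \E[Z] = \sum_{i=1}^n \min\{1,k/i\} = k + \sum_{i=k+1}^n \frac{k}{i} = k\Bigl(1 + \sum_{i=k+1}^n \tfrac1i\Bigr) \le k H_n,
\]
which gives the expectation bound.

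For the high-probability bound, the first obstacle is that the $Z_i$ are \emph{not} independent (whether item $i$ is retained depends on the order statistics of $u_1,\dots,u_i$, which couples across $i$). The fix is to observe that $Z$ can be written as a function of the independent priorities $u_1,\dots,u_n$, and more usefully, $Z = \max_{j} |\{i \le j : u_i \text{ is among the } k \text{ smallest of } u_1,\dots,u_i\}|$ is dominated by a sum of independent indicators after a suitable coupling: condition on the value $v_j$ of the $k$-th smallest priority among $u_1,\dots,u_j$ at a geometrically spaced set of checkpoints $j$, and bound the number of insertions in each window $(2^\ell, 2^{\ell+1}]$ by a sum of independent Bernoulli variables with parameters controlled by $k/2^\ell$. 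A cleaner route, which I would take, is to split the stream at index $k$: the first $k$ items contribute at most $k$ deterministically, and for the ``tail'' $i > k$, the events $\{Z_i = 1\}$ are negatively associated (an item entering makes it \emph{less} likely, not more, for later items to enter, because the reservoir's $k$-th smallest priority only decreases). Negative association preserves the upper-tail Chernoff bound, so $\Pr[Z > (1+\beta)\E[Z]] \le \exp(-\beta^2 \E[Z]/3)$ for $\beta \le 1$ and $\exp(-\beta \E[Z]/3)$ for $\beta > 1$.

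Then it is a routine calculation to convert this into the stated form. With $\mu := \E[Z] \le k H_n$, choosing $\beta$ so that $(1+\beta)\mu = k(8H_n + \ln(1/\delta))$ — i.e.\ $\beta \mu \ge 7kH_n + k\ln(1/\delta) \ge 7\mu + k\ln(1/\delta)$ — we are in the regime $\beta > 1$, and the Chernoff bound gives failure probability at most $\exp(-\beta\mu/3) \le \exp(-(7\mu + k\ln(1/\delta))/3) \le \exp(-\ln(1/\delta)) = \delta$ (using $k \ge 1$ and $7/3 > 1$). Hence $Z \le k(8H_n + \ln(1/\delta))$ with probability at least $1-\delta$. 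The main obstacle is cleanly justifying the negative-association (or coupling) step so that a Chernoff-type concentration applies despite the dependence; once that is in hand, the rest is bookkeeping.
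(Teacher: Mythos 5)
Your expectation bound matches the paper's. The high-probability part has a genuine gap, though, and it stems from a wrong structural claim: you assert that the indicators $Z_i = \mathbbm{1}[\text{item }i\text{ enters the reservoir}]$ are \emph{not} independent, and you build your entire concentration argument around working around that dependence. In fact they \emph{are} mutually independent, and this is precisely the observation the paper's proof uses. The reason is classical: if $R_i$ denotes the rank of $u_i$ among $u_1,\dots,u_i$, then $(R_1,\dots,R_n)$ is a bijective re-encoding of the uniform random relative order of $(u_1,\dots,u_n)$, from which it follows that the $R_i$ are independent with $R_i \sim \mathsf{Unif}\{1,\dots,i\}$. Since $Z_i = \mathbbm{1}[R_i \le k]$, the $Z_i$ are independent Bernoulli with parameter $\min\{1,k/i\}$. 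Note in particular that conditioning on $Z_i=1$ does not make $Z_j=1$ ``less likely'' for $j>i$, as your negative-association heuristic claims; it has no effect at all. So the NA step you flag as ``the main obstacle'' is not an obstacle, and the informal justification you offer for it is incorrect (even though the conclusion that an upper-tail Chernoff applies is salvageable, because independence trivially implies NA).

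Once independence is in hand, any concentration route works. The paper invokes Mulmuley's bound for sums of independent ``generalized harmonic'' indicators ($\Pr[X_i=1]\le \mu/i$), which gives $\Pr[S_n \ge c\mu H_n] \le \exp(-H_n[1 + c\ln(c/e)])$, and plugs in $\mu=k$, $c = 8 + \ln(1/\delta)/H_n$. Your proposed Chernoff route is fine in spirit, but the specific form $\Pr[Z\ge(1+\beta)\mu]\le\exp(-\beta\mu/3)$ does not quite close the bookkeeping for small $k$ and small $\delta$: with $k=1$ you get $\beta\mu = 7H_n + \ln(1/\delta)$, and $\exp(-\beta\mu/3)\le\delta$ requires $7H_n \ge 2\ln(1/\delta)$, which fails when $\delta$ is much smaller than a polynomial in $1/n$. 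You should instead use the sharper Poisson-tail form $\Pr[Z\ge t]\le (e\mu/t)^t$: with $t = k(8H_n+\ln(1/\delta))\ge 8\mu$ this yields $\Pr[Z\ge t]\le e^{-t}\le\delta$ cleanly. So the fix is: drop the non-independence premise and the NA workaround, state independence of the $Z_i$ via the relative-rank argument, and then apply a Chernoff bound with the multiplicative/Poisson tail rather than the $\exp(-\beta\mu/3)$ form.
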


\begin{proof}
Let $X_i$ be an indicator random variable for the event that item $i$ is stored in the reservoir when it arrives,
and let $S_i = \sum_{j=1}^i X_j$ be the number of items among $1,\ldots,i$ ever placed in the reservoir.

We argue that $\Pr[X_i = 1] = \min\{1, k/i\}$. 
Indeed, for $i \leq k$ we have deterministically $X_i=1$ and $S_i = i$.
For $i > k$, let us sort the values $u_1,\ldots,u_i$ seen so far, 
and then $X_i =1$ if $u_i$ is in the bottom $k$ values. 
Since each $u_{i'}$ is from the same distribution ($\mathsf{Unif}(0,1]$),
where $u_i$ falls in the sorted order of $u_1,\ldots,u_i$ is uniform over the $i$ possible positions,
and moreover it is independent of the relative order of the earlier $u_{i'}$.
Thus, the probability that $u_i$ is in the bottom $k$ is exactly $k/i$,
\emph{independently} of the same event for $i'<i$
(i.e., whether each earlier $u_{i'}$ was in the bottom $k$ when it arrived).
Note that we condition here only on the relative order, not the actual values of $u_{i'}$.  
Thus 
\[
\E[S_n] = \sum_{i=1}^n \E[X_i] = \sum_{i=1}^n \min\{1,k/i\}
\leq k \sum_{i=1}^n 1/i = k H_n .
\]
More precisely $\E[S_n] = k H_n - k H_k + k$, where the last two terms adjust for the determinism of the first $k$ items.   

To get the tail bound, we can use a result of Mulmuley~\cite[Theorem A.1.7, part 2]{Mul94} which states that given a set of $n$ independent generalized harmonic random variables, namely $\Pr[X_i=1] \leq \mu/i$,
letting $S_n = \sum_{i=1}^n X_i$,
we have $\Pr[S_n \geq c \mu H_n] \leq \exp(-H_n[1+c \ln(c/e)])$ for all $c>1$.
This holds for our setting with $\mu = k$,
and now setting $c = 8 + \frac{1}{H_n} \ln(1/\delta)$,
we have $\ln(c/e)>1$ and 
\begin{align*}
\Pr[S_n \geq 8 k H_n +  k \ln(1/\delta)] 
  & \leq
\exp\Big(-H_n \Big(1 + 8 + \frac{1}{H_n} \ln(1/\delta)\Big)\Big)
< \delta ,
\end{align*}
as claimed.
\end{proof}

\subparagraph*{Sample Thinning via Backwards Analysis.}
We cannot directly apply this insertion-only analysis in the expiration model.  Instead, we borrow ideas from a model where the size of the sliding-window can be chosen at query time~\cite{shi2021time}, and the original priority-based algorithm of Babcock \etal~\cite{BabcockMR02}.  
These ideas are present also in sliding-window algorithms for linear algebra~\cite{braverman2020near}, sampling Lewis weights~\cite{woodruff2025online}, and $k$-clustering~\cite{woodruff2023near}.
We adapt this here to show how to maintain a valid set of samples at any future time even with general expirations.

Denote the set of active items at time $t$ by $A(t)$,
and its size by $C(t)=|A(t)|$. 
At a given time $t_0$ (some place in the stream order), 
we can consider a future time $t\ge t_0$ (assuming no additional insertions).
Future times where $C(t) \leq k$, which we call the \emph{tail zone},
require us to store all their respective  active items deterministically.  

The expiration $k$-sample sketch works as follows, with three operations at each event:  insertion of new item, removal of expired items, and a thinning step to keep small the sketch size.  
On insertion of a new item $x_i$, it is assigned a priority value $u_i \sim \mathsf{Unif}(0,1]$ and stored in the sketch.  
The \emph{thinning step}
considers each item $x_j$ currently in the sketch,
and if at least $k$ items in the sketch expire no earlier than $E(x_j)$
and have priority value smaller than $u_j$,
then it removes this $x_j$ from the sketch.
Obviously, this thinning step is vacuous if the sketch currently stores less than $k$ items.
When queried about future time $t$, report the $k$ stored items with smallest $u_i$ values among those that expire after $t$.  
We do not analyze the processing time, but to improve it,
the sketch should store its items in sorted order of their expiration times,
or an even more sophisticated data structure.

\begin{theorem}
    \label{thm:GE-sample}
    The sample-thinned sketch can retrieve $k$ without-replacement uniformly random items, among the active set $A(t)$, for any future time $t$ before the tail zone.  In the tail zone, it retrieves all active items (which is at most $k$).  With probability at least $1-\delta$, the sketch never stores more than $k (9 \ln n + 8 + \ln(1/\delta))$ items.
\end{theorem}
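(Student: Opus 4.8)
The plan is to first establish a clean structural description of what the sketch stores, and then read off both the retrieval guarantee and the space bound from it. The key observation is that every query asks about a \emph{future} time $t$, i.e., $t \ge t_{\mathrm{now}}$ where $t_{\mathrm{now}}$ is the current time, so every already-inserted item $x$ satisfies $S(x)\le t_{\mathrm{now}}\le t$; hence the active set at a future time is simply $A(t)=\{x \text{ inserted} : E(x)>t\}$. Ordering the inserted items by decreasing expiration time (breaking ties by increasing priority), the sets $A(t)$ are exactly the prefixes of this order, and retrieving the $k$ smallest priorities of $A(t)$ is exactly the task of maintaining the reservoir of a size-$k$ without-replacement reservoir sampler run on this reversed-expiration sequence (a backwards analysis in the spirit of \cite{BabcockMR02,shi2021time}). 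I would therefore prove the invariant that at all times the sketch stores exactly
\[
  \mathcal{S} \;=\; \bigl\{\, x \in A(t_{\mathrm{now}}) \;:\; \#\{\, y \in A(t_{\mathrm{now}}) : E(y)\ge E(x),\ u_y<u_x \,\} < k \,\bigr\},
\]
i.e., the union over all prefixes of the decreasing-expiration order of that prefix's bottom-$k$ priorities. Here one uses that any item $y$ that ``blocks'' an active item $x$ (meaning $E(y)\ge E(x)$ and $u_y<u_x$) has $E(y)\ge E(x)>t_{\mathrm{now}}$ and is therefore itself active, so expired items are irrelevant. (Priorities are a.s.\ distinct, so the bottom-$k$ is well defined.)

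The invariant is maintained event by event. For expirations it is immediate: removing items with $E\le t_{\mathrm{now}}$ changes neither $A(t)$ for any $t\ge t_{\mathrm{now}}$ nor the blocker count of any surviving item (a blocker of an active item is active). For an insertion of $x_i$, the item is added and then the thinning step is run to a fixed point. One inclusion is trivial: thinning only removes an item that already has $\ge k$ blockers among the current sketch contents, hence $\ge k$ blockers overall, so no item of $\mathcal S$ is ever removed. For the reverse inclusion I would show that any $x$ with $\ge k$ blockers is eventually removed: if not, at the fixed point some blocker of $x$ is absent; taking $y$ to be the absent blocker of smallest priority, the $\ge k$ items that blocked $y$ at its removal all have expiration $\ge E(y)\ge E(x)$ and priority $<u_y<u_x$, so they also block $x$, and by minimality of $y$ they are all present --- contradicting that $x$ survived with fewer than $k$ in-sketch blockers. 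This ``chain of blockers'' argument is the main obstacle: it is where the incremental, sketch-local thinning rule has to be reconciled with the global characterization $\mathcal S$, and one must check it holds at the fixed point regardless of the order in which thinning removes items (equivalently, a single pass in increasing order of priority already reaches the fixed point).

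With the invariant in hand the retrieval guarantee follows quickly. For a future time $t$, the stored items with $E>t$ form a subset of $A(t)$ that contains the $k$ smallest-priority members of $A(t)$ (these lie in $\mathcal S$), so reporting their $k$ smallest priorities returns exactly the bottom-$k$ priorities of $A(t)$. Since the priorities are i.i.d.\ $\mathsf{Unif}(0,1]$ and independent of the insertion/expiration data, conditioned on $A(t)$ these priorities are still i.i.d.\ uniform, so their bottom-$k$ is a uniformly random $k$-subset of $A(t)$, sampled without replacement; when $C(t)\le k$ it is all of $A(t)$, which gives the tail-zone claim.

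For the space bound, fix a time $T$ and let $N=C(T)\le n$. Ordering the $N$ active items by decreasing expiration time, the invariant says the sketch stores precisely the items ever placed into the reservoir of a size-$k$ without-replacement reservoir sampler run on this $N$-item sequence, so \Cref{lem:k-res-sample} applies verbatim: with probability at least $1-\delta'$ the sketch holds at most $k(8H_N+\ln(1/\delta'))\le k(8\ln n+8+\ln(1/\delta'))$ items at time $T$, using $H_N\le \ln n+1$. Between consecutive insertions only expirations occur, so the sketch size is non-increasing there, and its maximum over the whole stream is attained right after one of the $\le n$ insertions (momentarily during an insertion, before thinning, it is one larger, which we absorb into the constants). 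Taking a union bound over these $\le n$ post-insertion states with $\delta'=\delta/n$ yields, with probability at least $1-\delta$, the simultaneous bound $k(8\ln n+8+\ln(n/\delta)) = k(9\ln n+8+\ln(1/\delta))$, as claimed.
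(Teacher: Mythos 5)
Your proposal is correct and follows essentially the same route as the paper: characterize what survives thinning, note that retrieval at a future time $t$ is the bottom-$k$ by priority of $A(t)$, and bound the sketch size by simulating a size-$k$ without-replacement reservoir sampler over $A(t_0)$ in decreasing order of expiration (via \Cref{lem:k-res-sample}) plus a union bound over insertions. The one place you go beyond the paper is the ``chain of blockers'' argument for the reverse inclusion $\mathcal{S}\supseteq\text{sketch}$; the paper simply asserts that the local (in-sketch) thinning condition coincides with the global (all-active-items) reservoir condition, whereas you justify that coincidence explicitly by inducting along the minimum-priority absent blocker, which is a genuine, if modest, tightening of the paper's exposition.
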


\begin{proof}
There are 3 aspects to prove:  
(1) that the sketch can always produce the correct number of samples, $k$ (unless in the tail zone); 
(2) the size bound for the sketch; and
(3) the probabilistic guarantee for the full run of the sketch.  

We first show that an item $x_i$ cannot be removed at some time $t_0$
if it would be used as part of a sample at some future time $t\ge t_0$.
We show this in contrapositive direction.
Assume that $x_i$ is removed at time $t_0$,
hence the sketch must store (at least) $k$ items
that have both smaller priorities and no-earlier expiration.
It follows that at the future time $t$, these $k$ items would be active 
and should be part of the sample instead of $x_i$, which is a contradiction.

This argument shows that at any time $t$ before the tail zone,
the sketch stores the $k$ active items with largest priorities as desired.  
Furthermore, this argument covers also the tail zone,
i.e., items that would be active in a later tail zone are not thinned,
and will not be removed until they expire.


    The space bounds follow from a ``backwards analysis'' of the without-replacement reservoir sampling captured in \cref{lem:k-res-sample}.  
    For a current time $t_0$, consider running the without-replacement insertion-only procedure on all active items $A(t_0)$, but processed in decreasing order of expirations, i.e., from largest to smallest value of $E(x_i)$, and retaining the $k$ items with smallest priorities (the $u_i$ value).
    By \cref{lem:k-res-sample} the total number of items ever stored by this procedure is at most $k(8 \ln n + 8 + \ln(1/\delta))$ with probability at least $1-\delta$.
    For every future time $t > t_0$ (before the tail zone), this procedure stores the $k$ items $x_{i'}$ with smallest priorities among those processed so far in that decreasing expiration order, i.e., among those satisfying $E(x_{i'}) \geq t$.
    Thus, an item $x_i$ processed at this time $t=E(x_i)$ is not stored
    only if there are already $k$ items (so their expiration times are at least as large) with smaller priorities.
    This is the exact same condition under which the item $x_{i'}$ would be thinned from the sketch.
    Hence, the space bound from \cref{lem:k-res-sample} applies also to our expiration sketch for active items $A(t_0)$.

    The space bound from \cref{lem:k-res-sample} holds for any single time $t_0$,
    and we need to invoke it separately for each possible $t_0$,
    because each invocation applies to a different set of active items $A(t_0)$,
    which changes with each insertion or expiration.
    To obtain a space bound for the entire execution,
    we apply a union bound after each of $n$ insertions
    (notice that in our algorithm an expiration cannot increase the space),
    which requires us to scale $\delta$ to $\delta/n$ and increases the space bound to $k(8\ln n + 8 + \ln(n/\delta))$.    
\end{proof}

There are three main differences here from the algorithm and analysis for the sliding-window model~\cite{BabcockMR02,shi2021time,braverman2020near,woodruff2025online}. 
First, we analyze the without-replacement case.  
Second, we need to maintain and update the set of items which will be among the samples at some future time,
whereas previous approaches were similar to the consistent model and only needed to compare an inserted item to the most recently retained items.    
Third, we observe that the analysis linking to the Harmonic numbers can still be applied; this is by simulating the process in the reverse order of expirations even if this does not align with the order of insertions.

\subsection{Implications of Uniform Sampling}
\label{sec:ImplicationsUniform}

\begin{figure}
    \centering
    \includegraphics[width=0.8\linewidth]{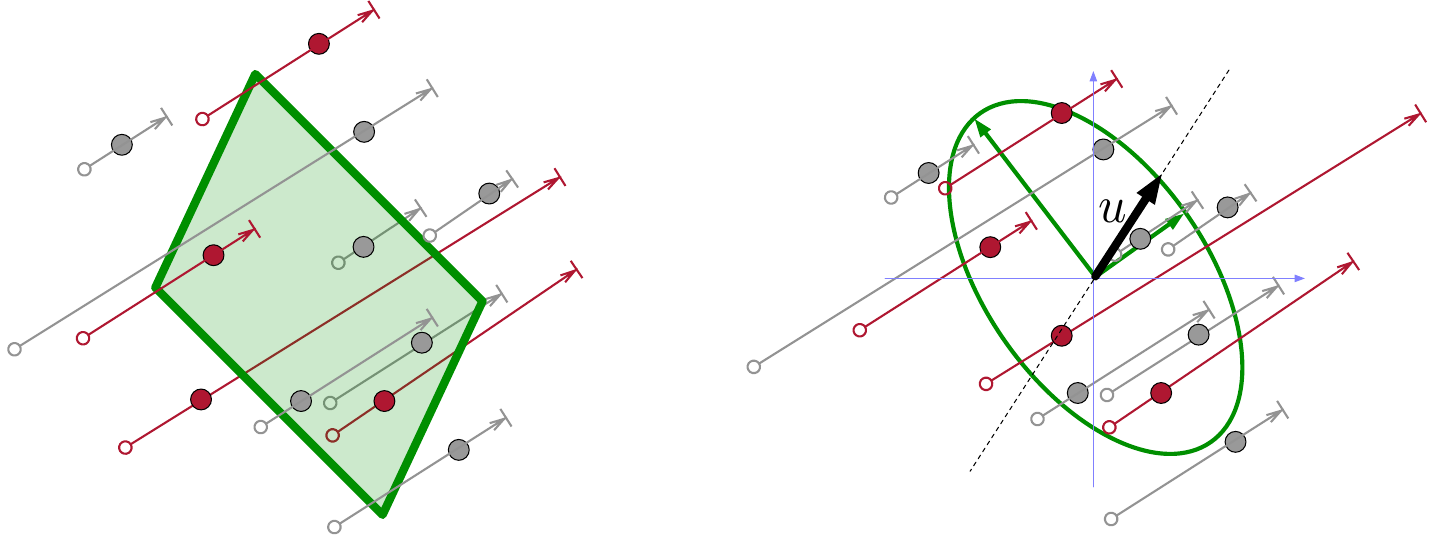}
    \caption{Illustration of geometric implications of (weighted) sampling of points $x \in \RR^2$ in expiration streams. 
    Lines (in a third dimension) through active points $x$ represent insertion time $S(x)$ (as $\circ$) and expiration time $E(x)$ (as $\dashv$).  
    Red points are selected by sample.  
    On the left, \emph{range counting}: 
    A green range admits approximate counting of the active points using the red sample.
    On $n$ points, the error is additive $\eps n$.    
    On the right, \emph{matrix covariance}:  
    A green ellipse is the best fit to the red sample among the active points,  
    which geometrically means that in every direction $u$, the squared distance (from the origin)
    to the ellipse approximates the average squared distance of all points $X$, 
    if they are projected onto $u$ (this is $\|X u\|^2/n$). 
    The error is additive $\eps \|X\|_F^2/n$,  
    where $\|X\|_F^2$ is the sum of squared distances of the points to the origin. 
  }
    \label{fig:GE-geom}
\end{figure}

Many streaming results follow from maintaining one or a few uniformly random samples,
and we document a number of them in the rest of this subsection, also see Figure~\ref{fig:GE-geom}.

These results rely on an algorithm that produces, at any point in the stream,
$k$ with-replacement samples from $A(t)$ for any future time $t$,
and fails (e.g., exceeds its space bound) with probability at most $\delta>0$,
for any desired parameters $k$ and $\delta$.
We explain below how to design this sampling algorithm 
when $k > c\log(1/\delta)$ for sufficiently large but fixed $c > 1$,
which indeed holds in all our intended applications. 
We can simply run $k$ independent copies of \cref{thm:GE-sample},
each using parameter $k_0 = 1$ (producing $1$ item each). 
This increases the expected sketch size by factor $k$,
but we actually need to bound the sketch size with probability $1-\delta$.
One simple approach is to terminate any copy
that exceeds sketch size $\calO(\log (n/\delta'))$ for $\delta'=\delta/k$;
then each copy terminates (separately) with probability at most $\delta'$,
and by a union bound, the overall failure probability is $k\delta'=\delta$.
When $k\leq n$, the total sketch size is $O(k\log(n/\delta')) = O(k\log n)$,
and when $k>n$ it is even smaller because we can simply store all $n$ items. 
Another approach is to build on the proof of \cref{thm:GE-sample};
we can modify it to use a single Chernoff bound for all the $k$ copies together,
and show that the total sketch size is $\calO(k\log n)$
with probability at least $1-e^{-\Omega(k\log n)} > 1-\delta$.
Alternatively, we can run $3k$ independent copies of \cref{thm:GE-sample},
and terminate each copy that exceed sketch size $\calO(\log (n/\delta_0))$
for $\delta_0=1/3$;
by a Chernoff bound, with probability $1-e^{-\Omega(k)} \geq 1-\delta$,
at most $2k$ copies are terminated and we obtain $k$ samples as needed,
while the total sketch size is $\calO(k\log n)$. 
To streamline the exposition, we do not repeat these arguments for each application below.

\subparagraph*{Quantile Sketch with Expirations.}
Recall that the quantile sketch of a set of items $X \subset \RR$ allows one to approximate the fraction $F_v = |\{x \in X \mid x \leq v\}|/|X|$ of all items less than a query value $v \in \RR$ for all possible queries.
We say that it achieves \emph{$\eps$-additive error}
if its reported estimate $\hat F_v$
satisfies $|\hat F_v - F_v| \leq \eps$.
It is well known~\cite{agarwal2013mergeable,li2001improved,DKW56} that a uniformly random sample of size $k = \calO((1/\eps^2) \log (1/\delta))$ induces such a sketch,
where the estimate $\hat F_v$ is just $1/k$ times the number of samples that are no larger than $v$.  
Thus, invoking \cref{thm:GE-sample} yields the following result.

\begin{theorem}
\label{thm:GE-quantile}
In an expiration stream consisting of $n$ items in $\RR$,
one can maintain a sketch for $\eps$-additive error quantiles,
that uses $\calO((1/\eps^2)\log n\log(1/\delta))$ space,
and with probability at least $1-\delta$,
it succeeds throughout the stream, on all queries, and all future times $t$.
\end{theorem}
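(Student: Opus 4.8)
The plan is to obtain the quantile sketch as a black-box composition of two facts: (i) the general-expiration uniform sampler of \Cref{thm:GE-sample}, used in the with-replacement $k$-sample form described at the start of \Cref{sec:ImplicationsUniform}, which at any moment can hand back $k$ i.i.d.\ uniform samples of the active set $A(t)$ for any future time $t$ (or all of $A(t)$ when $|A(t)|\le k$), using $\mathcal{O}(k\log n)$ words and failing with probability at most $\delta$; and (ii) the classical fact that a uniform random sample of size $k=\Theta(\eps^{-2}\log(1/\delta))$ is itself an $\eps$-additive quantile summary of the set it is drawn from. So I would run the sampler with this value of $k$ and, on a query $(v,t)$, return $\hat F_v := \tfrac1k\,|\{\text{retrieved samples}\le v\}|$ (which is exact in the tail zone, where all active items are stored anyway).

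For the accuracy guarantee I would invoke the Dvoretzky--Kiefer--Wolfowitz inequality: for i.i.d.\ uniform draws $s_1,\dots,s_k$ from a finite multiset $A$ with empirical CDF $F$, one has $\Pr\!\big[\sup_v |\hat F_v - F_v|>\eps\big]\le 2e^{-2k\eps^2}$, which is at most $\delta$ for $k=\Theta(\eps^{-2}\log(1/\delta))$; see~\cite{DKW56,li2001improved,agarwal2013mergeable}. Applying this with $A=A(t)$, and recalling from \Cref{thm:GE-sample} that the items the sketch returns for time $t$ are exactly such a sample, gives $|\hat F_v-F_v|\le\eps$ simultaneously for all query values $v$. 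The space bound is then immediate: the $k$ priority-based single-sample sketches use $\mathcal{O}(k\log n)=\mathcal{O}(\eps^{-2}\log n\,\log(1/\delta))$ words with probability $1-\delta$, after the usual rescaling of $\delta$ to absorb the sampler's own overflow event.

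The one step requiring care --- and the main obstacle --- is lifting the per-query DKW bound so that it holds simultaneously for \emph{all} future times $t$ and throughout the stream, not just for a single query. Uniformity over the query value $v$ is already built into the supremum in DKW. For uniformity over $t$, the key observation is that at any fixed point in the stream the active sets $\{A(t)\}$ form a nested decreasing family that changes only at the at most $n$ distinct expiration times, and that the sampler re-selects the minimum-priority survivor at each such level, so the $k$ items it reports for each level are genuinely i.i.d.\ uniform from that level's set; hence a union bound over these $\mathcal{O}(n)$ levels (and over the stream prefixes at which a query may be posed) suffices, and its cost can be folded into $\delta$ via the probability-amplification remark at the start of \Cref{sec:ImplicationsUniform}. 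Beyond this bookkeeping the argument is entirely modular, so I do not anticipate further difficulties.
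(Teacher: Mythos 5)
Your proposal takes the same route as the paper: maintain $k = \Theta(\eps^{-2}\log(1/\delta))$ independent copies of the single-item general-expiration sampler from \cref{thm:GE-sample}, view the $k$ retrieved items as a with-replacement uniform sample from $A(t)$, and apply the DKW inequality to obtain an $\eps$-additive quantile summary --- the paper itself disposes of the theorem with the one-liner ``invoking \cref{thm:GE-sample} yields the following result.'' Your explicit discussion of uniformity over query times is a welcome elaboration of a step the paper glosses over, but one caveat: the ``probability-amplification remark'' you cite at the start of \cref{sec:ImplicationsUniform} controls the sampler's \emph{space overflow}, not DKW accuracy; what you actually want is to run DKW with per-event failure probability $\delta/n^2$ (covering the $O(n)$ query positions $t_0$ and the $O(n)$ distinct future-time cutoffs $t$), which raises $k$ to $\Theta\!\left(\eps^{-2}(\log n + \log(1/\delta))\right)$ --- still yielding the stated $\mathcal{O}(\eps^{-2}\log n\log(1/\delta))$ space whenever $\delta\le 1/\mathrm{poly}(n)$, the regime the paper implicitly assumes.
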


To be clear, this bound considers all events parameterized by
a time $t_0$ during the stream, a future time $t\ge t_0$ given what has already arrived by time $t_0$, and a quantile query $v\in \RR$.
It guarantees success in \emph{all of these events together} with probability at least $1-\delta$.

\subparagraph*{Approximate Range Counting with Expirations.}
The \emph{range counting} problem considers a range space $(X,\Rcal)$ of a ground set $X$ and a family $\Rcal$ of subsets which can be queried.
Common examples include $X \subset \RR^d$ (or even $X = \RR^d$) and subsets that are defined geometrically,
say by halfspaces, axis-aligned rectangles, or balls,
meaning that we associate a subset $R \in \Rcal$ (called a range) with the shape defining it.  
For each query range $R \in \Rcal$ (which is often parameterized geometrically),
we count the number of points from $X$ in $R$, denoted $D(R) = |R \cap X|$.
A sketch $S$ for \emph{$\eps$-approximate range counting} allows the same queries,
but provides for every query $R \in \Rcal$ an estimate $S(R)$
that satisfies $|S(R) - D(R)| \leq \eps |X|$.  

A range space $(Y,\Rcal)$ is \emph{shattered} if for every subset $Z \subset Y$ there exists $R \in \Rcal$ such that $R \cap Y = Z$.
The \emph{VC-dimension} $\nu$ of $(X,\Rcal)$ is size of the largest subset $Y \subset X$, for which $(Y,\Rcal)$ is shattered.
It is classically known~\cite{vapnik1971uniform,li2001improved} that
if a range space $(X,\Rcal)$ has bounded VC-dimension $\nu$,
then with probability $1-\delta$,
a random sample $X' \subset X$ of size $k = \calO((1/\eps^2)(\nu + \log(1/\delta))$
forms a sketch $S$ for $\eps$-approximate range counting,
which on a query $R \in \Rcal$ reports $S(R) = |X|/k \cdot |R \cap X'|$.
For many common geometric shapes, the VC-dimension is linear in the dimension: 
for halfspaces $\nu=d+1$, 
for axis-aligned rectangles $\nu = 2d$, and
for balls $\nu = d+1$.  

To obtain this bound in an expiration stream, we can invoke \Cref{thm:GE-sample} to get a sample of size $k$, but we also need to multiply $|R \cap X'|$ by $|X|/k$ where in this setting $|X| = C_t$ is the count of active items at time $t$.
We cannot compute $C_t$ exactly in sublinear space by \cref{lem:hard-count},
however, by \cref{thm:GE-counter}, with probability at least $1-\delta$,
we can compute a relative $(1+\eps')$-approximation $\hat C_t$ of $C_t$ using $\tO((1/\eps) \log(\eps n) \log^3 \log(1/\delta))$ space.
If we choose $k = \calO((1/\eps)^2(\nu + \log(1/\delta))$, 
then our approximate range counting sketch has
space bound that is dominated by the size of the sampling sketch,
and its estimate is $\hat S(R) = (\hat C_t / k)|R \cap X'| = (1 \pm \eps) S(R)$.  Then 
\[
|\hat S(R) - D(R)| \leq |S(R) - D(R)| + \eps S(R) \leq 2 \eps|X|, 
\]
which is our desired bound (up to scaling $\eps$ and $\delta$).
Ultimately, this implies the following result.

\begin{theorem}
\label{thm:GE-ARC}
Let $(X,\Rcal)$ be a range space with bounded VC-dimension $\nu$,
and suppose $X$ is presented as an expiration stream of $n$ points in $\RR^d$. 
Then one can maintain a sketch for $\eps$-approximate range counting,
that uses $\calO((\nu/\eps^2)d\log n \log(1/\delta))$ space, 
and with probability at least $1-\delta$
succeeds throughout the stream, on all queries, and for all future times $t$.  
\end{theorem}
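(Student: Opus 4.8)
The plan is to lift the textbook VC-dimension sampling guarantee into the general-expiration model, feeding it the uniform samples of \Cref{thm:GE-sample} together with the approximate counter of \Cref{thm:GE-counter}. Recall the uniform-convergence bound~\cite{vapnik1971uniform,li2001improved}: since $(X,\Rcal)$ has VC-dimension $\nu$, a uniform random sample $X'\subseteq X$ of size $k=\mathcal{O}\big((1/\eps^2)(\nu+\log(1/\delta))\big)$ satisfies, with probability at least $1-\delta$, that $\big|\,|R\cap X'|/k-D(R)/|X|\,\big|\le\eps$ simultaneously for every range $R\in\Rcal$. I would maintain such a sample of the active set $A(t)$, for every future time $t$, by running $k$ independent single-item copies of the sample-thinned sketch of \Cref{thm:GE-sample}; as in the discussion opening \Cref{sec:ImplicationsUniform}, terminating any copy that grows past $\mathcal{O}(\log(n/\delta))$ stored points (or using the Chernoff variant there) bounds the combined space with probability $1-\delta$. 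On a query range $R$ at time $t$, one scans the $k$ stored samples, tests each for membership in $R$, and outputs $\hat S(R)=(\hat C_t/k)\cdot|R\cap X'|$, where $\hat C_t$ is the count estimate described next.

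The one quantity still missing is $|X|=C_t$, the number of points active at the query time $t$, which by \Cref{lem:hard-count} cannot be obtained exactly in sublinear space. In parallel I would therefore run the relative-error general-expiration counter of \Cref{thm:GE-counter}, which maintains $\hat C_t$ with $\hat C_t\in[(1-\eps)C_t,(1+\eps)C_t]$ using $\tilde{\mathcal{O}}\big((1/\eps)\log(\eps n)\log^3\log(1/\delta)\big)$ words --- asymptotically negligible next to the sampling sketch.

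For correctness, set $S(R)=(C_t/k)\cdot|R\cap X'|$, so that the VC bound gives $|S(R)-D(R)|\le\eps C_t$, and note $\hat S(R)=(\hat C_t/C_t)\,S(R)$ with $\hat C_t/C_t\in[1-\eps,1+\eps]$. Hence
\[
|\hat S(R)-D(R)|\le|\hat S(R)-S(R)|+|S(R)-D(R)|\le\eps\,S(R)+\eps C_t\le\eps(D(R)+\eps C_t)+\eps C_t\le 3\eps C_t=3\eps|X|,
\]
so running with $\eps/3$ in place of $\eps$, and splitting $\delta$ three ways among the sampling construction, the VC sample-quality event, and the counter, yields the desired $\eps|X|$ bound. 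To make the guarantee hold \emph{simultaneously} throughout the stream, on all queries, and for all future times, I would union-bound the VC sample-quality event over the polynomially many distinct active sets that can arise at or after any point in the stream --- which merely folds an extra $\mathcal{O}(\log n)$ into $\log(1/\delta)$ --- and rely on the counter's all-queries guarantee, which holds because a future-time count query is exactly a quantile query on expiration times (cf.\ \Cref{sec:GE-count-quantiles}); the space bound of \Cref{thm:GE-sample} already covers the whole execution. The total space is dominated by the $k=\mathcal{O}((\nu+\log(1/\delta))/\eps^2)$ sampling copies, each storing $\mathcal{O}(\log n)$ points of $\RR^d$ at $\mathcal{O}(d)$ words apiece, i.e.\ $\mathcal{O}((\nu/\eps^2)\,d\log n\log(1/\delta))$ words.

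The step I expect to be the main obstacle is not any single calculation but the careful composition of a \emph{multiplicative} count error with the \emph{additive} sample error while keeping all guarantees uniform over every future time and every range at once: one must verify that the items retained by \Cref{thm:GE-sample} form a genuine uniform sample of each future active set (not merely of $A(t_0)$), that a single union bound covers all the polynomially many such configurations, and that the counter --- whose clean statement is phrased for one query --- can be re-read as simultaneously correct for all future times through its reduction to insertion-only quantiles. Once these are in place, the argument is a direct assembly of the cited tools.
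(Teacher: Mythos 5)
Your proposal is correct and follows essentially the same route as the paper: maintain $k=\mathcal{O}((\nu+\log(1/\delta))/\eps^2)$ uniform samples of the active set via \Cref{thm:GE-sample}, estimate $|X|=C_t$ with the relative-error counter of \Cref{thm:GE-counter}, and combine the multiplicative count error with the additive VC sampling error by the triangle inequality before rescaling $\eps$ and $\delta$. Your extra care about union-bounding the VC quality event over all future active sets is a point the paper glosses over, but it does not change the approach or (up to absorbing a $\log n$ into the failure probability) the stated bound.
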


\subparagraph*{PAC Learning with Expirations.}
An important case of the above construction is when the range space $(\X,\Rcal)$,
which has bounded VC-dimension, 
represents a function class for a machine learning problem.
Most classically, $\X = \RR^d$ where $\Rcal$ is the family of halfspaces,
and each observed $x_i \in \X$ has an associated binary label $y_i \in\{-1,+1\}$; this
encodes the linear classification problem.
Let $\{(x_1, y_1), (x_2, y_2), \ldots \}$ be the set of observed pairs,
denoted in short as $(X,y)$.
In this setting, one wants to find a range (a halfspace) $R^* \in \Rcal$
that does the best job of separating the labels, e.g., minimizing label mistakes.  That is, let 
\[
h_R(x_i) = \begin{cases}+1 & \text{ if } x_i \in R \\ -1 & \text{ if } x_i \notin R \end{cases}
\]
be the halfspace's prediction function,
and let $R_X^*$ be a range that minimizes the number of times $h_R(x_i) \neq y_i$ over all $(x_i,y_i) \in (X,y)$.  

Generalization bounds consider $(X,y)$ that is drawn from an unknown distribution $\mu$,
and use this $(X,y)$ is to choose a range $R$ that is likely to predict
the label of a ``test'' pair $(x_j, y_j)$ drawn from the same distribution $\mu$,
i.e., predict $y_j$ by $h_R(x_j)$.
A fundamental method, called empirical risk minimization (ERM),
is to choose a range that has the best fit to the observations, i.e.,
\[
  R_X^*
  = \arg \min_{R \in \Rcal} \sum_{(x_i, y_i) \in (X,y)} \mathbbm{1}_{ \{h_R(x_i) \neq y_i\} }
  = \arg \min_{R \in \Rcal} \Pr_{(x_i, y_i) \sim (X,y)} [h_R(x_i) \neq y_i] ,
\] 
where the last formulation views $(x_i, y_i)$ as drawn uniformly from the pairs in $(X,y)$.
In comparison, a range that is optimal for the distribution $\mu$ is given by
\[
R_\mu^* = \arg \min_{R \in \Rcal} \Pr_{(x_j, y_j) \sim \mu} [y_j \neq h_R(x_j)].
\]
Classic learning theory results~\cite{vapnik1971uniform,li2001improved}
show that for sufficiently large
$|X| = \calO((1/\eps^2)(\nu + \log(1/\delta))$,
with probability $1-\delta$,  
\[
 \left| \Pr_{(x_j,y_j) \sim \mu}[h_{R_X^*}(x_j) \neq y_j]
- 
 \Pr_{(x_j,y_j) \sim \mu}[h_{R_\mu^*}(x_j) \neq y_j] \right|
 \leq \eps.
\]
That is, the range $R_X^*$ optimized over $(X,y)$ is $\eps$-close in probability of misclassification to the range $R_\mu^*$ optimized over the true distribution $\mu$.  
A sample $(X,y)$ that is large enough to satisfy the above condition,
will be called \emph{$(\eps,\delta)$-PAC} on the range space $(\X, \Rcal)$.

Now suppose that each observation $(x_i,y_i) \in (X,y)$ has an expiration time. 
An \emph{expiration sketch} $S_{(X,y)}$ allows a query about a future time $t$
and produces a prediction function $\hat h_t \leftarrow S_{(X,y)}(t)$,
that approximates the best range prediction function $h_t^*$ (for some range space $\Rcal$) for the active set $A(t)$ of $(X,y)$ at that future time $t$.
A randomized sketch satisfies that with probability at least $1-\delta$,
\[
 \left| \Pr_{(x_i,y_i) \sim (X,y)}[\hat h_t(x_i) \neq y_i]
- 
 \Pr_{(x_i,y_i) \sim (X,y)}[h_t^*(x_i) \neq y_i] \right|
 \leq \eps.
\]
An expiration sketch that satisfies this for all future times $t$
is called an \emph{$(\eps,\delta)$-PAC sketch} for the range space $(\X,\Rcal)$.  

This expiration sketch does not directly compare to $\mu$,
only to the set of observations $(X,y)$.
But if we assume that $A(t)$, the active set of the observations at time $t$,
is drawn from some distribution $\mu(t)$ and is $(\eps,\delta)$-PAC with respect to the function class $\Rcal$,
then by the triangle inequality this sketch is $(2\eps,2 \delta)$-PAC of this unknown $\mu(t)$.  
Again, invoking \Cref{thm:GE-sample},
one can maintain a sketch that achieves these learning bounds under the expiration model.  

\begin{theorem}
\label{thm:GE-learning}
Let $(X,\Rcal)$ be a range space with bounded VC-dimension $\nu$,
and suppose $(X,y)$ is presented as an expiration stream of $n$ points in $\RR^d$. 
Then one can maintain an $(\eps,\delta)$-PAC sketch
that uses space $\calO((\nu/\eps^2)d \log n\log(1/\delta))$,
and succeeds throughout the stream for all future times $t$.
\end{theorem}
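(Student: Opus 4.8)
The plan is to reduce the PAC-learning task to uniform sampling, exactly mirroring the proofs of \Cref{thm:GE-quantile} and \Cref{thm:GE-ARC}. First I would fix a sample size $k = \mathcal{O}((1/\eps^2)(\nu + \log(1/\delta')))$, where $\delta'$ is a success parameter to be pinned down at the end, and invoke \Cref{thm:GE-sample} together with the with-replacement reduction described at the start of \Cref{sec:ImplicationsUniform}. This maintains, at any point in the stream, a set $(X',y')$ of $k$ items drawn uniformly with replacement from the active pairs $A(t)$, for any future time $t$ (assuming no further insertions). Since each stored object is a pair consisting of a point in $\RR^d$ and a label, each of the $k$ sample copies stores $\mathcal{O}(\log n)$ such pairs by \Cref{thm:GE-sample}, so the total space is $\mathcal{O}(k\, d \log n)$ words; with $k = \mathcal{O}((\nu/\eps^2)\log(1/\delta))$ this matches the claimed bound $\mathcal{O}((\nu/\eps^2)\, d \log n \log(1/\delta))$ after the usual rescaling of $\eps$ and $\delta$.

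To answer a query at a future time $t$, the sketch forms the sample $(X',y')$ of $A(t)$ as above and returns the empirical risk minimizer on it, i.e.\ $\hat h_t = h_{R^*_{X'}}$ where $R^*_{X'}\in\Rcal$ minimizes the empirical misclassification count on $(X',y')$. (As in the paper's framing, this is an information/space statement, so we do not worry about the running time of the ERM step.) Correctness then follows from the classical uniform-convergence bound for range spaces of VC-dimension $\nu$~\cite{vapnik1971uniform,li2001improved}: for a uniform sample of size $k = \mathcal{O}((1/\eps^2)(\nu+\log(1/\delta')))$ from $A(t)$, with probability at least $1-\delta'$ the empirical misclassification rate on $(X',y')$ of \emph{every} range $R\in\Rcal$ is within $\eps/2$ of its true misclassification rate over $A(t)$. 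A standard two-line manipulation (comparing $\hat h_t$ and the best hypothesis $h^*_t$ for $A(t)$, each via its empirical and true rates) then shows $\hat h_t$ is $\eps$-close to $h^*_t$ in misclassification probability over $A(t)$, which is precisely the $(\eps,\delta')$-PAC guarantee for that future time.

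The main point requiring care is that the theorem demands success \emph{simultaneously} for all future times $t$ and throughout the stream, whereas the uniform-convergence statement above is for a single fixed target set $A(t)$. I would resolve this with a union bound: the set queried at stream position $t_0$ with future time $t$ is $\{x : x \text{ inserted by } t_0,\ E(x) > t\}$, and since it changes only at the $O(n)$ insertion and expiration events, there are at most $\poly(n)$ distinct target sets over the entire execution. Taking $\delta' = \delta/\poly(n)$ inflates $k$ only by an additive $\mathcal{O}((1/\eps^2)\log n)$ term, which is absorbed into the stated space bound (consistent with how \Cref{thm:GE-ARC} is stated). Alternatively, one can observe that the fixed priorities of \Cref{thm:GE-sample} induce a uniformly random order on each $A(t)$, and use a ``simultaneous sample'' argument to conclude that all sufficiently long prefixes are good $\eps$-samples at once with probability $1-\delta'$, avoiding the explicit union bound. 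Combining the space accounting, the ERM query rule, and this union bound yields the theorem.
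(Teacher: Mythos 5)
Your proposal is correct and takes essentially the same route as the paper: reduce to \Cref{thm:GE-sample} via $k$ independent one-item copies as in \Cref{sec:ImplicationsUniform}, answer a query at future time $t$ by running ERM on the retrieved sample, and invoke the classical VC uniform-convergence bound for a sample of size $k = \mathcal{O}((1/\eps^2)(\nu + \log(1/\delta)))$. The one place where you go beyond the paper's terse treatment is the explicit handling of the ``all future times $t$ simultaneously'' requirement (the paper states this guarantee but does not spell out the union bound over the $\poly(n)$ distinct active sets, nor the alternative simultaneous-sample argument via the fixed priorities); that extra care is compatible with, and arguably clarifies, the stated space bound.
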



Another popular way to pose the classification problem
is to define a convex loss function $f_X$,
where gradient descent can be used for empirical risk minimization, 
i.e., finding an optimal classifier for the observations. 
Consider linear classification, where 
$h(x_i) = \sgn(\langle w, x_i \rangle)$
using the halfspace's normal direction $w$ (and no offset, for simplicity). 
The \emph{logistic loss} at each labeled points $(x_i, y_i)$ is defined by
\[
\ell(x_i, y_i; w) = \ln(1 + \exp(y_i \langle w, x_i \rangle)).
\]
and the total loss over the entire data set $(X,y)$ is
$f_X(w) = \sum_{(x_i,y_i) \in X} \ell(x_i, y_i; w)$.
Now, an optimal classifier corresponds to $w \in \RR^d$ that minimizes $f_X(w)$,
and this task is called \emph{logistic regression}.  

One can augment the loss function with a regularization term, which accomplishes two things.
First, it biases the solution $w$ to be smaller and simpler (e.g., sparse).
Second, it enables us to create a coreset with just a uniform random sample.  
We formalize this with an $\ell_2$ regularizer; define the loss function
$\ell_\beta(x_i, y_i; w) = \ell(x_i, y_i; w) + \frac{1}{\beta}\|w\|_2^2$,
and then the total loss is
$f_{X,\beta}(w) = \sum_{(x_i,y_i) \in X} \ell_\beta(x_i, y_i; w)$.  
A uniform random sample $Q$ of $X$ of size $k = \calO((\beta^3/\eps^2)\log(1/\delta))$
will, with probability at least $1-\delta$, satisfy 
\begin{equation} \label{eq:CoresetRegression}
  \forall w \in \RR^d,
  \qquad
  |f_{X,\beta}(w) - f_{Q,\beta}(w)| \leq \eps f_{X,\beta}(w).
\end{equation}
In fact, this holds not only for logistic loss,
but also for sigmoid, SVM, and ReLu variants~\cite{alishahi2024no}.
We call such $Q\subset X$ satisfying \eqref{eq:CoresetRegression}
an \emph{$\eps$-coreset for $\beta$-regularized logistic regression}.  
As a result, invoking \cref{thm:GE-sample} provides the following result in expiration streams.

\begin{theorem}
\label{thm:GE-logistic-reg}
Consider a binary labeled data set $(X,y) \subset \RR^d \times \{-1,+1\}$, with $\|x\| \leq 1$ for each $x \in X$,
and suppose $(X,y)$ is presented as an expiration stream of size $n$.
Then one can maintain a sketch for $\eps$-approximate $\beta$-regularized logistic regression,
that uses $\calO((\beta^3/\eps^2)d \log n\log(1/\delta))$ space,
and with probability at least $1-\delta$
succeeds throughout the stream, on all queries $w$, and for all future times $t$.  
\end{theorem}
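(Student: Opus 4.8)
The plan is to assemble three ingredients already in hand: the coreset fact \eqref{eq:CoresetRegression}, the with-replacement general-expiration sampler of \cref{sec:ImplicationsUniform} (built on \cref{thm:GE-sample}), and the relative-error general-expiration counter of \cref{thm:GE-counter}. The overall structure will mirror the proof of \cref{thm:GE-ARC}.

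First I would fix $k = \Theta\big((\beta^3/\eps^2)\log(1/\delta)\big)$, large enough that \eqref{eq:CoresetRegression} holds for a uniform sample of size $k$, and instantiate the sampler of \cref{sec:ImplicationsUniform} so that, with probability at least $1-\delta/3$, at every point of the stream and for every future time $t$ it returns $k$ with-replacement uniform samples $Q(t)$ from the active set $A(t)$ using $\mathcal{O}(k\log n)$ words. In parallel I would run a $(1+\eps)$-relative-error counter (\cref{thm:GE-counter}) which, with probability at least $1-\delta/3$, reports $\hat C_t$ with $|\hat C_t - C_t|\le \eps C_t$ for $C_t=|A(t)|$, using $\tilde{\mathcal{O}}\big((1/\eps)\log(\eps n)\big)$ words. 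On a query consisting of a weight vector $w$ and a future time $t$, the sketch would return $\hat f(w) = \tfrac{\hat C_t}{k}\sum_{(x_i,y_i)\in Q(t)} \ell_\beta(x_i,y_i;w)$.

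For correctness I would condition on both sub-sketches succeeding. Writing $f_{Q(t),\beta}$ for the coreset estimator of \eqref{eq:CoresetRegression} (the same sum rescaled by the \emph{true} ratio $C_t/k$), that inequality gives $|f_{A(t),\beta}(w) - f_{Q(t),\beta}(w)|\le \eps\, f_{A(t),\beta}(w)$ simultaneously for all $w\in\RR^d$, since \eqref{eq:CoresetRegression} is uniform over $w$. Replacing $C_t$ by $\hat C_t$ scales $f_{Q(t),\beta}(w)$ by a factor in $[1-\eps,1+\eps]$, so the triangle inequality yields $|\hat f(w)-f_{A(t),\beta}(w)|\le \mathcal{O}(\eps)\, f_{A(t),\beta}(w)$, and rescaling $\eps$ by a constant gives the claimed relative error. (If one only wants an approximate minimizer of $f_{A(t),\beta}$ rather than to evaluate it at a given $w$, the counter is unnecessary, since the factor $\hat C_t/k$ does not change the $\arg\min$.) For space, the dominant term is the $\mathcal{O}(k\log n)$ words of the sampler, each word holding a point of $\RR^d$ and hence costing a factor $d$, giving $\mathcal{O}\big((\beta^3/\eps^2)d\log n\log(1/\delta)\big)$, with the counter contributing only a lower-order term.

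The part that needs care — and which I expect to be the main, though still minor, obstacle — is making the guarantee hold \emph{simultaneously} for all future times $t$ and hence all queries. The ``all future times'' behavior of the sampler is already delivered by \cref{thm:GE-sample}; for the coreset and counter bounds I would note that, for a fixed prefix of insertions, at most $n$ distinct active sets $A(t)$ arise (each item expires once), and there are at most $n$ prefixes, so a union bound over the $\le n^2$ relevant events — absorbed by rescaling $\delta$, which only inflates the $\log(1/\delta)$ and $\log n$ factors — makes \eqref{eq:CoresetRegression} and the counter estimate valid for every pair $(t,w)$; the per-$w$ uniformity of \eqref{eq:CoresetRegression} means no additional union bound over $w$ is required. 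Tuning the constants in $\eps$, $\delta$, and $k$ then yields a sketch that succeeds with probability at least $1-\delta$ throughout the stream, on all queries $w$, and for all future times $t$, in $\mathcal{O}\big((\beta^3/\eps^2)d\log n\log(1/\delta)\big)$ words.
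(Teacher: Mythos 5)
Your approach matches the paper's: invoke the sampler of \cref{thm:GE-sample} (via $k$ independent copies as in \cref{sec:ImplicationsUniform}) and the regularized-logistic coreset fact \eqref{eq:CoresetRegression} of \cite{alishahi2024no}; the paper leaves this derivation implicit as a direct consequence, and you helpfully spell out both the normalization issue (the $\hat C_t/k$ scaling is irrelevant if one only seeks the minimizer) and the union bound over times.

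One bookkeeping caution: taking a union bound over the $\le n^2$ pairs $(t_0,t)$ forces the per-event failure probability down to $\delta/n^2$, so the coreset sample size becomes $k=\Theta\bigl((\beta^3/\eps^2)(\log n+\log(1/\delta))\bigr)$, and multiplying by the per-copy sketch size $\mathcal{O}(\log n)$ and dimension $d$ gives $\mathcal{O}\bigl((\beta^3/\eps^2)\,d\,\log n\,(\log n+\log(1/\delta))\bigr)$ rather than $\mathcal{O}\bigl((\beta^3/\eps^2)\,d\,\log n\,\log(1/\delta)\bigr)$ whenever $\log(1/\delta)=o(\log n)$. So ``rescaling $\delta$'' is not free of an extra $\log n$ factor. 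This slack is inherited from the paper's other corollaries of \cref{thm:GE-sample} (e.g.\ \cref{thm:GE-quantile}, \cref{thm:GE-ARC}), which also quietly face the same union-bound accounting, so it is not a defect unique to your argument, but your claim that the bound comes out to exactly the theorem's stated expression does not quite follow from the derivation as written.
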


The functional approximation~\eqref{eq:CoresetRegression} implies that
the minimizer $w_Q^* = \arg\min_{w \in \RR^d} f_{Q,\beta}(w)$ of $f_{Q,\beta}$
achieves comparable value to the true optimum $w_X^* = \arg\min_{w \in \RR^d} f_{X,\beta}(w)$.
More precisely, $w_Q^*$ satisfies $f_{X,\beta}(w_Q^*) \geq f_{X,\beta}(w_X^*)$ and
\[
f_{X,\beta}(w_Q^*) \leq \frac{1}{1-\eps} f_{Q,\beta}(w_Q^*) \leq \frac{1}{1-\eps} f_{Q,\beta}(w_X^*) \leq \frac{1+\eps}{1-\eps} f_{X,\beta}(w_X^*),
\]
and thus optimizing $f_{Q,\beta}$ (even approximately) optimizes also the original $f_{X,\beta}$, up to small relative error.

\subparagraph*{Kernel Density Estimate Sketch with Expirations.}

Let $K : \RR^d \times \RR^d \to \RR$ be a kernel function,
which captures the similarity between two points,
A classical example is the Gaussian kernel, defined by $K(x,p) = \exp(-\|x-p\|^2)$.
One can normalize the kernel so it is a probability distribution over one of its arguments, or add a bandwidth parameter,
but this will be immaterial for us up to constants depending on these choices,
hence we normalize it so that $\max_{x} K(x,p) = K(p,p) = 1$. 
The \emph{kernel density estimate} of a point set $X\subset \RR^d$
is a function $\kde_X:\RR^d\to\RR$, 
whose value at a point~$p$ is the average of the kernel values at $p$ with all points in $X$, i.e., $\kde_X(p) = \frac{1}{|X|}\sum_{x \in X} K(p,x)$.  
An \emph{$\eps$-KDE coreset} of a point set $X$ is another point set $Q \subset \RR^d$ so that
\[
  \forall p\in\RR^d,
  \qquad 
  |\kde_X(p) - \kde_Q(p)| \leq \eps, 
\]
and in this case we also say that it \emph{$\eps$-approximates $\kde_X$}.  
It is known (c.f., \cite{phillips2020near}) that a random sample $Q$ from $X$ of size $k = \calO((1/\eps^2)\log(1/\delta))$ provides a $\eps$-KDE coreset with probability at least $1-\delta$.
Storing this coreset requires $\calO(dk)$ space.
Then again, as a direct result of \cref{thm:GE-sample},
we can maintain this in an expiration stream.  

\begin{theorem}
\label{thm:GE-KDE}
Consider a point set $X \subset \RR^d$,
and suppose $X$ is presented as an expiration stream of size $n$.
Then one can maintain a sketch for $\eps$-approximating $\kde_X$,
that uses space in $\calO((d/\eps^2)\log n\log(1/\delta))$,
and with probability at least $1-\delta$
succeeds throughout the stream, on all queries, and for all future times $t$. 
\end{theorem}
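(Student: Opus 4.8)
The plan is to instantiate the general-expiration uniform sampling primitive of \Cref{thm:GE-sample} and combine it with the classical fact (see~\cite{phillips2020near}) that a uniform random sample $Q$ from $X$ of size $k = \mathcal{O}((1/\eps^2)\log(1/\delta))$ is, with probability at least $1-\delta$, an $\eps$-KDE coreset of $X$. Concretely, I would fix $k = \Theta((1/\eps^2)\log(1/\delta))$ and maintain $k$ independent with-replacement samples from the active set $A(t)$, exactly as described just after \Cref{thm:GE-sample}: run $\Theta(k)$ independent single-item copies of the sample-thinned sketch, terminate any copy that exceeds $O(\log(n/\delta_0))$ stored items for a constant $\delta_0$, and keep $k$ of the surviving copies (a Chernoff bound ensures at least $k$ survive except with probability $e^{-\Omega(k)} \leq \delta$). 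Each retained item is a point in $\RR^d$ costing $O(d)$ words, and each copy stores $O(\log n)$ items, so the total space is $O(dk\log n) = \mathcal{O}((d/\eps^2)\log n\log(1/\delta))$ words, matching the claimed bound.

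For correctness at a single future time $t$: by \Cref{thm:GE-sample}, conditioned on the size bound holding, the items reported for time $t$ are $k$ independent uniform samples from $A(t)$ (or, in the tail zone, all of $A(t)$, in which case $\kde_{A(t)}$ is computed exactly and there is nothing to prove). Taking $Q$ to be this sample and invoking the coreset bound of~\cite{phillips2020near}, we obtain that with probability at least $1-\delta$ over the sample randomness, $|\kde_{A(t)}(p) - \kde_Q(p)| \leq \eps$ holds simultaneously for \emph{all} query points $p \in \RR^d$; the sketch answers a query $p$ at time $t$ by returning $\kde_Q(p) = \frac{1}{|Q|}\sum_{x \in Q} K(p,x)$.

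Finally, I would upgrade this per-time guarantee to hold throughout the stream and for all future times $t$. As in the proof of \Cref{thm:GE-sample}, the active set $A(t)$ only changes at the $O(n)$ times determined by the insertions and their expirations, so only $O(n)$ distinct sets $A(t)$ are ever realized; I would apply both the sampling-sketch size bound and the coreset bound to each of these sets with failure probability $\delta/O(n)$ and take a union bound, which inflates $k$ (and hence the space) only by constant and $\log n$ factors that are already absorbed into the stated bound. I expect the only (mild) obstacle to be cleanly bookkeeping this union bound — in particular, making precise that ``on all queries and for all future times $t$, throughout the stream'' reduces to finitely many events — rather than any genuinely new technical difficulty, since both the sampling primitive and the coreset bound are taken as given.
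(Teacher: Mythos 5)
Your approach matches the paper's (very terse) proof essentially verbatim: instantiate the with-replacement sampling primitive built from \Cref{thm:GE-sample} (using exactly the ``run $\Theta(k)$ single-item copies, terminate any that exceed size $O(\log(n/\delta_0))$, keep $k$ survivors by a Chernoff bound'' construction described in \Cref{sec:ImplicationsUniform}), set $k=\Theta((1/\eps^2)\log(1/\delta))$ per the KDE-coreset bound of~\cite{phillips2020near}, and charge $O(d)$ words per stored point, giving $O(dk\log n)=\mathcal{O}((d/\eps^2)\log n\log(1/\delta))$. The paper's own proof of this theorem is a one-liner pointing to these same facts, so there is no genuine difference in route.

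Where you go beyond the paper — and where you should be careful — is the union bound over future times. You are correct that at most $O(n)$ distinct active sets $A(t)$ are ever realized and that the coreset guarantee must be made simultaneous over them. But your claim that replacing $\delta$ with $\delta/O(n)$ in the coreset bound ``only inflates $k$ by constant and $\log n$ factors that are already absorbed into the stated bound'' does not hold as written: after this substitution, $k$ becomes $\Theta((1/\eps^2)\log(n/\delta))$, which pushes the total space to $O((d/\eps^2)\log n\,(\log n+\log(1/\delta)))$. This is $\Omega((d/\eps^2)\log^2 n)$ when $\delta$ is a constant, strictly worse than the $\mathcal{O}((d/\eps^2)\log n\log(1/\delta))$ claimed. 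The extra $\log n$ is a new multiplicative factor on $k$, not one already present in the bound (the $\log n$ that is present comes from per-copy sketch size, not from $k$). The honest statement is that the claimed bound holds for a single (arbitrary, unknown) future query time, or else one needs $\delta=1/\mathrm{poly}(n)$ so that $\log(n/\delta)=O(\log(1/\delta))$. To be fair, the paper itself neither performs the union bound explicitly nor caveats $\delta$, and elsewhere in \Cref{sec:ImplicationsUniform} it silently uses $\log(n/\delta')=O(\log n)$, which carries the same implicit assumption on $\delta$; so this is a loose end shared by both, but your proposal should not assert that the extra factor disappears when it does not.
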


\subsection{Weighted Sampling}
\label{sec:weighted}

Our next goal is to extend \cref{thm:GE-sample} to the weighted case,
i.e., items have positive weights and the probability to sample each item is proportional to its weight,
however we restrict attention to producing one sample, i.e., $k=1$.

We start with extending \cref{lem:k-res-sample},
which handles insertions-only streams, to the weighted case (but only for $k=1$). 
The algorithm is similar except for how we choose the priorities $u_i$,
which follows~\cite{Cohen97,CK07,CK08}.
More precisely, when item $i$ with weight $w_i>0$ arrives, 
generate for it a random priority $u_i \sim \mathsf{Exp}(w_i)$,
and maintain the item $i$ with smallest priority $u_i$ seen so far
(i.e., a reservoir containing a single item).
Recall that we denote the total weight of all items by $W = \sum_{i=1}^n w_i$.

We first verify that the sample maintained in the reservoir has the correct distribution.
Consider the reservoir after $n$ items have arrived 
(the argument is similar for every $n'\leq n$). 
The sample stored in the reservoir is item $i\in[n]$
if $u_i$ is smaller than $\min_{j\neq i} u_j$.
We now use two well-known and easy-to-verify properties of the exponential distribution.
First, it is min-stable, which implies that
$\min_{j\neq i} u_j$ is distributed as $\mathsf{Exp}(\sum_{j\neq i} w_j)$,
and second $\Pr[ \mathsf{Exp}(a) < \mathsf{Exp}(b)] = \tfrac{a}{a+b}$.
We obtain that the sample stored in the reservoir is item $i$ with probability
\begin{equation} \label{eq:weighted-1-sample1}
  \Pr[u_i < \min_{j\neq i} u_j]
  = \frac{w_i}{ w_i + \sum_{j\neq i} w_j}
  = \frac{w_i}{W} .
\end{equation}

\begin{lemma}
\label{lem:weighted-1-sample}
In insertion-only streams,
where each item $i$ has weight $w_i\geq 1$ and $W\ge 2$,
the above weighted-sampling algorithm maintains a sample ($k=1$)
from the items seen so far, with probability proportional to their weight. 
Moreover, the total number of items ever stored (possibly temporarily)
in the reservoir is $\calO(\log W)$ in expectation,
and it is $\calO(\log (W/\delta))$ with probability at least $1-\delta$. 
\end{lemma}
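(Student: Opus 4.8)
The correctness of the sampling distribution was already verified in the discussion preceding the lemma via Equation~\eqref{eq:weighted-1-sample1}, so the plan is to focus on the space bound, i.e., bounding the number of items ever placed in the size-one reservoir. The key idea is to reduce the weighted case to the unweighted reservoir analysis from \cref{lem:k-res-sample} by a coupling that treats each item of integer weight as a bundle of unit-weight copies, and to handle non-integer weights by rounding. First I would argue we may assume all weights are integers: replacing each $w_i$ by $\lceil w_i \rceil$ at most doubles the total weight (since $w_i \geq 1$ implies $\lceil w_i\rceil \leq 2w_i$), so $W$ changes by at most a factor of $2$, and the $O(\log W)$ bounds are unaffected. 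This rounding only affects the sampling probabilities slightly, but for the space bound we only need a distributional statement about when the reservoir gets updated, so it suffices to analyze the rounded weights.

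Next, the core coupling: for an item $i$ with integer weight $w_i$, note that $u_i \sim \mathsf{Exp}(w_i)$ has the same distribution as $\min\{v_{i,1},\dots,v_{i,w_i}\}$ where the $v_{i,\ell} \sim \mathsf{Exp}(1)$ are i.i.d., by min-stability of the exponential. So running the weighted algorithm on the stream $\langle (x_1,w_1),\dots,(x_n,w_n)\rangle$ is distributionally equivalent to running the \emph{unweighted} single-reservoir algorithm (with i.i.d.\ $\mathsf{Exp}(1)$ priorities, which induce the same ``bottom-one'' selection probabilities as $\mathsf{Unif}(0,1]$ priorities) on an expanded stream of $W' = \sum_i \lceil w_i \rceil \leq 2W$ unit items, obtained by replacing each item $i$ by $\lceil w_i\rceil$ copies placed consecutively. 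Under this coupling, the reservoir in the weighted run is updated at item $i$ if and only if the unweighted run's reservoir is updated at one of the copies of item $i$; hence the number of distinct items ever stored by the weighted algorithm is at most the number of unit items ever stored by the unweighted algorithm on the expanded stream.

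Now I would apply \cref{lem:k-res-sample} with $k=1$ to the expanded stream of length $W' \leq 2W$: the number of items ever stored is at most $H_{W'} \leq \ln(2W)+1 = O(\log W)$ in expectation, and at most $8H_{W'} + \ln(1/\delta) = O(\log(W/\delta))$ with probability at least $1-\delta$. Since $W \geq 2$ ensures these logarithms are positive and the expressions are meaningful, this yields exactly the claimed bounds. One small subtlety to check: \cref{lem:k-res-sample} is stated for $\mathsf{Unif}(0,1]$ priorities, but its proof only uses that the priorities are i.i.d.\ from a continuous distribution so that the rank of the newest element among the first $i$ is uniform and independent of earlier ranks; the i.i.d.\ $\mathsf{Exp}(1)$ priorities have this same property, so the lemma applies verbatim.

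The main obstacle I anticipate is making the coupling fully rigorous, specifically the claim that the \emph{set} of distinct weighted-items ever entering the reservoir injects into the set of unit-items ever entering the reservoir in the expanded run. This is intuitively clear but requires care about the within-bundle ordering: among the $\lceil w_i\rceil$ copies of item $i$, the first copy to beat the current reservoir minimum is the one with the smallest $v_{i,\ell}$, and that happens precisely when $u_i = \min_\ell v_{i,\ell}$ beats the running minimum, so the weighted reservoir is updated at step $i$ exactly when at least one copy updates the unweighted reservoir — and each such update of the unweighted reservoir ``charges'' to a distinct unit item. Everything else is a direct substitution into \cref{lem:k-res-sample}.
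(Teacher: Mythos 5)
Your coupling idea for integer weights is valid and gives a genuinely different (and appealingly concrete) route than the paper's. The paper's proof establishes $\Pr[X_i=1\mid X_1,\ldots,X_{i-1}]=w_i/W_i$ directly via the memorylessness of the exponential distribution, makes $S_n$ a sum of conditionally Bernoulli increments, and then applies a Chernoff/Martingale bound together with an integral estimate of $\sum_i w_i/W_i$. You instead replace item~$i$ by $w_i$ unit-weight copies with i.i.d.\ $\mathsf{Exp}(1)$ priorities, observe that $u_i = \min_\ell v_{i,\ell}$ recovers the $\mathsf{Exp}(w_i)$ priority, note that a weighted-reservoir update at item~$i$ forces at least one unweighted-reservoir update within its block, and then read off the bound from \cref{lem:k-res-sample} on a stream of length $W' = \sum_i w_i$. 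For integer weights this is correct and elegant: it reuses the already-proven unweighted lemma rather than repeating the concentration argument.

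The gap is in the reduction to integer weights. You write that ``it suffices to analyze the rounded weights,'' but replacing $w_i$ by $\lceil w_i\rceil$ changes the distribution of the priorities, hence changes the update process, and there is no stochastic domination in the direction you need. Concretely, the expected number of updates is $\sum_i w_i/W_i$, and rounding weights up does not always increase this quantity: with $w_1 = 1.5$, $w_2 = 100$, the original gives $1 + 100/101.5 \approx 1.985$, while rounding to $w_1' = 2$, $w_2' = 100$ gives $1 + 100/102 \approx 1.980$, which is \emph{smaller}. So the bound you would prove for the rounded instance does not transfer to the original one; the two runs are different random variables with no coupling supplied between them. To make your expansion idea cover arbitrary real weights $w_i\geq 1$, you would need a continuous analogue of the unit-item expansion (e.g.\ a marked Poisson process on the mass axis $[0,W)$, reading off each $u_i$ as the minimum mark over the interval of length $w_i$), and then an analogue of \cref{lem:k-res-sample} for that continuous process. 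The paper's Martingale argument sidesteps all of this by working with the original weights directly, via $\Pr[X_i=1\mid u_1,\ldots,u_{i-1} \text{ ordering}] = w_i/W_i$ obtained from memorylessness, and bounding $\sum_i w_i/W_i$ by $1 + \int_{W_1}^{W_n} dt/t$.

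Everything else in your proposal is sound: the sampling-correctness part is indeed already dispatched by the discussion preceding the lemma, and your observation that \cref{lem:k-res-sample} applies verbatim to i.i.d.\ $\mathsf{Exp}(1)$ priorities (since its proof uses only relative ranks of i.i.d.\ continuous draws) is correct. The missing piece is a rigorous handling of non-integer weights.
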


\begin{proof}
Let $X_i$ be an indicator for the event that item $i$ is stored in the reservoir when it arrives,
and thus $S_n = \sum_{i=1}^n X_i$ is the total number of items ever store the reservoir.
Denote for convenience $W_i := \sum_{j\leq i} w_j$, hence $W_n=W$. 
Observe that $X_1=1$ deterministically,
and for $i\ge 2$, by our calculation in~\eqref{eq:weighted-1-sample1}, 
\[
  \Pr[X_i = 1]
  = \tfrac{w_i}{W_i}
  = \int_{W_{i-1}}^{W_i} \frac{1}{W_i} dt
  \le \int_{W_{i-1}}^{W_i} \frac{1}{t} dt .
\]
Now the claimed expectation bound follows by
\[
  \E[S_n]
  = \sum_{i=1}^n \E[X_i]
  = 1+ \int_{W_1}^{W_n} \frac{1}{t} dt
  \leq 1 + \ln \tfrac{W_n}{W_1}
  \leq 1 + \ln W . 
\]

To prove the high-probability bound, it suffices to show that for every $i\ge2$, 
\begin{equation} \label{eq:weighted-1-sample2}
  \Pr[X_i=1 \mid X_1,\ldots,X_{i-1}] = \tfrac{w_i}{W_i}. 
\end{equation}
Indeed, this implies that $S_n=\sum_{i=1}^n X_i$ is a Martingale,
and we can still apply a Chernoff bound for sufficiently large fixed $c>1$
and every $\delta>0$, 
to obtain 
\[
  \Pr[S_n \geq c\log (W/\delta)]
  \leq e^{-\Omega(c\log (W/\delta))}
  \leq \delta . 
\]

We actually prove a stronger statement than~\eqref{eq:weighted-1-sample2},
by essentially adding more random variables to the conditioning on the LHS.
Let $A$ be the permutation of $[i-1]$ that represents the relative ordering of $u_1,\ldots,u_{i-1}$,
i.e., $u_{A(1)} < \cdots < u_{A(i-1)}$,
assuming throughout that the priorities are distinct, which occurs with probability $1$.
Observe that the random value of $A$ completely determines
the value of $X_1,\ldots,X_{i-1}$, and therefore proving 
\begin{equation} \label{eq:weighted-1-sample3}
  \forall a,
  \qquad 
  \Pr[X_i=1 \mid A=a ] = \tfrac{w_i}{W_i} 
\end{equation}
would imply~\eqref{eq:weighted-1-sample2} by the law of total probability.
To prove this, we rewrite the LHS using Bayes' rule, as
\begin{equation} \label{eq:weighted-1-sample4}
  \Pr[X_i=1 \mid A=a ]
  = \frac { \Pr[X_i=1] \cdot \Pr[A=a \mid X_i=1] }{ \Pr[ A=a ] }.
\end{equation}
Now consider $\Pr[A=a \mid X_i=1, u_i]$,
where we condition on $X_i=1$, which is the event that $u_i < \min_{j<i} u_j$,
and also on $u_i$, which is needed for technical reasons.
By the memoryless property of the exponential distribution,
each $u_j-u_i$, for $j<i$, has distribution $\mathsf{Exp}(w_j)$,
and notice that these $i-1$ random variables are mutually independent.
The permutation $A$ represents the relative ordering of $u_1,\ldots,u_{i-1}$,
which does not change if we translate these values by $-u_i$,
so if we define $\bar{u}_j = u_j-u_i$ and $\bar A$ as the permutation of $[i-1]$
that represents the relative ordering of $\bar{u}_1,\ldots,\bar{u}_{i-1}$,
then clearly $\bar{A} = A$.
We thus have 
\[
  \Pr[A=a \mid X_i=1, u_i]
  = \Pr[\bar{A}=a \mid X_i=1, u_i]
  = \Pr[\bar{A}=a]
  = \Pr[A=a] ,
\]
and by the law of total probability also 
$\Pr[A=a \mid X_i=1] = \Pr[A=a]$.
Plugging this into~\eqref{eq:weighted-1-sample4}
proves~\eqref{eq:weighted-1-sample3},
and this completes the proof of \cref{lem:weighted-1-sample}.
\end{proof}

The extension to expiration streams follows the same arguments
as detailed in \cref{thm:GE-sample} for uniform sampling. 
That is, at each time $t_0$ during the stream,
the size of the sketch needed is the same as the total number of items ever kept
when processing an insertion-only stream consisting of the items active at time $t_0$ in reverse order of expirations.
For the weighted case, this is upper bounded by applying \cref{lem:weighted-1-sample},
namely, with probability $1-\delta/n$ the size of the sketch at time $t_0$
is $\calO(\log (Wn/\delta)) = \calO(\log (W/\delta)$,
and we can then apply a union bound over the $n$ insertions in the stream.
We can further extend this to maintaining $k$ with-replacement samples,
by simply running $k$ independent copies, 
as described in \cref{sec:ImplicationsUniform}.
The next theorem states this more formally.

\begin{theorem}
\label{thm:GE-W-sample}
In an expiration stream,
where each item $i$ has weight $w_i\geq 1$ and $W\ge 2$,
running $k$ independent copies of a sample-thinned sketch for weighted sampling,
can retrieve $k$ with-replacement samples among the active set $A(t)$,
for any future time $t$.
With probability at least $1-\delta$,
the sketch stores during the execution at most $\calO(k \log (W /\delta))$ items. 
\end{theorem}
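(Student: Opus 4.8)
The plan is to combine two ingredients that are already in place: the single-sample weighted reservoir analysis of \cref{lem:weighted-1-sample} for insertion-only streams, and the ``backwards analysis in reverse expiration order'' argument used to prove \cref{thm:GE-sample} for uniform general-expiration sampling. I would first handle one copy (i.e., $k=1$), and then pass to $k$ independent copies exactly as discussed in \cref{sec:ImplicationsUniform}.

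\emph{Correctness of one copy.} Fix a copy that, on the arrival of item $i$ with weight $w_i\ge1$, draws a priority $u_i\sim\mathsf{Exp}(w_i)$, stores the item, and runs the thinning step with parameter $k=1$: it discards a stored item $x_j$ as soon as the sketch contains some $x_\ell$ with $E(x_\ell)\ge E(x_j)$ and $u_\ell<u_j$. The first step is to establish, verbatim as in the proof of \cref{thm:GE-sample}, the invariant that an item that would be reported at some future time $t$ (namely the active item of minimum priority at time $t$) is never thinned: if $x_j$ is thinned at time $t_0$ then the witnessing $x_\ell$ is active whenever $x_j$ is active and has smaller priority, so $x_j$ is never the minimizer. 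Hence at every future $t$ the sketch reports the minimum-priority item of $A(t)$, and by the min-stability computation~\eqref{eq:weighted-1-sample1} applied to the active set $A(t)$, this item equals $i$ with probability $w_i/\sum_{\ell\in A(t)} w_\ell$, i.e., proportional to weight. Running $k$ independent copies then delivers $k$ with-replacement weighted samples for every future $t$.

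\emph{Space of one copy.} For the size bound I would reuse the backwards argument from the proof of \cref{thm:GE-sample}. Fix a time $t_0$ and simulate the insertion-only weighted reservoir of \cref{lem:weighted-1-sample} on the items of $A(t_0)$, fed in decreasing order of expiration time. The set of items this simulation ever stores coincides exactly with the set the sketch stores at time $t_0$, because an item processed at ``time'' $E(x_i)$ in the simulation is kept iff no already-processed (hence no-earlier-expiring) item has smaller priority, which is precisely the thinning condition. Applying \cref{lem:weighted-1-sample} to these $|A(t_0)|$ items, whose total weight $W_{t_0}\le W$ satisfies $W_{t_0}\ge 2$ (or else only finitely many items are active and everything is stored trivially), the sketch holds $\calO(\log(W/\delta'))$ items at time $t_0$ with probability $\ge1-\delta'$. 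Since expirations never grow the sketch, it suffices to union-bound over the at most $n$ insertion events; using $w_i\ge1$, so $n\le W$, and rescaling $\delta'$ accordingly, this yields an $\calO(\log(W/\delta))$ bound for the entire run of a single copy.

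\emph{From one copy to $k$.} Finally, as in \cref{sec:ImplicationsUniform}, run $k$ independent copies, terminating any copy that exceeds the single-copy size bound (taken with failure probability $\delta/k$), and union-bound over the copies to get total size $\calO(k\log(W/\delta))$ with probability at least $1-\delta$; if $k$ ever exceeds the number of active items, simply store all of them, which is no larger than this bound. The only step that needs genuine care is the two-level bookkeeping of the union bounds — over the $n$ insertion times inside each copy and over the $k$ copies — combined with the fact that the simulating reservoir is run in reverse expiration order rather than in insertion order; this is exactly the point at which the argument departs from the sliding-window analyses of~\cite{BabcockMR02,shi2021time}, but it carries over without change from \cref{thm:GE-sample}. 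I expect this bookkeeping (keeping the logarithmic term at $\log(W/\delta)$ rather than $\log(nkW/\delta)$, which follows from $n\le W$) to be the only mild obstacle; everything else is a direct composition of \cref{lem:weighted-1-sample} with the proof template of \cref{thm:GE-sample}.
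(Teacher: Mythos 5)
Your proposal is correct and follows essentially the same route as the paper: the paper's proof likewise reduces the space bound at each time $t_0$ to a simulation of the insertion-only weighted reservoir of \cref{lem:weighted-1-sample} run over $A(t_0)$ in reverse expiration order, applies that lemma with failure probability $\delta/n$, union-bounds over the $n$ insertions using $n\le W$ to keep the bound at $\calO(\log(W/\delta))$, and then passes to $k$ with-replacement samples via $k$ independent copies as in \cref{sec:ImplicationsUniform}. Your write-up is in fact somewhat more explicit than the paper's (which leaves the correctness of the thinning invariant and the min-stability argument to the references to \cref{thm:GE-sample} and \eqref{eq:weighted-1-sample1}), but there is no substantive difference in approach.
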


\subsection{Implications of Weighted Random Sampling}
\label{sec:wGE-implications}

\cref{thm:GE-W-sample} allows to extend our results thatuse uniform random sampling
to variants that require sampling items proportionally to their weight.  
For instance, it is natural to consider approximate range counting
where each point has a weight, and a query asks for the total weight of a range.
One can also consider weighted versions of quantile approximation.
This also leads to new results that critically leverage weighted sampling,
as we demonstrate next in the context of matrix sketching.

\subparagraph*{Matrix Sketching with Expirations.}

There are many types of matrix sketching,
where one gets as input a ``large'' matrix $A \in \RR^{n \times d}$
and aims to approximate it with a ``small'' matrix $B \in \RR^{\ell \times d}$,
namely, with $\ell$ that is bounded.
In the expiration row-streaming setting,
the rows $a_i$ of $A$ arrive one at a time, and each is associated with an expiration time,
and the goal is to maintain a small matrix $B$ that approximates
the active rows of $A$ (those seen so far and have not yet expired),
rather than all of $A$.

Regardless of the streaming model,
there are several variants of the approximation guarantees one could consider between $B$ and $A$ (or its active part) \cite{MAL-035,TCS-060}.
One powerful variant is relative error,
where $\|B x\| \in (1\pm\eps) \|A x\|$ for all $x \in \RR^d$,
however we do not know how to maintain it in the expiration model. 
%
We present below two results that approximate (the active part of) $A$
in two other ways: its low-rank structure and its covariance structure. 
In both cases, the sketch produces $B \in \RR^{\ell \times d}$ with bounded $\ell$;
that is, we represent $B$ using $\Theta(d \ell)$ space
and do not seek a more compressed representation. 

Let $\pi_B(A)$ be the projection of the matrix $A$ onto the row space of $B$,
meaning that we project each row $a_i$ of $A$ onto the row space of $B$
(which still lives in $\RR^d$).
Let $A_k$ denote the ``best'' rank-$k$ approximation of $A$,
defined by $A_k := \arg\max_{B: \mathsf{rank}(B) =k} \|A - \pi_B(A)\|_F$,
and recall that it can be found using SVD. 
We say that $B$ achieves \emph{$\eps$-additive projection error} if 
\[
\|A - \pi_B(A)\|_F^2 \leq \|A - A_k\|_F^2 + \eps \|A\|_F^2.
\]

Drineas \etal~\cite{Drineas06} showed that sampling $\ell = \calO(k/\eps^2)$ rows from $A$,
with probabilities proportional to their squared norms $\|a_i\|^2$,
which is the classic norm-sampling approach~\cite{Frieze04},
produces a matrix $B$ that achieves $\eps$-additive projection error with constant probability. 
By invoking \cref{thm:GE-W-sample} on the rows $a_i$ with $w_i = \|a_i\|^2$,
we obtain the following result.  

\begin{theorem}
\label{thm:GE-mat-proj}
Consider a matrix $A \in \RR^{n \times d}$ presented as an expiration row stream, and $\eps > 0$. 
Suppose each row $a_i$ in $A$ has $\|a_i\| \geq 1$,
and let $A(t)$ be the matrix formed by the rows active at time $t$. 
One can maintain a randomized sketch of size $\calO((dk/\eps^2)\log \|A\|_F^2)$
that, with constant probability, succeeds throughout the stream in the following:
given as query a future time $t$, it produces a matrix $B(t)$
achieving $\eps$-additive projection error with respect to $A(t)$.
\end{theorem}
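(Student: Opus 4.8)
The plan is to realize the sketch $B(t)$ by norm-sampling the rows of $A$ and to maintain the required samples with the weighted-expiration sketch of \cref{thm:GE-W-sample}. Concretely, I would assign each row $a_i$ the weight $w_i := \|a_i\|_2^2$; the hypothesis $\|a_i\|\geq 1$ gives $w_i\geq 1$, and (assuming $A$ has at least two rows, the remaining case being trivial) the total weight $W=\|A\|_F^2\geq 2$, so \cref{thm:GE-W-sample} applies. I would instantiate it with $\ell=\mathcal{O}(k/\eps^2)$ independent copies of the single-sample weighted sketch; then at any moment in the stream, and for any future time $t$, we obtain $\ell$ with-replacement (hence i.i.d.) samples from the active rows $A(t)$, each drawn with probability proportional to $\|a_i\|^2$. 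The answer $B(t)$ to a query at time $t$ is simply the $\ell\times d$ matrix of these sampled rows.

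For correctness I would invoke the low-rank norm-sampling result of Drineas \etal~\cite{Drineas06}: sampling $\ell=\Omega(k/\eps^2)$ rows i.i.d.\ with probabilities proportional to squared norms and forming (after the standard rescaling) the top-$k$ right-singular subspace of the sampled submatrix yields, with constant probability, a subspace whose projection $P$ satisfies $\|M-MP\|_F^2\leq\|M-M_k\|_F^2+\eps\|M\|_F^2$ for the sampled matrix $M$; taking $M=A(t)$ gives the claimed $\eps$-additive projection error. A point worth isolating is that this construction rescales the $i$-th sampled row by $(\ell p_i)^{-1/2}$, which at first glance requires knowing $\|A(t)\|_F^2$ at query time; but $\pi_{B(t)}(A(t))$ depends only on the \emph{row space} spanned by the stored rows, scaling individual rows by positive scalars does not change that span, and the above rescaled top-$k$ subspace is contained in the full row space of $B(t)$, so projecting onto the full row space can only decrease the error. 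Hence the sketch needs to store only the sampled vectors themselves, and the rescaling enters purely in the analysis.

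For the space bound, \cref{thm:GE-W-sample} with a constant failure parameter $\delta_1$ ensures the $\ell$ copies together ever store $\mathcal{O}(\ell\log(W/\delta_1))=\mathcal{O}((k/\eps^2)\log\|A\|_F^2)$ rows, and at $d$ words per row this is $\mathcal{O}((dk/\eps^2)\log\|A\|_F^2)$ words. Since \cref{thm:GE-W-sample} always returns correctly distributed samples — its only failure mode is exceeding the space budget — a union bound over its $O(1)$ failure probability and the $O(1)$ failure probability of the Drineas guarantee yields a constant overall success probability for a query.

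The main obstacle is matching the quantifiers rather than any calculation: \cref{thm:GE-W-sample} already delivers correctly distributed i.i.d.\ samples for \emph{every} future time $t$ with its stated probability, whereas the Drineas bound is a single-shot constant-probability statement about one target matrix $A(t)$, and there is no obvious way to amplify it at query time without access to $A(t)$ itself. I therefore read the query guarantee as a per-query statement (the natural reading, and the one consistent with the stated space bound); amplification to all future times simultaneously is possible by enlarging $\ell$ and union-bounding over the polynomially many distinct active sets $A(t)$ that can ever be queried, but at the cost of an extra logarithmic factor in the space. Modulo fixing this interpretation, the proof is a direct composition of \cref{thm:GE-W-sample} with the result of~\cite{Drineas06}.
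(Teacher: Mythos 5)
Your proof matches the paper's argument exactly: assign weights $w_i = \|a_i\|_2^2$, run $\ell = \mathcal{O}(k/\eps^2)$ independent copies of the single-sample weighted sketch from \cref{thm:GE-W-sample}, assemble $B(t)$ from the sampled rows, and invoke the norm-sampling guarantee of Drineas \etal~\cite{Drineas06}. Two of your side remarks are genuinely useful and are left implicit in the paper: (i) the observation that the $(\ell p_i)^{-1/2}$ rescaling of~\cite{Drineas06} need not be stored, since $\pi_{B(t)}(A(t))$ depends only on the row span of $B(t)$, the span is scale-invariant, and projecting onto the full row space of the unscaled samples can only decrease the error relative to projecting onto the rescaled top-$k$ subspace; and (ii) the quantifier mismatch between the per-query constant-probability guarantee of~\cite{Drineas06} and the ``succeeds throughout the stream'' phrasing, which the paper also does not address. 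Your reading — the space bound holds throughout the stream while the projection-error guarantee is per-query — is the one consistent with the paper's own framing of \cref{thm:GE-W-sample}, where the only failure mode is exceeding the space budget, not incorrectly distributed samples.
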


Next, we consider an approximation of the covariance structure of a matrix $A$.
We say that $B$ achieves \emph{$\eps$-covariance error} if 
\[
  \forall x \in \RR^d, \|x\|=1,
  \qquad
  \Big| \|B x\|^2 - \|A x\|^2 \Big|  \leq  \eps \|A\|_F^2 ,
\]
which informally means that $A$ and $B$ are close in all directions $x\in\RR^d$.
It is also equivalent to a bound on the associated covariance matrices
\[
\|A^T A - B^T B\|_2 \leq \eps \|A\|_F^2.  
\]
Such a guarantee was shown, for instance, for the FrequentDirections sketch~\cite{FD16}. 

Desai \etal~\cite{Desai16} showed 
that sampling $\ell = \calO(d/\eps^2)$ rows of $A$
with probabilities proportional to their squared norms $\|a_i\|^2$
and rescaling them to each have squared norm $\|A\|_F^2/\ell$,
produces a matrix $B$ that achieves $\eps$-covariance error with constant probability.
Again invoking \cref{thm:GE-W-sample} on the rows $a_i$ with $w_i = \|a_i\|^2$,
we obtain the following result.  

\begin{theorem}
\label{thm:GE-mat-cov}
Consider a matrix $A \in \RR^{n \times d}$ presented as an expiration row stream, and $\eps > 0$. 
Suppose each row $a_i$ in $A$ has $\|a_i\| \geq 1$,
and let $A(t)$ be the matrix formed by the rows active at time $t$. 
One can maintain a sketch of size $\calO((d^2/\eps^2)\log \|A\|_F^2)$
that, with constant probability, succeeds throughout the stream in the following:
given as query a future time $t$, it produces a matrix $B(t)$
given query $t$ produces a matrix $B(t)$
achieving $\eps$-covariance error with respect to $A(t)$.
\end{theorem}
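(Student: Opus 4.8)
The plan is to reduce this to weighted sampling in the general-expiration model, exactly as was done for Theorem~\ref{thm:GE-mat-proj}, the only difference being which off-the-shelf sampling-based matrix approximation guarantee we invoke. First I would invoke the result of Desai \etal~\cite{Desai16}: if one samples $\ell = \mathcal{O}(d/\eps^2)$ rows of a matrix $M$ with probabilities proportional to the squared row norms $\|m_i\|^2$ and rescales each sampled row so that its squared norm becomes $\|M\|_F^2/\ell$, then the resulting matrix $B$ satisfies $\|M^T M - B^T B\|_2 \leq \eps\|M\|_F^2$ with constant probability, which is precisely $\eps$-covariance error. The task is thus to maintain, in the general-expiration row stream, a weighted sample of $\ell$ rows where row $a_i$ has weight $w_i = \|a_i\|^2$, together with the quantity $\|A(t)\|_F^2$ needed for the rescaling.

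The key steps, in order, are as follows. (i) On the arrival of row $a_i$ with expiration time $E(a_i)$, set its weight $w_i = \|a_i\|^2 \geq 1$ (using the hypothesis $\|a_i\|\ge 1$); the total weight is $W = \sum_i w_i = \|A\|_F^2 \ge 2$ after at most two rows (and the degenerate case of fewer rows is handled trivially by storing everything). (ii) Run $\ell$ independent copies of the sample-thinned weighted sketch of \cref{thm:GE-W-sample}, which at any future time $t$ retrieves $\ell$ with-replacement samples from the active rows $A(t)$, each chosen with probability proportional to its weight $\|a_i\|^2$ among the active rows; by \cref{thm:GE-W-sample}, with probability $1-\delta$ the total storage over the whole execution is $\mathcal{O}(\ell\log(W/\delta)) = \mathcal{O}((d/\eps^2)\log\|A\|_F^2)$ rows, i.e. $\mathcal{O}((d^2/\eps^2)\log\|A\|_F^2)$ words once we account for $d$ coordinates per row. (iii) In parallel, maintain $\|A(t)\|_F^2$ — the total weight of the active rows; since this is just a count-with-expirations of a quantity that is additive over items, a $(1+\eps)$-relative approximation suffices and is available by the relative-error general-expiration counter of \cref{thm:GE-counter} at space cost dominated by the sampling sketch (and, for this theorem's stated bound, one can also afford to carry the weighted analogue of an approximate counter, or simply note that $\|A(t)\|_F^2 \le \poly$ of the stored quantities; either way the $\mathcal{O}((d^2/\eps^2)\log\|A\|_F^2)$ bound is not exceeded). (iv) On a query $t$, output $B(t)$ consisting of the $\ell$ sampled active rows, each rescaled to squared norm $\widehat{\|A(t)\|_F^2}/\ell$; apply the Desai \etal guarantee to $M = A(t)$ and absorb the relative error in $\widehat{\|A(t)\|_F^2}$ and the constant failure probability of the sampler by rescaling $\eps$ and $\delta$ by constants.

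The main obstacle I expect is the same one already confronted for \cref{thm:GE-mat-proj}: the Desai \etal bound assumes \emph{exact} norm-proportional sampling and exact knowledge of $\|M\|_F^2$, whereas in the general-expiration model we can only (a) sample from $A(t)$ with probabilities proportional to $\|a_i\|^2$ over the \emph{active} set — which \cref{thm:GE-W-sample} does give us, since the min-stability / backwards-analysis argument there produces exactly the active-set weighted distribution — and (b) approximate $\|A(t)\|_F^2$ to within $(1\pm\eps)$. Handling (a) is clean, because the weighted sample-thinned sketch really does deliver the correct conditional distribution for every future time $t$; handling (b) requires checking that a $(1\pm\eps)$ error in the rescaling factor only degrades the covariance-error guarantee from $\eps\|A(t)\|_F^2$ to $\mathcal{O}(\eps)\|A(t)\|_F^2$, which follows because $B^T B$ scales linearly with that factor and $\|A(t)\|_F^2$ itself is an upper bound on $\|A(t)^T A(t)\|_2$. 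Everything else — the union bound over the $n$ insertions inside \cref{thm:GE-W-sample}, the constant-to-$1-\delta$ probability amplification by taking $\mathcal{O}(\log(1/\delta))$ parallel bundles of $\ell$ samples and reporting the median-type aggregate, and the bookkeeping that each stored row costs $d$ words — is routine and parallels the proof of \cref{thm:GE-mat-proj} verbatim.
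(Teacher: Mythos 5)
Your proposal is correct and matches the paper's argument: the paper likewise obtains this theorem by invoking the Desai~\etal norm-sampling guarantee and applying \cref{thm:GE-W-sample} with weights $w_i=\|a_i\|^2$, with the same space accounting ($\ell=\calO(d/\eps^2)$ rows of $d$ words each, times the $\calO(\log W)$ thinning overhead). Your additional care about approximating the rescaling factor $\|A(t)\|_F^2$ (and absorbing its $(1\pm\eps)$ error) is a detail the paper leaves implicit, but it does not change the approach.
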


\bibliography{main}

\appendix
\section{Omitted Proofs}
\begin{lemma}
\label{lem:AlmostSmooth}
The diameter of points in a general metric space is $2$-almost-smooth.
\end{lemma}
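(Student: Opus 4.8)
The plan is to verify the definition of $c$-almost-smoothness from \cite{KR22} directly, with $c=2$. Recall that $f = \diam$ is a nonnegative, \emph{monotone} function on streams: inserting more points can only increase the maximum pairwise distance, so for any stream $A$ with suffix $B$ and any continuation $C$ we have $\diam(BC)\le\diam(AC)$ (the point multiset of $BC$ is contained in that of $AC$). Hence only a matching lower bound is at issue. Following \cite{KR22}, it then suffices to show: for every $\eps\in(0,1)$ there is a $\beta>0$ (in fact $\beta=\eps$ works) so that whenever $B$ is a \emph{nonempty} suffix of $A$ with $\diam(B)\ge(1-\beta)\diam(A)$, then for every continuation $C$ we have $\diam(BC)\ge\tfrac{1-\eps}{2}\diam(AC)$. (The case $B=\emptyset$ never arises in the smooth-histogram construction, where every maintained instance contains the most recent item; alternatively it is vacuous under the standard convention that $\diam$ of at most one point is $0$.)

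First I would fix a pair $p,q$ realizing $\diam(AC)=d(p,q)$ and split into three cases by where $p,q$ lie. If $p,q\in C$, then $p,q\in BC$, so $\diam(BC)\ge d(p,q)=\diam(AC)$ and there is nothing to prove. If $p,q\in A$, then $d(p,q)\le\diam(A)\le\tfrac{1}{1-\beta}\diam(B)\le\tfrac{1}{1-\beta}\diam(BC)$, using the hypothesis and $B\subseteq BC$; for $\beta\le\eps$ this already gives $\diam(BC)\ge(1-\eps)\diam(AC)\ge\tfrac{1-\eps}{2}\diam(AC)$.

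The only case that loses the factor of $2$ — and the only place the triangle inequality enters — is $p\in A$, $q\in C$ (up to swapping). Here I would pick any $b\in B$ (possible since $B\neq\emptyset$) and write $\diam(AC)=d(p,q)\le d(p,b)+d(b,q)$. Since $p,b\in A$, we have $d(p,b)\le\diam(A)\le\tfrac{1}{1-\beta}\diam(B)\le\tfrac{1}{1-\beta}\diam(BC)$; since $b,q\in BC$, we have $d(b,q)\le\diam(BC)$. Therefore $\diam(AC)\le\bigl(\tfrac{1}{1-\beta}+1\bigr)\diam(BC)=\tfrac{2-\beta}{1-\beta}\diam(BC)$, which rearranges to $\diam(BC)\ge\tfrac{1-\beta}{2-\beta}\diam(AC)\ge\tfrac{1-\eps}{2}\diam(AC)$ for $\beta\le\tfrac{2\eps}{1+\eps}$ (in particular for $\beta\le\eps$). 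Combining the three cases establishes $2$-almost-smoothness.

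Finally I would remark that the constant $2$ cannot be improved for diameter: take the four-point metric on $\{p,b_1,b_2,c\}$ with all pairwise distances equal to $1$ except $d(p,c)=2$ (the triangle inequalities are routine to check), set $A=\langle p,b_1,b_2\rangle$, $B=\langle b_1,b_2\rangle$ (a suffix with $\diam(B)=\diam(A)=1$), and $C=\langle c\rangle$; then $\diam(AC)=d(p,c)=2$ while $\diam(BC)=1$, so the ratio is exactly $\tfrac12$ even with $\beta\to0^+$. I expect the main subtlety to be purely definitional — stating the parameters in the exact form used by \cite{KR22} and handling the degenerate/empty-$B$ edge case — while the metric content is the one-line triangle-inequality bound in the mixed case above.
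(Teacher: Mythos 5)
Your proof is correct and follows essentially the same route as the paper's: fix a diametral pair of the extended stream, split into cases by where its endpoints lie, and in the mixed case route the triangle inequality through an arbitrary point of $B$; the only differences are that you phrase almost-smoothness in its implication form (with the $(1-\beta)$ hypothesis) rather than the paper's ratio inequality $\diam(B)\le d\cdot\diam(AB)\diam(BC)/\diam(ABC)$, which are interchangeable here. Your four-point tightness example showing the constant $2$ cannot be improved is a correct bonus not present in the paper.
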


\begin{proof}
Following \cite{KR22}, observe that the diameter is non-negative and monotonically increasing,
hence we need to prove that for $d=2$ and every disjoint segments $A,B,C$ of a stream, 
\begin{equation} \label{eq:AlmostSmooth}
  \diam(B) \leq d\cdot \frac{\diam(AB)\ \diam(BC)} {\diam(ABC)}
\end{equation}
whenever $\diam(AB),\diam(ABC)\neq 0$.

Pick $x,y\in ABC$ that have the largest distance, i.e., $d(x,y) = \diam(ABC)$.
If $x,y\in BC$,
then $\diam(ABC) = d(x,y) = \diam(BC)$,
while by monotonicity $\diam(B) \leq \diam(AB)$,
and~\eqref{eq:AlmostSmooth} follows (even with $d=1$). 
If $x,y\in AB$,
then $\diam(ABC) = d(x,y) = \diam(AB)$,
and again by monotonicity, \eqref{eq:AlmostSmooth} follows (even with $d=1$). 
In the remaining case, $\{x,y\}$ intersects both $A$ and $C$,
so without loss of generality we assume $x\in A$ and $y\in C$.
Pick arbitrary $b\in B$, then by the triangle inequality 
\begin{align*}
  \diam(ABC)
  = d(x,y)
  \leq d(x,b) + d(b,y)
  &\leq 2 \max\{ d(x,b) , d(b,y) \}\\
  &\leq 2 \max\{ \diam(AB) , \diam(BC) \}. 
\end{align*}
Now regardless of which term is the maximizer on the RHS,
again \eqref{eq:AlmostSmooth} follows by monotonicity of the diameter. 
\end{proof}

\end{document}